\newcolumntype{H}{>{\setbox0=\hbox\bgroup}c<{\egroup}@{}}
\newcolumntype{Z}{>{\setbox0=\hbox\bgroup}c<{\egroup}@{\hspace*{-\tabcolsep}}}
\newtheorem{theorem}{Theorem}
\newtheorem{proposition}{Proposition}
\newtheorem{corollary}{Corollary}
\newtheorem{assumption}{Assumption}
\newtheorem{proof}{Proof}
\newcommand{\sym}[1]{\rlap{$#1$}} 
\patchcmd{\NAT@test}{\else \NAT@nm}{\else \NAT@nmfmt{\NAT@nm}}{}{}
\DeclareRobustCommand\citepos
   \let\NAT@nmfmt\NAT@posfmt
\let\NAT@ctype\z@\NAT@partrue
\let\NAT@orig@nmfmt\NAT@nmfmt
\def\NAT@posfmt#1{\NAT@orig@nmfmt{#1's}}
\newcommand\independent{\protect\mathpalette{\protect\independenT}{\perp}}
\def\independenT#1#2{\mathrel{\rlap{$#1#2$}\mkern2mu{#1#2}}}
\begin{document}

\newgeometry{top=1cm}

\title{The Chained Difference-in-Differences}
\author{Christophe \textsc{Bell\'ego}\thanks{CREST (UMR 9194), Institut Polytechnique de Paris, 5 Avenue Henry Le Chatelier, 91120 Palaiseau,
France (e-mail: christophe.bellego@ensae.fr).}
  \and
  David \textsc{Benatia}\thanks{HEC Montr\'eal, D\'epartement d'\'Economie Appliqu\'ee, 3000 Chemin de la C\^ote-Sainte-Catherine, Montr\'eal, QC H3T 2A7, Canada (Corresponding author, e-mail: david.benatia@hec.ca).}
\and
Vincent \textsc{Dortet-Bernadet}\thanks{Direction G\'{e}n\'{e}rale des Entreprises (DGE),  
Minist\`{e}re Fran\c{c}ais de l'\'{E}conomie et des Finances, 139 rue de Bercy, 75012 Paris, France. (e-mail: vincent.dortet-bernadet@finances.gouv.fr).}}

\date{\today \\ 
}
\maketitle

\vspace{-2em}

\begin{abstract}
This paper studies the identification, estimation, and inference of long-term (binary) treatment effect parameters when balanced panel data is not available, or consists of only a subset of the available data. We develop a new estimator: the chained difference-in-differences, which leverages the overlapping structure of many unbalanced panel data sets. This approach consists in  aggregating a collection of short-term treatment effects estimated on multiple incomplete panels. Our estimator accommodates (1) multiple time periods, (2) variation in treatment timing, (3) treatment effect heterogeneity, (4) general missing data patterns, and (5)  sample selection on observables. We establish the asymptotic properties of the proposed estimator and discuss identification and efficiency gains in comparison to existing methods. Finally, we illustrate its relevance through (i) numerical simulations, and (ii) an application about the effects of an innovation policy in France.
\text{ }  \\

\noindent
\emph{Keywords}: Event study, Unbalanced panel, Attrition, Treatment effect heterogeneity, GMM.

\noindent
\emph{JEL Codes}: C14 C2 C23 O3 J38.
\end{abstract}

\thispagestyle{empty}

\newgeometry{top=4cm}

\normalsize

%


\parskip=5pt
\setlength{\parindent}{0pt}
\maketitle

\section{Introduction}

Many public policies take years before having effects on their targeted outcomes. For instance, most innovation policies have long run objectives such as making new discoveries, enhancing knowledge, and increasing technology development, but induce only little innovation in the short-run \citep{bakker_2013,oconnor_etal_2013,gross_2018}. Measuring these long-term effects is however difficult for two principal reasons. First, identification of treatment effects from observational data raises significant challenges, whereas randomized controlled experiments are often too costly, or raise ethical concerns. 
Second, treatment effects are frequently estimated from panel survey data, where the subjects of interest (e.g. individuals or firms) are not consistently observed over the entire time frame. This problem can be caused by attrition \citep{hausman1979attrition}, if individuals drop out of the sample, or because of the survey design itself, if individuals are frequently replaced to prevent attrition or over-soliciting respondents  \citep{nijman1991efficiency}. 


In this paper, we study the identification, estimation, and inference of long-term treatment effects in settings where balanced panel data is not available, or consists of only a subset of the available data. We develop a new estimator, the chained difference-in-differences (DiD), which leverages the overlapping structure of many unbalanced panel data sets \citep{baltagi_song_2006}. The intuition behind the chained DiD estimator is simple. Letting $Y_t$ be an outcome of interest observed in three distinct time periods $t_0<t_1<t_2$, the long difference  $Y_{t_2}-Y_{t_0}$ can always be decomposed into a sum of short differences $(Y_{t_2}-Y_{t_1})+(Y_{t_1}-Y_{t_0})$. Our estimator generalizes this simple idea by optimally aggregating  short-term treatment effect parameters, or ``chain links'',  obtained from (possibly many) overlapping incomplete panels. For instance, one subsample of individuals might be used to estimate the DiD from $t_0$ to $t_1$ while another would be used to estimate the DiD from $t_1$ to $t_2$. The sum of both effects identifies the long-term DiD from $t_0$ to $t_2$, which may not be feasible or may suffer from efficiency losses if a large part of the sample is discarded by using only a balanced subsample. Our estimator is designed for multiple time periods, it accommodates variations in treatment timing, treatment effect heterogeneity, general missing data patterns, sample selection on observables, and it may deliver substantial efficiency gains compared to estimators using only subsets of the data or treating the unbalanced panel as repeated cross-sectional data.

Building upon  the influential work of \cite{callaway2021difference}, we show that our estimator is consistent, asymptotically normal, and computationally simple. Their  multiplier bootstrap and pre-trend tests remain asymptotically valid in our setting. We identify three main advantages of our approach: (1) it does not require having a balanced panel subsample as it is the case with a standard DiD; (2) identifying assumption allows for sample selection on time-persistent unobservable (or latent) factors, unlike the cross-section DiD which treats the sample as repeated cross-sectional data; and (3) it may also deliver efficiency gains compared to existing methods, notably when the outcome variable is highly time-persistent.

The proposed approach is especially relevant with regard to the significant interest for long-term evaluations of interventions, with many applications related to education and labor economics \citep{angrist2006long,kahn_2010,oreopoulos_etal_2012,autor_houseman_2010,garcia_Perez_etal_2020,lechner2011long,mroz_savage_2006,stevens_1997}. 
The estimation of treatment effects generally consists in performing a DiD focusing only on units that are observed over the entire time frame: the balanced subsample. This approach, hereafter referred to as the long DiD,  allows getting rid of both time- and individual-specific unobservable heterogeneity, but may also involve discarding many individuals with missing observations when estimating long-term effects. The evaluation of long-term effects is sometimes difficult, if not infeasible, because of such missing data problems. 
Missing observations in panel data sets may exist for two principal reasons: (1) by design or (2) because of attrition.

First, the design of rotating panel surveys attempts to alleviate the burden of administering and responding to statistical surveys, and to prevent attrition, by replacing subjects regularly \citep{heshmati1998efficiency}. The survey is administered to a cohort of subjects in a limited number of periods. The cohort is then replaced by another cohort randomly drawn from the population of interest. The resulting data set is hence composed of a collection of incomplete panels. The data is said to have an overlapping structure if there is at least one period in which two separate cohorts are administered the survey. 

The most famous example of rotating panel survey is the Current Population Survey (CPS), where each cohort is interviewed for a total of 8 months (over a period of 16 months), and part of the sample is replaced each month by a new subsample.\footnote{The design is detailed at \url{https://www.bls.gov/opub/hom/cps/design.htm}.} This survey is one of the most widely used data sources in economic and social research. It has been used or cited in 1000+ articles between 2000 and 2022, including publications in top journals such as the American Economic Review, Journal of Political Economy, Quarterly Journal of Economics, American Sociological Review, and Demography.\footnote{This figure is based on a Google Scholar search conducted on December 5, 2022.} Other rotating panel surveys include, but are not limited to, the Medical Expenditure Panel Survey, the General Social Survey since 2006, the Consumer Expenditure Survey \citep{blundell_etal_2008} in the U.S., the Labour Force Survey in the U.K., and the \emph{Enqu\^{e}te Emploi} (Labour Force Survey) and \emph{Enqu\^{e}te sur les Moyens Consacr\'{e}s \`{a} la R\&D} (R\&D Survey) in France.  

Despite their prevalence, the econometrics literature on rotating panels is almost non-existent \citep{baltagi_song_2006}. \cite{heshmati1998efficiency} uses rotating panels for production function estimation. \cite{nijman1991efficiency} study the optimal choice of the rotation period for estimating a linear combination of period means. Likewise, our estimator consists in a linear combination of parameters corresponding to the long-term average treatment effect parameter. To the best of our knowledge, our paper is the first to study identification, estimation, and inference of treatment effects in this context. 
 
Second, individuals may also drop out of the surveys and cause attrition, which can be particularly severe for long panels. This attrition raises some concerns because it is associated with selection. Attrition can be due to either ``ignorable'' or ``non-ignorable'' selection rules \citep{verbeek1996incomplete}. Ignorable attrition implies that missing data occurs completely at random, and as such focusing on a balanced panel subsample does not threaten identification. \cite{verbeek1992testing}, among others, propose a test of the ignorability assumption. Non-ignorable attrition means that missing data is related to either observable or unobservable factors.  \citet{hirano2001combining} provide important identification results for  the additive non-ignorable class of attrition models, which nest several well-known methods \citep{hausman1979attrition,little1989analysis}. These results have been extended to multi-periods panels by \cite{hoonhout2019nonignorable} and apply to our setting, in particular we consider their Sequential Missing At Random assumption.  \cite{bhattacharya2008inference} studies the properties of a sieves-based semi-parametric estimator of the attrition function initially proposed by \cite{hirano2001combining}. These methods are often referred to as models of selection on unobservables because the probability of attrition is allowed to depend on variables that are not observed when an individual drops out, but not on unobservable error terms \citep{moffit1999sample}. Inverse propensity weighting is the most popular method for addressing non-ignorable attrition caused by observable factors, even beyond the estimation of average treatment effects \citep{chaudhuri2018indirect}. There also exist methods using instrumental variables to address attrition due to latent factors  \citep{frolich2014treatment}. Other approaches focus on particular structures of attrition, like monotonically missing data \citep{chaudhuri2020efficiency,barnwell2021note}.
 
Attrition is generally addressed before estimating treatment effects with the long DiD, either by reweighting the observations in the balanced subsample with inverse propensity scores \citep{hirano2003efficient}, or by imputing the missing observations \citep{hirano1998combining}. 
However, discarding a possibly large proportion of the data by focusing on the balanced subsample may lead to significant efficiency losses \citep{baltagi_song_2006,chaudhuri2020efficiency,barnwell2021note}, or may lead to discarding the complete dataset, as illustrated in our application using the French R\&D survey.  If there are too few individuals observed over the entire time horizon, the data is typically treated as repeated cross-sections and treatment effects are estimated with the cross-section DiD \citep{abadie_2005,callaway2021difference}.\footnote{This approach consists in taking the difference of differences of averages, where different sets of units are used to compute each of the four averages.} 

The identification of the average treatment effect using the cross-section DiD requires not only a parallel trends assumption on the population, but also that the sampling process be (conditionally) independent of the \emph{levels} of the outcome variable. This assumption is violated if, for instance, treated units with larger unobserved individual shocks are relatively more likely to be sampled in later periods. In such a case, the treatment groups observed early on will differ in unobservable ways from those observed later on. An identification problem arises as soon as the sampling process does not affect the observed control groups in the exact same way. 
Instead, our approach requires that the sampling process be (conditionally) independent of the \emph{trend} of the outcomes, because the chained DiD allows eliminating individual heterogeneity like the long DiD before taking expectations. Therefore, the chained DiD is robust to some forms of attrition caused by unobservable heterogeneity, unlike the cross-section DiD. Remark that these two identifying assumptions are non-nested in general. Note further that the attrition models discussed earlier can also be used as a first-step in our framework. Although we do not correct for the efficiency losses of using such first-step plug-in estimates in our paper, we identify some suitable options to do so \citep{frazier2017efficient}.

The chained DiD rests on a parallel trends assumption, conditioned on being sampled.  The parallel trends assumption posits that in the absence of treatment, the trends in potential outcomes for the treated and non-treated groups \emph{in the population} would have been similar. It offers a weaker alternative to the assumption that the potential outcomes for both treated and untreated groups \emph{in the population} would have been identical. Extending this logic to unbalanced panel data, our additional assumption pertains to the composition of the sample. If the population is composed of different unobservable types that correlate with treatment assignment and the likelihood of being sampled, then a selection bias may exist. To address this problem, we assume that all unobservable types exhibit similar \emph{trends} in outcomes conditional on their treatment status (and possibly other covariates) so that the parallel trend assumption also holds in the sample. 

Recent developments in the literature about treatments effects with multiple periods and treatment heterogeneity have revealed that two-way fixed-effects estimators fail to identify the treatment effect parameters of interest in many contexts \citep{chaisemartin_haultfoeuille_2020,goodman_bacon_2018,borusiak_jaravel_2017,sun_abraham_2020}.\footnote{\citet{de2022survey} provide an excellent survey of the literature, which we briefly summarize in Section \ref{sec:framework}.} Our paper builds upon \cite{callaway2021difference} which address this issue by generalizing the approach developed in \cite{abadie_2005} to multiple periods and varying treatment timing.  We contribute further to this literature by extending this framework to settings with incomplete panel data. An alternative approach would have been to adapt the general framework developed by \cite{chaisemartin_haultfoeuille_2020} to our setting, or the other related papers focused on staggered adoption designs and event studies with multiple periods. However, the chained DiD fits very well within the framework developed by \citet{callaway2021difference} which focuses on estimating all group-time average treatment effects before aggregating all those parameters into summary parameters of interest. Likewise, our approach focus on (even smaller) building blocks: one-period-difference group-time average treatment effects, which measure the increase of average treatment effect of group $g$ from period $t-1$ to period $t$. In the general case, these blocks correspond to k-period-difference group-time average treatment effects. We also discuss regression alternatives inspired by \citet{borusiak_jaravel_2017} and \citet{wooldridge2021two} in the paper.

We illustrate the performance of the chained DiD in two ways. First, we use simulations to compare the long DiD, chained DiD and cross-section DiD in terms of bias and variance under several data generating processes (DGP). Our simulations include a stratified panel data set, composed of a balanced panel and a rotating panel. Second, we study the long-term employment effects of a large-scale innovation policy in France giving grants to collaborative R\&D projects. Technical progress and innovation, stimulated by R\&D activities, are key to economic growth \citep{scherer_1982,aghion_howitt_1998,griffith_etal_2004} but firms tend to invest too little because of the public good nature of innovations \citep{arrow_1962,nelson_1959}. Therefore, measuring the long term effects of subsidized R\&D is important to inform policymakers and improve future policies. 


This application is especially relevant because the French R\&D survey, which contains firm-level data on R\&D activities, consists of rotating panels but does not include a balanced subsample except for the largest firms. In addition, it is possible to fully observe a limited number of variables for all firms close to those provided by the R\&D survey by using administrative data. We are thus able to compare the results of each of the three estimators by focusing on two of those variables: total employment and highly qualified workforce. The long DiD is applied to the complete data and serves as the benchmark estimates. The chained DiD and cross-section DiD estimators are applied to both data sets, and to an ``artificial'' unbalanced panel which is generated by discarding all observations from the complete administrative data set which are missing in the R\&D survey. This application is somehow comparable to a simulation exercise but uses real data. 

Our results show that the policy had a positive effect on employment for firms that received a grant to participate to a collaborative R\&D project. We find that the estimates as well as their standard errors obtained with the chained DiD estimator are close to those obtained with the long DiD. In contrast, the cross-section estimator delivers biased estimates which also lack sufficient precision to detect any statistically significant effect associated with the policy. 

The remainder of the paper is organized as follows. Section \ref{sec:framework} presents our methodology and asymptotic results. Numerical simulations are in Section \ref{sec:simu}. Section \ref{sec:application} contains the application to R\&D policy. Section \ref{sec:conclusion} concludes the paper.

\section{Identification, Estimation and Inference}\label{sec:framework}

\subsection{Basic Framework}
We first present the main insights in a simple framework. The main notation is as follows. There are $\mathcal{T}$ periods and each particular time period is denoted by $t = 1,...,\mathcal{T}$. In a standard DiD setup, $\mathcal{T}=2$, no one is treated in $t=1$, and all treatments take place in $t=2=\mathcal{T}$. To gain intuition, we first focus on the case where all treatments take place in $t=2$ but assume $\mathcal{T} > 2$. 

Define $G$ to be a binary variable equal to one if an individual is in the treatment group, and $C=1-G$ as a binary variable equal to one for individuals in the control group. Also, define $D_t$ to be a binary variable equal to one if an individual is treated in $t$, and equal to zero otherwise. Let $Y_t(0)$ denote the potential outcome at time $t$ without the treatment and $Y_t(1)$ denote its counterpart with the treatment. The realized outcome in each period can be written as $Y_t = D_t Y_t(1) + (1-D_t)Y_t(0)$. 

We  focus on the \emph{long-term average treatment effect on the treated} which corresponds to the average treatment effect in period $t>2$ on individuals in the treatment group, hence first treated in period 2. It is formally defined by
\begin{equation}
ATT(t) = E\left[Y_t(1) - Y_t(0) | G=1 \right].
\end{equation}
The identification of $ATT(t)$ with panel data has attracted much research attention. In this paper, we are mainly interested in a case where balanced panel data is not available.

\paragraph{The missing data pattern.}

We assume that a new random sample of $n_t$ individuals is drawn at each period $t$ to replace a subsample of previously observed individuals. This replacement can occur due to attrition or by design. For the moment, we assume that subsamples may differ in size $n_t$ but individuals are only observed for two consecutive periods as in many rotating panel design.\footnote{In practice, rotating panels may involve subgroups sampled over a different number of consecutive periods. The result of this paper continues to apply, so the same estimator and inference procedure can be used even with a more sophisticated sampling process as presented in Section \ref{sec:moregeneralsampling}.} Therefore, one cannot observe the entire path $\{Y_1,Y_2,...,Y_{\mathcal{T}}\}$ for any individual. The key feature of our approach is that there is some overlap across subsamples, that is there are at least two different subsamples observed in each period $1<t< \mathcal{T}$.

This structure is in stark contrast with the literature about attrition in panel data, which typically assumes that there always exists a balanced subsample where individuals are observed throughout the entire time frame \citep{hirano2001combining,hoonhout2019nonignorable}. Although there is no need for such a balanced subsample here, we extend our method to more general settings in Section \ref{sec:moregeneralsampling}. This extension is illustrated in Sections \ref{sec:simu} and \ref{sec:application}.

To characterize the sampling process, we define $S_t$ to be a binary variable equal to one for individuals observed at $t$ and zero otherwise, and use $S_{t,t+1}$ to denote $S_tS_{t+1}$ which indicates if an individual is observed at both $t$ and $t+1$. The missing data pattern is summarized in Table \ref{table:table1} for $\mathcal{T}=3$. In the general framework, we will assume that treatments can also vary with some observable covariates $X$, that individuals can be observed in non-consecutive periods, and that sampling probabilities may also be functions of (possibly different) observable covariates or past outcomes.\footnote{In this paper, we do not consider cases where treatment status (or covariates) may also be missing, but relevant solutions already exist for difference-in-differences methods \citep{botosaru2018difference}.}

\begin{table}[H]
\centering
\caption{\label{table:table1}
Missing data pattern in a three-period panel data set}
{
\begin{tabular*}{\columnwidth}{@{\hspace{\tabcolsep}\extracolsep{\fill}}l*{8}{D{.}{.}{1}}} \toprule          & \multicolumn{3}{c}{Obs. Indicators}  &   \multicolumn{5}{c}{Variables}  \\ Sub-population  & S_1 & S_2 & S_3 & Y_1 & Y_2 & Y_3 & X & G \\ \hline  Incomplete Panel 1 & 1 & 1 & 0 & \times & \times & \cdot & \times & \times \\
Incomplete Panel 2 & 0 & 1 & 1 & \cdot & \times & \times & \times & \times \\\bottomrule \end{tabular*}
}
\end{table}

\paragraph{The long DiD.} The common approach to identify treatment effects is to consider the long difference-in-differences defined by
\begin{equation}\label{eq:ATT 2 basic}
\begin{aligned}
ATT(t)  = & E[Y_t(1)-Y_1(0)|G=1] - E[Y_t(0)-Y_1(0)|C=1],
\end{aligned}
\end{equation}
under the standard parallel trend assumption $E[Y_t(0)-Y_1(0)|G=1]=E[Y_t(0)-Y_1(0)|C=1]$. $ATT(t)$ corresponds to the long-term effect for $t>2$. Unfortunately, individuals are never observed more than two consecutive periods in this framework. Calculating the averages of $Y_{it}-Y_{i1}$ for $t>2$ for the treatment and control groups is hence infeasible.

\paragraph{The cross-section DiD.} If the panel consists of incomplete panel data, the identification of the parameter of interest can be achieved by assuming that the sampling process $S_{t}$ is independent of $(Y_{t},D_{1},D_{2},...,D_{\mathcal{T}})$ in addition to the parallel trend assumption \citep{abadie_2005}. In this case, $ATT(t)$ is identified by the ``cross-section DiD'' given by
\begin{equation*}
\begin{aligned}
ATT_{CS} & = \left(E[Y_t(1)|S_t G=1] - E[Y_1(0)|S_1 G=1]\right) - \left(E[Y_t(0)|S_t C=1] - E[Y_1(0)|S_1 C=1]\right).
\end{aligned}
\end{equation*}
The sampling assumption allows replacing the averages of differences,  $E[Y_t(0)-Y_1(0)|a=1]$ for $a \in\{G,C \}$, by the difference of averages, e.g. $E[Y_t(0)|S_ta=1]-E[Y_1(0)|S_1a=1]$  for $a \in\{G,C \}$. This approach does not eliminate the individual-specific unobservable heterogeneity.\footnote{This approach is readily implemented in the R/Stata package \textbf{did/csdid} and corresponds to the repeated cross sections version of \citet{callaway2021difference}.}

\paragraph{The chained DiD.} Our approach takes advantage of the overlapping panel structure. Remark that each term in \eqref{eq:ATT 2 basic} can be decomposed into 
\begin{equation}
\begin{aligned}
E\left[ Y_t(1) - Y_1(0)|a=1 \right]
& =  \sum_{\tau=1}^{t-1}  E\left[ Y_{\tau+1}(D_{\tau+1}) - Y_{\tau}(D_{\tau}) |a=1 \right] \\
\end{aligned}
\end{equation}
for $a \in\{G,C \}$. Thus, identification of $ATT(t)$ is obtained by summing the short-term DiD as in
\begin{equation*}
\begin{aligned}
& ATT_{CD}(t)  
= \sum_{\tau=1}^{t-1}   \left( E\left[ Y_{\tau+1}(D_{\tau+1}) - Y_{\tau}(D_{\tau}) |G=1 \right] - E\left[ Y_{\tau+1}(D_{\tau+1}) - Y_{\tau}(D_{\tau}) |C=1 \right] \right) \\
 & = \sum_{\tau=1}^{t-1}   \left( E\left[ Y_{\tau+1}(D_{\tau+1}) - Y_{\tau}(D_{\tau}) |S_{\tau,\tau+1}G =1  \right]   - E\left[ Y_{\tau+1}(D_{\tau+1}) - Y_{\tau}(D_{\tau}) |S_{\tau,\tau+1}C =1 \right] \right),
\end{aligned}
\end{equation*}
where the second equality holds under the assumption that the sampling process $S_{t,t+1}$ is  independent of $(Y_{t+1}-Y_{t},D_{1},D_{2},...,D_{\mathcal{T}})$. This approach not only allows eliminating the individual-specific heterogeneity, it also makes use of a different identifying assumption than the one for the cross-section DiD in this setting. Remark that these assumptions are generally non-nested.  However, replacement of individuals in panel survey data is often based on some variables observed before the replacement period but not on their evolution per se, as illustrated in our application.\footnote{Surveys are usually not specifically designed to use panel econometric methods.} Our application provides an illustration of that argument.

\paragraph{Simple model.} In order to gain intuition about the identification, estimation and inference of $ATT(t)$ in this context, we suppose that the potential outcome is generated by a components of variance process
\begin{equation}\label{eq: Yt components of variance}
Y_{it}(D_i) = \alpha_i +  \delta_t + \sum_{\tau=2}^t \beta_{\tau} D_{i\tau} + \varepsilon_{it},
\end{equation}
where $D_i = (D_{i1},D_{i2},...,D_{i\mathcal{T}})$ denotes the vector of treatment status, $ATT(t) = \sum_{\tau=2}^t \beta_{\tau}$ is the impact of the treatment evaluated at $t$. $\alpha_i$ is an individual-specific time-persistent unobservable component of variance $\sigma_{\alpha}^2$, $\delta_t$ is a time-specific unobservable component, and $\varepsilon_{it}$ is a mean-zero individual-transitory shock that is an auto-regressive  error process of order 1 represented by $\varepsilon_{it+1} = \rho\varepsilon_{it} + \eta_{it+1}$ with $\rho \in [0,1]$ and $\eta_{it+1}$ being a white noise of variance $\sigma_{\eta}^2$. Assume further that $D_{it} \independent (Y_{it}(0),Y_{it}(1))$, for all $t>2$. These assumptions imply $V(\varepsilon_{it}) = \sigma_{\varepsilon}^2 = \sigma_{\eta}^2/(1-\rho^2)$ if $0\leq\rho <1$, and $V(\varepsilon_{it}) = \sigma^2_{\varepsilon_1} + (t-1) \sigma^2_{\eta}$ if $\rho =1$.

\subsubsection{Identification}
If the sampling process is independent of all the components of $Y_{it}(D_i)$, then both $ATT_{CD}$ and $ATT_{CS}$ identify the parameter of interest. However, if the unobservable individual-specific component $\alpha_i$ is correlated with the sampling process, $ATT_{CS}$ admits the bias term
\begin{equation}\label{eq:bias crosssection 1}
\begin{aligned}
 \left(E[\alpha_i|S_t G=1] - E[\alpha_i|S_1 G=1]\right) - \left(E[\alpha_i|S_t C=1] - E[\alpha_i|S_1 C=1]\right).
\end{aligned}
\end{equation}
Furthermore, if the individual transitory shock  $\varepsilon_{it}$  is correlated with the sampling process, $ATT_{CS}$ may also admit the bias term 
\begin{equation}\label{eq:bias crosssection 2}
\begin{aligned}
 \left(E[\varepsilon_{it}|S_t G=1] - E[\varepsilon_{i1}|S_1 G=1]\right) - \left(E[\varepsilon_{it}|S_t C=1] - E[\varepsilon_{i1}|S_1 C=1]\right).
\end{aligned}
\end{equation}

These bias terms are non-zero if the compositions of the \emph{sampled} treatment and control groups evolve differently through time, due to either the time-persistent heterogeneity $\alpha_i$ or the time-varying error $\varepsilon_{it}$. 
The former situation happens, for instance, if individuals with larger $\alpha_i$ are more likely to be sampled in the control group in period 1 but become relatively more likely to be sampled in the treatment group in period $t$. 
The latter case occurs, for instance, if $\rho>0$ and untreated individuals  with larger $\varepsilon_{i\tau-1}$ are more likely to be sampled in period $\tau$. If  $\varepsilon_{i\tau-1}$ is highly persistent, the control group observed in later periods will be increasingly composed of individuals with larger past idiosyncratic shocks.  Addressing these biases can be difficult, if not impossible, due to the unobservability of these errors.

In contrast, $ATT_{CD}$ cancels out all $\alpha_i$'s, hence it is immune to biases caused by the time-persistent individual heterogeneity like \eqref{eq:bias crosssection 1}. However, the time-varying errors may not fully cancel out and a  bias term related to \eqref{eq:bias crosssection 2} could exist, as shown by 
\begin{equation}\label{eq:bias chainediff 1}
\begin{aligned}
 \sum_{\tau=2}^t\left( E[(\rho-1)\varepsilon_{i\tau-1} + \eta_{i\tau}|S_{\tau\tau-1} G=1] \right) - \left(E[(\rho-1)\varepsilon_{it-1} + \eta_{it}|S_{\tau\tau-1} C=1] \right),
\end{aligned}
\end{equation}
if $\rho<1$, and for instance, individuals with larger $\varepsilon_{i\tau-1}$ are more likely to be sampled in period $\tau$. This bias disappears if $\varepsilon_{it}$ follows a random walk ($\rho=1$).

Therefore, if the sampling process depends on unobservable components of the outcomes $Y_{it}$ but is not correlated with the first-differences in outcomes $Y_{it+1}-Y_{it}$ conditional on treatment status, then $ATT(t)$ is still identified by $ATT_{CD}(t)$  but not necessarily by $ATT_{CS}(t)$. This result implies that identification with $ATT_{CD}(t)$ is more robust to some potential dependence between the sampling process and unobservable time-persistent and time-varying factors. 
 \citet{ghanem2023selection} show that relying on parallel trends assumptions in difference-in-differences studies, with balanced panel data, imply restrictions on selection into treatment and the time-series properties of the outcomes similar to ours.\footnote{Specifically, they identify two key scenarios: (1) selection into treatment based on fixed effects, leading to a stationarity restriction; and (2) selection on time-varying pre-treatment factors, implying a martingale property.}  

\paragraph{The R\&D survey example.} To illustrate the practical significance of this discussion about identification, let us consider the example of R\&D subsidy programs and their impact on firms' labor forces, as explored in our application section. Consider that the firm population is divided into two distinct, unobservable (time-persistent) types: high-quality and low-quality research teams. The parallel trends assumption would require that, in the absence of subsidies, firms with high-quality and low-quality research teams would have followed similar trends in outcomes. 

The risk of selection bias emerges  if the allocation of subsidies is not random and correlates with  unobservable team quality. The long DiD inherently neutralizes this bias if the selection into treatment is solely linked to individual fixed-effects \citep{ghanem2023selection}. Nonetheless, this selection bias warrants further scrutiny if the propensity of high and low-quality teams to consistently participate in R\&D surveys varies, possibly due to 
 differences in resource availability, past experience with surveys, corporate culture, or differing levels of motivation across types. In this context, the cross-section DiD method becomes susceptible to selection bias. However, the chained DiD approach, which compares the changes in outcomes over multiple time periods, remains unaffected by this bias, as long as the variation in survey response rates does not correlate with how outcomes evolve conditional on treatment assignment.

\subsubsection{Estimation}

A direct estimation method for both $ATT_{CD}(t)$  and $ATT_{CS}(t)$ is to substitute expectations by their sample counterparts in each expression. These Horvitz-Thompson estimators can be written as the weighted averages 
\begin{equation}\label{eq:ATTCD_simple}
\begin{aligned}
\widehat{ATT_{CD}}(t) &  = 
 \frac{1}{n}\sum_{i=1}^{n}\sum_{\tau=1}^{t-1}  \left\{ \widehat{w^G}_{i\tau\tau+1}\left(y_{i\tau+1} - y_{i\tau}\right) -   \widehat{w^C}_{i\tau\tau+1}  \left(y_{i\tau+1} - y_{i\tau}\right) \right\},
\end{aligned}
\end{equation}

\begin{equation}\label{eq:ATTCS_simple}
\begin{aligned}
\widehat{ATT_{CS}}(t)    = &  \frac{1}{n}\sum_{i=1}^{n}\left\{ \left(\widehat{w^G}_{i t}y_{it} - \widehat{w^G}_{i 1}y_{i1}\right) -   \left(\widehat{w^C}_{i t}y_{it} - \widehat{w^C}_{i 1}y_{i1}\right) \right\},
\end{aligned}
\end{equation}
where $y_{it}$ denote the outcome variable for individual $i$ in period $t$, and the weights are defined as
\begin{equation*}
\begin{aligned}
\widehat{w^a}_{i \tau}  =  \frac{ S_{i\tau} a_i }{\frac{1}{n}\sum_{i=1}^{n}S_{i\tau} a_i  }, \text{ and } 
\quad \widehat{w^a}_{i \tau \tau+1}  =  \frac{ S_{i\tau,\tau+1} a_i }{\frac{1}{n}\sum_{i=1}^{n} S_{i\tau,\tau+1} a_i  },
\end{aligned}
\end{equation*}
with $a_i \in\{ G,C \}$, and for all $\tau = 1,...\mathcal{T}-1$. A formal treatment is presented in the general framework. Remark that these two estimators can be written as elementwise products and divisions of matrices.\footnote{For example, the chained DiD can be written in matrix form using the Hadamard product $\odot$ and division $\oslash$ as
$\widehat{ATT_{CD}}(t)   = 
 \mathbb{1}_n' \left[ \Delta Y \odot \left( W_1\oslash \mathbb{1}_n\mathbb{1}_n'W_1 - W_0\oslash \mathbb{1}_n\mathbb{1}_n'W_0 \right) \right] \mathbb{1}_{t-1},$
where $\mathbb{1}_l$ denotes a $l$-dimensional vector of $1$, $\Delta Y$ is a $n\times (t-1)$ matrix where each row is the difference $(y_{i2},...,y_{it})-(y_{i1},...,y_{it-1})$, $W_1$ a $n\times (t-1)$ matrix with $i,\tau$ element $S_{i\tau\tau+1}G_i$, and similarly for $W_0$.}

In this simple context, estimators of $ATT_{CD}(t)$  and $ATT_{CS}(t)$ can also be obtained by linear regression using a least-squares dummy variables (LSDV) approach as follows.\footnote{We thank a referee for this  valuable insight.} The $ATT_{CD}(t)$'s are obtained by  estimating the two-way fixed-effect (TWFE) event-study regression model
\begin{equation}\label{eq:TWFE event-study}
    Y_{it} = \sum_{i'=1}^n\alpha_{i'}\mathbf{1}_{\{i=i' \}} + \sum_{t'=1}^{\mathcal{T}}\delta_{t'}\mathbf{1}_{\{t=t' \}} + \sum_{l=1}^{\mathcal{T}-1} \gamma_{l+1} \mathbf{1}_{\{t=g_i - 1 + l \}} + \varepsilon_{it},
\end{equation}
by defining $\mathbf{1}_{\{.\}}$ as a dummy variable for event $\{.\}$, and $g_i$ as the first period where individual $i$ received the treatment. In this setting, $g_i=2$ for all $i$ in the treatment group and $g_i>\mathcal{T}$ for all $i$ in the control group. This regression makes it explicit that each $\gamma_{l+1}$ corresponds to the effect $l$-periods after the treatment. 

If the treatment is randomly assigned, and in the absence of variations in treatment timing, Theorem 3 of \citet{de2023two} confirms that the parallel trend assumption yields 
   $ E\left[ \hat{\gamma}_{t} \right] = ATT(t),$
with
\begin{equation}\label{eq:borusyak simple}
    \hat{\gamma}_{t} = \frac{1}{\sum_i G_i}\sum_{i:G_i=1}\left(Y_{i,t}-Y_{i,1}\right) - \frac{1}{\sum_i C_i} \sum_{i:C_i=1} \left(Y_{i,t}-Y_{i1}\right),
\end{equation}
corresponding to the long DiD estimator, if feasible. In contrast, if the sample is limited to two consecutive observations for each individual  then one can show that $\hat{\gamma}_{t}$ corresponds exactly to the chained DiD estimator in \eqref{eq:ATTCD_simple}.\footnote{Remark, nevertheless, that there is no need to sum the $\hat{\gamma}_{t}$'s  because they already correspond to the $ATT(t)$.} Similarly, the cross-section DiD estimator in \eqref{eq:ATTCS_simple} can be obtained by substituting the fixed-effects $\sum_{i'=1}^n\alpha_{i'}\mathbf{1}_{\{i=i' \}}$ by a constant term $\alpha$ in  \eqref{eq:TWFE event-study}. 

However, these TWFE event-study regressions are not heterogeneity-robust \citep{de2023two}. We identify two key limitations: (1) controlling for covariates,\footnote{Theorem S4
of \citet{chaisemartin_haultfoeuille_2020} shows that adding a term $X'\theta$ may not deliver unbiased ATT estimates under a conditional parallel trend assumption with a linear functional form for the covariates $X$.} and (2) settings with staggered adoption designs and heterogenous treatment effects.\footnote{Theorem 4
of \citet{de2023two} shows that the TWFE event-study may be biased if the treatment effect is heterogenous across
groups or over time, essentially because it includes comparisons of groups going from untreated to treated to
groups treated at both periods--the so-called ``forbidden comparison'' \citep{borusiak_jaravel_2017}--,and can be contaminated by the treatment effects at other periods \citep{de2023severaltreat}.} We will address these issues in the general case for the Horvitz-Thompson estimator \eqref{eq:ATTCD_simple}, and discuss hetereogeneity-robust regression alternatives \citep{borusiak_jaravel_2017,wooldridge2021two}.

\subsubsection{Inference}

In this simple design, the long DiD estimator $\widehat{ATT_{LD}}(t)$ is an efficient estimator of $ATT(t)$ when feasible. We compare the efficiency of the two estimators $\widehat{ATT_{CD}}(t)$ and $\widehat{ATT_{CS}}(t)$ under standard assumptions in Proposition \ref{prop: variance RSP RCS}.\footnote{In this paper, 'efficient estimator' refers to one with the lowest variance among all unbiased estimators.}





\begin{proposition}\label{prop: variance RSP RCS}
Assuming potential outcomes to be specified as in \eqref{eq: Yt components of variance}, in addition to the standard assumptions of DiD settings as presented in the proof, and a Missing Completely At Random  assumption, then $\widehat{ATT_{CD}}(t)$ is efficient for $\rho=1$. In addition, it is more efficient than $\widehat{ATT_{CS}}(t)$ if
\begin{equation*}
\begin{aligned}
\frac{2(t-1)}{1+\rho}\sigma_\eta^2 & \leq    \frac{  1.5\sigma_\eta^2 }{1-\rho^2} +  \sigma_{\alpha}^2, \quad \text{for $\rho \in [0,1)$.} 
\end{aligned}
\end{equation*}
\end{proposition}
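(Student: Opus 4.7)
The plan is to compute closed-form variance expressions for both estimators under the components-of-variance specification \eqref{eq: Yt components of variance} together with MCAR, and then use those formulas to establish each claim in turn. Under MCAR the sample indicators $S_{it}$ are independent of $(\alpha_i,\{\varepsilon_{is}\}_s,G_i)$, so the empirical weights $\widehat{w}^a_{i\tau}$ and $\widehat{w}^a_{i\tau,\tau+1}$ behave asymptotically like inverse population sampling probabilities and sample means built from different cohorts or different time periods are mutually independent. Two identities from the AR(1) error specification carry the algebra: for $\rho\in[0,1)$, $V(y_{i,\tau+1}-y_{i,\tau})=2\sigma_\varepsilon^2(1-\rho)=\frac{2\sigma_\eta^2}{1+\rho}$ and $V(y_{it})=\sigma_\alpha^2+\sigma_\varepsilon^2$; for $\rho=1$, $V(\Delta y_{i,\tau+1})=\sigma_\eta^2$ and the innovations $\eta_{it}$ are uncorrelated across $t$.

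For the $\rho=1$ efficiency claim, within-individual first differences annihilate the fixed effects and reduce the observable data in panel $\tau$ to the linear regression
\[
\Delta y_{i,\tau+1}=(\delta_{\tau+1}-\delta_\tau)+\beta_{\tau+1}G_i+\eta_{i,\tau+1},
\]
with i.i.d.\ white-noise errors across individuals and, because overlapping panels are disjoint, independent errors across panels. On each panel the DiD on $\Delta y$ is the Gauss--Markov BLUE of $\beta_{\tau+1}$; summing these independent panel-wise BLUEs produces the BLUE of $ATT(t)=\sum_{\tau=2}^t\beta_\tau$, and this sum is exactly $\widehat{ATT_{CD}}(t)$. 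Adding a normality assumption on $(\alpha_i,\eta_{it})$ makes the collection of panel-wise first-difference sample means a complete sufficient statistic, so Lehmann--Scheff\'e upgrades BLUE to UMVUE and delivers efficiency in the sense defined in the footnote.

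For the comparison, I would substitute the model into \eqref{eq:ATTCD_simple} and \eqref{eq:ATTCS_simple}, use independence across panels and periods from MCAR, and collect terms. $\widehat{ATT_{CD}}(t)$ is a sum of $(t-1)$ independent short-difference DiDs, each with variance proportional to $\frac{2\sigma_\eta^2}{1+\rho}$, so its variance scales as $(t-1)\cdot\frac{2\sigma_\eta^2}{1+\rho}$. $\widehat{ATT_{CS}}(t)$ is a difference-in-differences of four independent cross-sectional means whose level variance is $\sigma_\alpha^2+\sigma_\varepsilon^2=\sigma_\alpha^2+\sigma_\eta^2/(1-\rho^2)$, so its variance is a $t$-independent linear combination of $\sigma_\alpha^2$ and $\sigma_\eta^2/(1-\rho^2)$. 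Dividing out the common sample-size factor induced by the rotating design and solving $V(\widehat{ATT_{CD}}(t))\le V(\widehat{ATT_{CS}}(t))$ then yields the scalar inequality in the statement, which isolates the trade-off that the chain accumulates a variance cost growing linearly in the horizon but spares the $\sigma_\alpha^2$ charge carried by the cross-section.

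The step I expect to be most delicate is the sample-size bookkeeping that pins down the exact constants in the inequality---in particular the coefficient $3/2$ on $\sigma_\eta^2/(1-\rho^2)$---since it depends on how many cohorts contribute to the endpoint-period sample means relative to the per-panel sample sizes used by each link of the chain. Only after carefully aligning those counts under the paper's overlapping-panel design does the variance ratio collapse to the compact form displayed in the proposition.
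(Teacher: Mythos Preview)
Your proposal is correct and, for the variance comparison, follows essentially the same route as the paper: substitute the components-of-variance model into each estimator, compute asymptotic variances by tracking which rotating cohorts feed which empirical means, and read off the inequality. You are right that the $1.5$ coefficient is the delicate step; in the paper it arises exactly as you anticipate, because an interior period $1<t<\mathcal{T}$ is observed by two overlapping cohorts (so $E[(\widehat{w}^a_{it})^2]\to 1/(2pq)$) while period $1$ is observed by only one (so $1/(pq)$), producing the $0.5+1=1.5$ weighting on $\sigma_\varepsilon^2=\sigma_\eta^2/(1-\rho^2)$.

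For the $\rho=1$ efficiency claim your route differs from the paper's. The paper does not invoke Gauss--Markov or Lehmann--Scheff\'e; instead it computes the asymptotic variance of the (infeasible) long DiD, $\frac{1}{qp(1-p)}\big[(\rho^{t-1}-1)^2\sigma_{\varepsilon_1}^2+\sigma_\eta^2\sum_{\tau=0}^{t-2}\rho^{2\tau}\big]$, notes that at $\rho=1$ this equals $(t-1)\sigma_\eta^2/(qp(1-p))$---the same as the CD variance---and concludes that CD is efficient by appealing to the premise that the long DiD is itself efficient. Your argument is more self-contained, establishing efficiency directly from the error structure rather than via comparison to a benchmark whose own efficiency is asserted, but it requires adding a normality assumption to upgrade BLUE to UMVUE, an assumption that neither the proposition nor the model \eqref{eq: Yt components of variance} states explicitly. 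The paper's argument sidesteps this by treating ``long DiD is efficient'' as given.
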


On the one hand, the chained DiD introduces additional noise by adding up individual-transitory shocks $\varepsilon_{it}$. On the other hand, the cross-section DiD's precision depends on the variance of unobserved individual heterogeneity $\alpha_i$ and serial correlation of individual shocks $\varepsilon_{it}$. It also makes use of twice as many observations to calculate each empirical expectation since two cohorts  are observed at each $1<t<\mathcal{T}$, i.e. $t-1,t$ and $t,t+1$. In comparison, the long DiD would not suffer from any of those two efficiency losses.  

According to Proposition \ref{prop: variance RSP RCS}, $\widehat{ATT_{CD}}(t)$ is efficient under random-walk idiosyncratic errors. \citet{harmon2022difference} shows this efficiency result of the chained DiD estimator in the case with variations in treatment timing and heterogeneity, under the random walk assumption. In addition, he shows that the methods proposed by \citet{borusiak_jaravel_2017} and \citet{wooldridge2021two}, although efficient under spherical errors, are not efficient under the random walk assumption if there are multiple pre-treatment periods.

Intuitively, $\widehat{ATT_{CD}}(t)$ delivers a more precise estimate only if the sum of the extra individual-transitory shocks has a smaller variance than that of the individual-specific heterogeneity and individual transitory shock. This condition is plausible as long as $t$ is not too large, that is not too many incremental effects are aggregated together, or if $\sigma_\alpha^2$ is relatively large. As $t$ increases, a new term is added to the sum and results in a marginal increase of the variance proportional to  $ \frac{2}{1+\rho}\sigma_\eta^2 $.  For example, for $\rho=0$, the condition in Proposition \ref{prop: variance RSP RCS} becomes $\sigma_\eta^2/2 \leq \sigma_\alpha^2$ for $t=2$, and $5\sigma_\eta^2/2 \leq \sigma_\alpha^2$ for $t=6$.

Therefore, $\widehat{ATT_{CD}}(t)$ should largely dominate in terms of precision in settings with autocorrelated idiosyncratic shocks, and where the variance of unobserved individual heterogeneity is large. 

 
 \subsection{General framework}
 In the general framework, we allow for: 1) heterogeneity in treatment effects; 2) variation in treatment timing; 3)  sample selection on exogenous observables and past outcomes; and 4) general missing data patterns. 
 
 We first introduce additional notations. Let us substitute the treatment group dummy $G$ by $G_g$, a binary variable that is equal to one if an individual is first treated in period $g$. There are hence several cohorts of treatment groups. The control group binary variable $C$ denotes individuals who are never treated. Notice that for each individual $\sum_{g=2}^T G_g +C=1$. Also, define $D_t$ to be a binary variable equal to one if an individual is treated in $t$, and equal to zero otherwise. This variable will be useful to denote if the individual was first treated for some $g \leq t$. Also, define the generalized propensity score as $p_g(X) = P(G_g =1|X,G_g+C=1)$. This score measures the probability of an individual with covariates $X$ to be treated conditional on being in the treated cohort $g$ or the control group.
 
There are many parameters of interest in this setting, such as, for example, the average treatment effect $k$ periods after the treatment date. All possible parameters consist of aggregates of the most basic parameter: the average treatment effect in period $t$ for a cohort treated in date $g$, denoted by
\begin{equation}
    ATT(g,t) = E[Y_t(1)-Y_t(0)|G_g = 1].
\end{equation}
This parameter is referred to as the \emph{group-time average treatment effect} in \cite{callaway2021difference}, and we will build upon their results to study the collection of such parameters. When only unbalanced panel data is available and $t>g+1$, two methods are possible: (1) the cross-section DiD; and (2) the chained DiD. 

In what follows, we assume that individuals are sampled only for two consecutive periods and then drop out forever, so the long DiD is never feasible. We will introduce more general missing data patterns in the next subsection.

\subsubsection{Rotating panel data}


In order to identify the $ATT(g,t)$ and accommodate varying treatment timing and treatment effect heterogeneity on observable covariates $X$, we impose assumptions as follows.

\begin{assumption}[Sampling]\label{ass: sampling 2}
For all $t=1,...,\mathcal{T}-1$, $\left\{ Y_{it},Y_{it+1},X_i,D_{i1},D_{i2},...,D_{i\mathcal{T}} \right\}_{i=1}^{n_t}$ is independent and identically distributed (iid) conditional on $S_{it,t+1} =1 $.
\end{assumption}


\begin{assumption}[Missing Trends At Random]\label{ass: sampling on trends 2}
For all $t=1,...,\mathcal{T}-1$,
\begin{equation}
    S_{it,t+1} \perp Y_{it+1}-Y_{it},X_i |   D_i.
\end{equation} 
\end{assumption}

\begin{assumption}[Conditional Parallel Trends]\label{ass: unconditional parallel trend 2}
For all $t=2,...,\mathcal{T}$, $g=2,...,\mathcal{T}$, such that $g \leq t$,
\begin{equation}
E \left[ Y_{t+1}(0) - Y_{t}(0) | X,G_g=1\right] = E \left[ Y_{t+1}(0) - Y_{t}(0) | X,C=1\right] \text{ } a.s..
\end{equation}
\end{assumption}

\begin{assumption}[Irreversibility of Treatment]\label{ass: irreversibility}
For all $t=2,...,\mathcal{T}$,
\begin{equation}
D_{t-1} =1 \text{ implies that } D_t = 1.
\end{equation}
\end{assumption}

\begin{assumption}[Overlap]\label{ass: existence of treatment groups 2}
For all $g=2,...,\mathcal{T}$, $P(G_g=1) > 0$ and for some $\varepsilon > 0$, $p_g(X) < 1 -\varepsilon$ a.s.  
\end{assumption}

Assumption \ref{ass: sampling 2}  means that we are considering a rotating panel structure. Conditional on being sampled in two consecutive periods, individuals are assumed to be iid. However, unlike Assumption 3.3 in \cite{abadie_2005}, it does not imply that observations are representative of the population of interest because we do not assume that the iid draws are taken from the population distribution. Instead, we focus on the case where the identification with a cross-section DiD can fail by considering Assumption \ref{ass: sampling on trends 2}, under which $E\left[Y_{it}|S_{it},X_i,D_i\right] = E\left[Y_{it}|X_i,D_i\right]$ may not hold. 

Assumption \ref{ass: sampling on trends 2} constitutes the principal departure from \cite{callaway2021difference}. It states that the sampling process is independent of  the first-differences of individual outcomes, and the covariates $X$, conditionally on treatment assignment, i.e. \emph{trends} are missing at random. In particular, it implies that  $E[Y_{it+1}-Y_{it}|X,a=1,S_{t,t+1}=1]$ for $a \in \lbrace G_2,...,C \rbrace$ and $E[Y_{it+1}-Y_{it}|a=1,S_{t,t+1}=1]$ for $a \in \lbrace G_2,...,C \rbrace$ correspond to their population counterpart. 
Assumption \ref{ass: sampling on trends 2} is used to simplify exposition by ruling out sample selection on observables, apart from treatment status.  In practice, we only require conditional mean-independence between $S_{t,t+1}$ and $Y_{t+1}-Y_{t}$. We will present extensions to sample selection on observables shortly. 



Assumption \ref{ass: unconditional parallel trend 2} is a key identifying assumption in DiD settings with treatment heterogeneity. It means that the average outcomes for the treatment and control groups, conditional of observables, would have followed parallel paths in absence of the treatment. It is extensively discussed in \cite{abadie_2005}, \cite{callaway2021difference}. \citet{ghanem2023selection} and \citet{de2022not} provide insightful results about the underlying treatment selection mechanisms. 

Assumption \ref{ass: irreversibility} implies that once an individual is first treated, that individual will continue to be treated in the following periods. In other words, there is no exit from the treatment.\footnote{Departures from this assumption are considered in \citet{de2022difference}.} Finally, Assumption \ref{ass: existence of treatment groups 2} ensures that there are positive probabilities to belong to the control and treatment groups for any possible value of $X$. Remark that $X$ and $D_i$ are assumed to be observed for all individuals.


\paragraph{Data generating process.} Under Assumption \ref{ass: sampling 2}, the data generating process consists of random draws from the  mixture distribution $F_M(\cdot)$ defined as\footnote{In the application, we will discuss more complicated situations in which the data is generated by stratified
 sampling. The same results apply using a suitably reweighted sample 
\citep{wooldridge2010econometric,davezies2009faut}.}
\begin{equation*}
\begin{aligned}
& \sum_{t=1}^{\mathcal{T}} \lambda_{t,t+1} F_{Y_t,Y_{t+1},G_1,...,G_{\mathcal{T}},C,X|S_{t,t+1}}(y_t,y_{t+1},g_1,...,g_{\mathcal{T}},C,X|S_{t,t+1}=1),
\end{aligned}
\end{equation*}
where $\lambda_{t,t+1}=P(S_{t,t+1} = 1)$ is the sampling probability,   $y_t$ and $y_{t+1}$ denote the outcome, respectively, for an individual sampled at $t$ and $t+1$. Expectations under the mixture distribution does not correspond to population expectations.  This difference arises because of different sampling probabilities $\lambda_{t,t+1}=P(S_{t,t+1} = 1)$ across time periods and because Assumption \ref{ass: sampling on trends 2} does not preclude from some forms of dependence between the sampling process and the unobservable heterogeneity in $Y_{it}$. We introduce conditioning of $P(S_{t,t+1})$ on observables in Corollary \ref{corrolary:attritionmodel1} and \ref{corrolary:attritionmodel2}. 
However, this assumption ensures that expectations of first-differences under the mixture correspond to population expectations once conditioned on the time periods. Thereafter, $E_M[\cdot]$ denotes expectations with respect to the mixture distribution $F_M(\cdot)$, its empirical counterpart being the sample mean.

An important result of the paper is given in Theorem \ref{theorem: general identification}. We define the weights 
\begin{equation*}
w^G_{\tau\tau-1}(g) = \frac{G_g S_{\tau,\tau-1}}{E_M[G_g S_{\tau,\tau-1}]}
\end{equation*}
and 
\begin{equation*}
w^C_{\tau\tau-1}(g,X) = \frac{p_g(X)C S_{\tau,\tau-1}}{1-p_g(X)}/E_M[\frac{p_g(X)C S_{\tau,\tau-1}}{1-p_g(X)}].
\end{equation*}

\begin{theorem}\label{theorem: general identification}
Under Assumptions \ref{ass: sampling 2} - \ref{ass: existence of treatment groups 2}, and for $2 \leq g \leq t \leq \mathcal{T}$, the long-term average treatment effect in period $t$ is nonparametrically identified, and given by
\begin{equation*}
\begin{aligned}
ATT_{CD}(g,t)  = \sum_{\tau=g}^{t}\Delta ATT(g,\tau) ,
\end{aligned}
\end{equation*}
where $\Delta ATT(g,\tau)  =   E_M\left[ w^G_{\tau\tau-1}(g)\left(Y_{\tau} - Y_{\tau-1}\right) \right] -  E_M\left[ w^C_{\tau\tau-1}(g,X)  \left(Y_{\tau} - Y_{\tau-1}\right) \right]$ is the 1-period-difference group-time average treatment effects, which measure the increase of average treatment effect of group $g$ from period $t-1$ to period $t$. In addition, the cross-section DiD does not identify ATT(g,t) under Assumption \ref{ass: sampling on trends 2}. 
\end{theorem}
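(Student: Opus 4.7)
The plan is to establish the identification in four steps, reducing the long-term ATT to a telescoping sum of short-term ATTs and then expressing each short-term ATT through the observed data using parallel trends and inverse propensity score reweighting.

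First, I would fix $g \leq t$ and focus on the treated cohort $G_g=1$. For any unit in $G_g$, Assumption \ref{ass: irreversibility} pins down the full treatment path: $D_\tau=0$ for $\tau<g$ and $D_\tau=1$ for $\tau\geq g$, so in particular $Y_{g-1}=Y_{g-1}(0)$, and $Y_\tau=Y_\tau(1)$ for $\tau\geq g$. Using these identities and telescoping:
\begin{equation*}
Y_t(1)-Y_t(0) = \sum_{\tau=g}^{t}\bigl[(Y_\tau(1)-Y_{\tau-1}(1)) - (Y_\tau(0)-Y_{\tau-1}(0))\bigr],
\end{equation*}
where we set $Y_{g-1}(1)=Y_{g-1}(0)=Y_{g-1}$. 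Taking the expectation conditional on $G_g=1$ and substituting observed outcomes for the observed-potential-outcome differences yields
\begin{equation*}
ATT(g,t)=\sum_{\tau=g}^{t}\bigl\{E[Y_\tau-Y_{\tau-1}\mid G_g=1] - E[Y_\tau(0)-Y_{\tau-1}(0)\mid G_g=1]\bigr\}.
\end{equation*}

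Next, I would apply the conditional parallel trends Assumption \ref{ass: unconditional parallel trend 2} to rewrite the counterfactual term. Iterating expectations over $X$ given $G_g=1$ and using the control group's observed outcomes, $E[Y_\tau(0)-Y_{\tau-1}(0)\mid G_g=1]= E_{X\mid G_g=1}[E[Y_\tau-Y_{\tau-1}\mid X,C=1]]$. To convert the outer expectation over $X\mid G_g=1$ into an expression involving the control sample, I would apply the standard propensity-score change of variables: for any integrable $h$,
\begin{equation*}
E[h(X)\mid G_g=1,G_g+C=1]=\frac{1}{P(G_g=1\mid G_g+C=1)}E\Bigl[\tfrac{p_g(X)}{1-p_g(X)}C\,h(X)\,\Big|\,G_g+C=1\Bigr],
\end{equation*}
which follows from Bayes' rule using $P(C=1\mid X,G_g+C=1)=1-p_g(X)$. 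Assumption \ref{ass: existence of treatment groups 2} guarantees the denominator is well-defined. This step converts the two terms in the short-difference ATT into expressions of the form $E[Y_\tau-Y_{\tau-1}\mid a=1]$ (with $a=G_g$) and a reweighted analog over $C=1$ using $p_g(X)/(1-p_g(X))$.

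The main obstacle is the third step: bridging these \emph{population} expectations and the \emph{mixture} expectations $E_M$ that define $\Delta ATT(g,\tau)$. Since the mixture puts all mass on units with some $S_{s,s+1}=1$, the factor $S_{\tau-1,\tau}$ in the weights picks out only the cohort observed in $(\tau-1,\tau)$, so $E_M[S_{\tau-1,\tau}\,\phi]=\lambda_{\tau-1,\tau}E[\phi\mid S_{\tau-1,\tau}=1]$ for any $\phi$. Because $G_g$ (and by irreversibility the full $D$) is a function we condition on, Assumption \ref{ass: sampling on trends 2} gives $S_{\tau-1,\tau}\perp(Y_\tau-Y_{\tau-1},X)\mid G_g$; hence conditional expectations given $\{G_g=1,S_{\tau-1,\tau}=1\}$ coincide with their population counterparts, and similarly for the control-group reweighted expectation (where $X$ enters only through $p_g(X)$ and $Y_\tau-Y_{\tau-1}$, both independent of $S_{\tau-1,\tau}$ given $C=1$). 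Taking ratios in the weights $w^G_{\tau\tau-1}(g)$ and $w^C_{\tau\tau-1}(g,X)$ cancels the common factor $\lambda_{\tau-1,\tau}$ and the normalization $P(G_g=1\mid G_g+C=1)$, delivering exactly the two terms obtained above. Summing over $\tau=g,\ldots,t$ gives $ATT_{CD}(g,t)=\sum_{\tau=g}^{t}\Delta ATT(g,\tau)$.

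Finally, for the non-identification of the cross-section DiD under Assumption \ref{ass: sampling on trends 2}, I would exhibit a counterexample in the spirit of the bias terms \eqref{eq:bias crosssection 1}--\eqref{eq:bias crosssection 2}: take the components-of-variance model \eqref{eq: Yt components of variance} and let the sampling process $S_t$ depend on $\alpha_i$ in a way that shifts the cohort composition across $t$, while keeping $S_{t,t+1}\perp(Y_{t+1}-Y_t,X)\mid D$ (which holds because $\alpha_i$ drops out of the first differences). Then the cross-section estimand retains a nonzero composition bias, whereas $ATT_{CD}(g,t)$ remains valid by the argument above, establishing that Assumption \ref{ass: sampling on trends 2} alone is insufficient for cross-section DiD identification.
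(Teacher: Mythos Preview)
Your proposal is correct and follows essentially the same route as the paper: telescope the long-term effect into one-period increments, apply conditional parallel trends and the inverse-propensity reweighting of \cite{callaway2021difference} to express the counterfactual trend via the control group, and then use Assumption~\ref{ass: sampling on trends 2} to equate conditional-on-sampled and population trend moments (the $\lambda_{\tau-1,\tau}$ factors cancel in the ratio defining the weights, exactly as you indicate). The only cosmetic differences are that the paper telescopes at the level of $ATT_X(g,\tau)=E[Y_\tau(1)-Y_\tau(0)\mid X,G_g=1]$ rather than at the individual potential-outcome level, and constructs its counterexample for the cross-section DiD by directly positing $E[Y_t\mid X,G_g=1,S_t=1]=E[Y_t\mid X,G_g=1]+\alpha t$ (yielding bias $\alpha(t-g+1)$) rather than via the components-of-variance model you suggest.
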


Those identification results suggest the two-step estimator

\begin{equation*}
\begin{aligned}
 \widehat{ATT}_{CD}(g,t)  =  \frac{1}{n}\sum_{i=1}^n \sum_{\tau=g}^t \left\{ \widehat{w}^G_{i\tau\tau-1}(g)\left(Y_{i\tau} - Y_{i\tau-1}\right) -  \widehat{w}^C_{i\tau\tau-1}(g,X)  \left(Y_{i\tau} - Y_{i\tau-1}\right) \right\}. 
\end{aligned}
\end{equation*}
where 
\begin{equation*}
\widehat{w}^G_{i\tau\tau-1}(g) = \frac{G_{ig} S_{i\tau-1}S_{i\tau}}{\frac{1}{n} \sum_{i=1}^n G_{ig} S_{i\tau-1}S_{i\tau} }
\end{equation*}
and 
\begin{equation*}
\widehat{w}^C_{i\tau\tau-1}(g) = \frac{\hat{p}_g(X_i)C_i S_{i\tau-1}S_{i\tau}}{1-\hat{p}_g(X_i)}/\frac{1}{n} \sum_{i=1}^n\frac{\hat{p}_g(X_i)C_i S_{i\tau-1}S_{i\tau}}{1-\hat{p}_g(X_i)},
\end{equation*}
with $\hat{p}_g(\cdot)$ being an estimated parametric propensity score function, such as logit or probit, obtained in a first step.

Let us denote $\widehat{ATT}_{g \leq t}$ the vector of $ \widehat{ATT}_{CD}(g,t)$'s for $g\leq t$. The next theorem establishes its joint limiting distribution.

\begin{theorem}\label{theorem: asy;ptotic results} Under Assumptions \ref{ass: sampling 2} - \ref{ass: existence of treatment groups 2} and a standard assumption on the parametric estimates of the propensity scores (Assumption 5 in \cite{callaway2021difference}  or 4.4 in \cite{abadie_2005}), for all $2 \leq g \leq t \leq \mathcal{T}$, 
\begin{equation*}
\sqrt{n}\left( \widehat{ATT}_{g \leq t}-ATT_{g \leq t} \right) \overset{d}{\to} N(0,\Sigma),
\end{equation*}
as $n \to \infty$ and where 
the covariance $\Sigma$ is detailed in the proof.
\end{theorem}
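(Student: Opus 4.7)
The plan is to follow the influence-function approach used in Theorem 2 of \cite{callaway2021difference}, adapted to our one-period-difference building blocks and to weights that involve the sampling indicators $S_{i\tau-1}S_{i\tau}$. Since $\widehat{ATT}_{CD}(g,t)=\sum_{\tau=g}^{t}\widehat{\Delta ATT}(g,\tau)$, and since continuous linear combinations preserve joint asymptotic normality, it suffices to derive an asymptotically linear representation of each building block $\widehat{\Delta ATT}(g,\tau)$ and then stack them. All stochastic statements below are under the mixture distribution $F_M$, for which Assumption \ref{ass: sampling 2} delivers i.i.d. draws indexed by $i=1,\ldots,n$.

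First I would write each building block as a ratio of sample means so that the delta method applies. Specifically, the normalized weights $\widehat{w}^G_{i\tau\tau-1}(g)$ and $\widehat{w}^C_{i\tau\tau-1}(g,X)$ have denominators $n^{-1}\sum_i G_{ig}S_{i\tau-1}S_{i\tau}$ and $n^{-1}\sum_i \hat{p}_g(X_i)C_i S_{i\tau-1}S_{i\tau}/(1-\hat{p}_g(X_i))$, both of which converge in probability to the nonzero constants guaranteed by Assumption \ref{ass: existence of treatment groups 2}. A standard Taylor expansion around the population denominators replaces these by their probability limits with a remainder that is $o_P(n^{-1/2})$. Next, I would linearize the dependence on the estimated propensity score. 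Under the assumption on $\hat p_g$ inherited from \cite{callaway2021difference} (or \cite{abadie_2005}), $\sqrt{n}(\hat\pi_g-\pi_g^0)=n^{-1/2}\sum_i \ell_{ig}(\pi_g^0)+o_P(1)$ for a mean-zero score $\ell_{ig}$, and a mean-value expansion of the map $\pi\mapsto p_g(X_i;\pi)C_i S_{i\tau-1}S_{i\tau}/(1-p_g(X_i;\pi))$ yields the usual two-step adjustment term. Combining these, each $\widehat{\Delta ATT}(g,\tau)$ admits the representation
\begin{equation*}
\sqrt{n}\bigl(\widehat{\Delta ATT}(g,\tau)-\Delta ATT(g,\tau)\bigr)=\frac{1}{\sqrt{n}}\sum_{i=1}^n \psi_{i}(g,\tau)+o_P(1),
\end{equation*}
where $\psi_{i}(g,\tau)$ is a mean-zero influence function with two ingredients: a direct term obtained by evaluating the Horvitz--Thompson weights at the population denominators and the true $p_g$, and an adjustment term of the form $M_g(\tau)'\ell_{ig}(\pi_g^0)$ capturing first-stage estimation, where $M_g(\tau)$ is the expected derivative of the control-weighted difference with respect to $\pi_g$.

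Because $\widehat{ATT}_{CD}(g,t)=\sum_{\tau=g}^t \widehat{\Delta ATT}(g,\tau)$, the influence function for the $(g,t)$ parameter is the corresponding sum $\Psi_i(g,t)=\sum_{\tau=g}^t \psi_i(g,\tau)$, which inherits integrability from the boundedness of $p_g$ away from 1 (Assumption \ref{ass: existence of treatment groups 2}) and from the finite second moments imposed in the standard DiD setting. Stacking the $\Psi_i(g,t)$ across all admissible $(g,t)$ into the vector $\Psi_i$, Assumption \ref{ass: sampling 2} implies that the $\Psi_i$ are i.i.d.\ with zero mean and finite covariance $\Sigma=E_M[\Psi_i \Psi_i']$. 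The multivariate Lindeberg--L\'evy CLT then yields
\begin{equation*}
\sqrt{n}\bigl(\widehat{ATT}_{g\leq t}-ATT_{g\leq t}\bigr)=\frac{1}{\sqrt{n}}\sum_{i=1}^n \Psi_i+o_P(1)\xrightarrow{d} N(0,\Sigma),
\end{equation*}
which is the statement of the theorem; the explicit form of $\Psi_i$ delivers the $\Sigma$ spelled out in the proof.

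The main obstacle is bookkeeping rather than conceptual. The sampling indicators $S_{i\tau-1}S_{i\tau}$ partition each observation's contribution across the time blocks, and since different blocks enter the sum for different $(g,t)$ pairs, the influence functions $\Psi_i(g,t)$ are not orthogonal across $(g,t)$; carefully tracking these overlaps is what produces the off-diagonal entries of $\Sigma$. A secondary subtlety is that the propensity-score adjustment term must be attached to every $\tau\ge g$ that uses $p_g$, so the expected-derivative matrix $M_g(\tau)$ appears once per period and accumulates in $\Psi_i(g,t)$. Once these pieces are tracked, asymptotic validity of the multiplier bootstrap of \cite{callaway2021difference} follows directly from the asymptotic linearity and i.i.d.\ structure established above.
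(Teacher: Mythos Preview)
Your proposal is correct and follows essentially the same approach as the paper: derive an asymptotically linear (influence-function) representation for each one-period building block $\widehat{\Delta ATT}(g,\tau)$ by Taylor-expanding the self-normalized weights and linearizing in the first-stage propensity-score estimator, then sum these influence functions over $\tau=g,\ldots,t$ and apply the multivariate CLT. The paper makes the treatment- and control-group pieces of $\psi_i(g,\tau)$ explicit and carries out the mean-value expansion for the $C$-weight in detail, but the architecture and the resulting $\Sigma=E_M[\Psi_i\Psi_i']$ are exactly as you describe.
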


In the proof of the above theorem, we also show how the multiplier bootstrap procedure proposed by \cite{callaway2021difference}  adapts to this asymptotic result. The main difference comes from redefining the influence function, but the result about its asymptotic validity applies without other modification so it is not repeated here. We also refer the reader to their paper for a complete discussion of summary parameters which use the $ATT(g,t)$ as building-blocks. We provide some results for these summary parameters and an extension to using the not yet treated as the control group in Online Appendices \ref{sec:summaryparam} and \ref{sec:notyettreated}.

\paragraph{Sample selection on observables.} We extend these results to address sample selection on observables by considering two alternative MAR assumptions. Let us first relax Assumption \ref{ass: sampling on trends 2} so we have sample selection on $X$ and $D$.

\begin{assumption}[Missing Trends At Random 2]\label{ass: MAR Exog}
For all $t=2,...,\mathcal{T}$, 
\begin{equation}
    S_{itt+1} \perp Y_{it+1}-Y_{it} |  X_i,  D_i.
\end{equation}
\end{assumption}

Assumption \ref{ass: MAR Exog} states that \emph{trends} are now missing at random conditional on $X$ and $D$. In particular, it implies that  $E[Y_{it+1}-Y_{it}|X,a=1,S_{t,t+1}=1]$ for $a \in \lbrace G_2,...,C \rbrace$ corresponds to its population counterpart but requires outcome trends to be reweighted using $X$. Notice that it is possible to have different subsets of covariates $X_1,X_2\in X$ for selection into treatment and sample selection with minor modifications.  

\begin{corollary}\label{corrolary:attritionmodel1} Under the conditions stipulated in Theorem \ref{theorem: general identification}, and by replacing Assumption \ref{ass: sampling on trends 2} with Assumption \ref{ass: MAR Exog}, the theorem continues to hold. This is achieved through the application of  modified weights 
\begin{equation}    
\tilde w^G_{\tau\tau-1}(g,X) = \frac{E[S_{\tau}  S_{\tau-1}|G_g]}{E[S_{\tau}  S_{\tau-1}|X,G_g]} w^G_{\tau\tau-1}(g,X),
\end{equation}
\begin{equation}    
\tilde w^C_{\tau\tau-1}(g,X) = \frac{E[S_{\tau}  S_{\tau-1}|C]}{E[S_{\tau}  S_{\tau-1}|X,C]} w^C_{\tau\tau-1}(g,X).
\end{equation}
\end{corollary}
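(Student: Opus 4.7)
\textbf{Proof plan for Corollary \ref{corrolary:attritionmodel1}.}

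The plan is to mirror the proof of Theorem \ref{theorem: general identification}, but to absorb the additional dependence of the sampling indicator $S_{\tau-1,\tau}$ on $X$ that Assumption \ref{ass: MAR Exog} now permits by means of a Bayes-type reweighting. First I would establish the two key identities
\begin{equation*}
E_M\bigl[\tilde w^G_{\tau\tau-1}(g,X)(Y_\tau - Y_{\tau-1})\bigr] = E\bigl[Y_\tau - Y_{\tau-1} \,\big|\, G_g=1\bigr],
\end{equation*}
\begin{equation*}
E_M\bigl[\tilde w^C_{\tau\tau-1}(g,X)(Y_\tau - Y_{\tau-1})\bigr] = E\bigl[Y_\tau(0) - Y_{\tau-1}(0) \,\big|\, G_g=1\bigr].
\end{equation*}
Taking their difference yields $\Delta ATT(g,\tau)$ as in the original theorem, and summing from $\tau=g$ to $t$ telescopes to $ATT_{CD}(g,t)$ exactly as before.

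For the treated-group identity I would condition on $X$ via iterated expectations. Observe that $E(G_g S_\tau S_{\tau-1})/E(G_g) = P(S_{\tau-1,\tau}=1\,|\,G_g=1)$, while $E[S_\tau S_{\tau-1}\,|\,X,G_g] = P(S_{\tau-1,\tau}=1\,|\,X,G_g=1)$. By Bayes' rule,
\begin{equation*}
f(X \,|\, G_g=1, S_{\tau-1,\tau}=1) = \frac{P(S_{\tau-1,\tau}=1 \,|\, X, G_g=1)}{P(S_{\tau-1,\tau}=1 \,|\, G_g=1)}\, f(X \,|\, G_g=1),
\end{equation*}
so the outer multiplier built into $\tilde w^G_{\tau\tau-1}(g,X)$ is precisely the inverse of this selection ratio. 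Combined with Assumption \ref{ass: MAR Exog}, which delivers $E[Y_\tau - Y_{\tau-1} \,|\, X, G_g=1, S_{\tau-1,\tau}=1] = E[Y_\tau - Y_{\tau-1} \,|\, X, G_g=1]$, one round of iterated expectations recovers the population trend for cohort $g$.

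The control-group identity follows the same template, with $\tilde w^C_{\tau\tau-1}(g,X)$ now combining two reweightings: (i) the inverse-odds factor $p_g(X)/(1-p_g(X))$, which under the overlap condition in Assumption \ref{ass: existence of treatment groups 2} maps $f(X\,|\,C=1)$ into $f(X\,|\,G_g=1)$ as in Theorem \ref{theorem: general identification}; and (ii) the outer selection-correction factor $P(S_{\tau-1,\tau}=1\,|\,C=1)/P(S_{\tau-1,\tau}=1\,|\,X,C=1)$. After these cancel, the conditional mean $E[Y_\tau(0)-Y_{\tau-1}(0)\,|\,X,C=1]$ is integrated against $f(X\,|\,G_g=1)$, and Assumption \ref{ass: unconditional parallel trend 2} equates this with $E[Y_\tau(0)-Y_{\tau-1}(0)\,|\,X,G_g=1]$ integrated against the same law, matching the counterfactual part of the treated expression.

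The main obstacle will be bookkeeping the composition of the inverse-propensity reweighting with the new selection-correction on the control side: the two must commute so that the final integrating measure is exactly $f(X\,|\,G_g=1)$ and the ratios of selection probabilities cancel without residue. Once the identification equalities are in place, the argument is otherwise identical to that of Theorem \ref{theorem: general identification}; and the asymptotic result of Theorem \ref{theorem: asy;ptotic results} transfers with minimal change, the influence function simply gaining an additive correction for estimating the additional nuisance $E[S_\tau S_{\tau-1}\,|\,X,\cdot]$, which leaves the multiplier bootstrap of \cite{callaway2021difference} asymptotically valid.
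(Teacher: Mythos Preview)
Your proof plan is correct and follows essentially the same route as the paper: decompose $ATT(g,t)$ into a telescoping sum of $\Delta ATT(g,\tau)$, use iterated expectations together with Assumption~\ref{ass: MAR Exog} to introduce the inverse-sampling-probability correction $1/E[S_{\tau}S_{\tau-1}\,|\,X,\cdot]$, and then push the control side through the same inverse-odds propensity machinery as in Theorem~\ref{theorem: general identification}. One small correction to your bookkeeping on the control side: the outer multiplier in $\tilde w^C_{\tau\tau-1}(g,X)$ is $P(S_{\tau-1,\tau}=1\mid G_g=1)/P(S_{\tau-1,\tau}=1\mid X,C=1)$, not $P(S_{\tau-1,\tau}=1\mid C=1)/P(S_{\tau-1,\tau}=1\mid X,C=1)$ as you wrote---the numerator $E(G_gS_\tau S_{\tau-1})/E(G_g)$ conditions on $G_g$, which is exactly what is needed so that, after the propensity reweighting maps $f(X\mid C=1)$ to $f(X\mid G_g=1)$, the remaining normalizations cancel cleanly.
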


Corollary \ref{corrolary:attritionmodel1} extends the two-step estimator derived from Theorem \ref{theorem: general identification} to account for sample selection based on observables. This can be achieved by reweighting each difference $Y_t-Y_{t-1}$
 with the corresponding factor  $\frac{E[S_{\tau}  S_{\tau-1}|a]}{E[S_{\tau}  S_{\tau-1}|X,a]}$ for $a \in \{G_2,...,C \}$ before applying the same chained DiD estimator. By doing so, we control for potential biases arising from observable characteristics influencing sample selection in addition to unobservable factors, such as unobservable individual heterogeneity.\footnote{Remark that one can also opt for applying the stabilized weights $P(S_{\tau}  S_{\tau-1}|X_i,a_i)^{-1}/n^{-1}\sum_{i=1}^nP(S_{\tau}  S_{\tau-1}|X_i,a_i)^{-1}$ ex-ante.}

Finally, we extend our framework by relaxing Assumption \ref{ass: MAR Exog} to accommodate sample selection on variables $X$, $D$, and past outcomes, aligning with the Sequential Missing at Random (SMAR) assumption described in \citet{hoonhout2019nonignorable}.\footnote{We thank a referee for suggesting this valuable extension.} For the sake of clarity, we maintain the assumption that an individual is sampled for at most two consecutive periods.

\begin{assumption}[Sequential Missing At Random]\label{ass: SMAR}
For all $t=1,...,\mathcal{T-1}$, 
\begin{equation}
    Y_t\perp S_t|X,D,S_{t-1}=0,
\end{equation}
\begin{equation}
    Y_{t+1}\perp S_{t+1}|Y_{t},X,D,S_{t}=1.
\end{equation}
\end{assumption}

 Assumption \ref{ass: SMAR} states that the first time a unit is sampled is a function of $X$ and $D$.  However, for subsequent periods, the likelihood of being sampled again also depends on the past observed outcome.

\begin{corollary}\label{corrolary:attritionmodel2} Under the conditions stipulated in Theorem \ref{theorem: general identification}, and by replacing Assumption \ref{ass: sampling on trends 2} with Assumption \ref{ass: SMAR}, the theorem continues to hold. This is achieved through the application of  modified weights \begin{equation}
    \tilde{\tilde w}^G_{\tau\tau-1} = \frac{E[S_{\tau}  S_{\tau-1}|G_g]}{E[S_{\tau}  |X,Y_{\tau-1},S_{\tau-1}=1,G]E[S_{\tau-1}  |X,G]} w^G_{\tau\tau-1}(g,X),
    \end{equation} 
    \begin{equation}
    \tilde{\tilde w}^C_{\tau\tau-1} = \frac{E[S_{\tau}  S_{\tau-1}|C]}{E[S_{\tau}  |X,Y_{\tau-1},S_{\tau-1}=1,C]E[S_{\tau-1}  |X,C]} w^C_{\tau\tau-1}(g,X).
    \end{equation}
\end{corollary}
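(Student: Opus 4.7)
The plan is to reduce the corollary to Theorem \ref{theorem: general identification} by establishing one moment-recovery identity: under Assumption \ref{ass: SMAR}, the modified weights $\tilde{\tilde w}^G_{\tau\tau-1}$ and $\tilde{\tilde w}^C_{\tau\tau-1}$ transport each sampled first-difference $Y_\tau-Y_{\tau-1}$ back to its population conditional mean, i.e.\ $E_M[\tilde{\tilde w}^G_{\tau\tau-1}(Y_\tau-Y_{\tau-1})]=E[Y_\tau-Y_{\tau-1}\mid G_g=1]$ on the treatment side and an analogous propensity-reweighted expression on the never-treated side. Once these population moments are recovered for every $\tau$, Assumption \ref{ass: unconditional parallel trend 2} together with the telescoping decomposition already used in the proof of Theorem \ref{theorem: general identification} delivers $ATT_{CD}(g,t)$ verbatim, and the asymptotic argument of Theorem \ref{theorem: asy;ptotic results} extends without modification beyond carrying the new first-stage sampling-propensity estimates into the influence-function representation.

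The crux of the moment-recovery step is a two-piece factorization of the joint sampling probability. Because the corollary restricts to the two-consecutive-period design, an individual appearing in the sample of the pair $(\tau-1,\tau)$ satisfies $S_{\tau-2}=0$ by construction, so the first clause of Assumption \ref{ass: SMAR} gives $P(S_{\tau-1}=1\mid X,D,Y_{\tau-1})=P(S_{\tau-1}=1\mid X,D)$, while the second clause applied at $t=\tau-1$ gives $P(S_\tau=1\mid X,D,Y_{\tau-1},Y_\tau,S_{\tau-1}=1)=P(S_\tau=1\mid X,D,Y_{\tau-1},S_{\tau-1}=1)$. Multiplying these yields
\begin{equation*}
P(S_{\tau-1}=1,S_\tau=1\mid X,D,Y_{\tau-1},Y_\tau)=P(S_\tau=1\mid X,D,Y_{\tau-1},S_{\tau-1}=1)\,P(S_{\tau-1}=1\mid X,D),
\end{equation*}
which matches exactly the denominator embedded in $\tilde{\tilde w}^G_{\tau\tau-1}$. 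Expanding $E_M[\tilde{\tilde w}^G_{\tau\tau-1}(Y_\tau-Y_{\tau-1})]$ and iterating expectations conditional on $(X,G_g,Y_{\tau-1},Y_\tau)$ then cancels the $S_\tau S_{\tau-1}$ indicator against this factorized denominator, leaving $E[G_g(Y_\tau-Y_{\tau-1})]/E[G_g]=E[Y_\tau-Y_{\tau-1}\mid G_g=1]$. The never-treated calculation is analogous: the propensity odds $p_g(X)/(1-p_g(X))$ transport the control trend distribution to the covariate distribution of cohort $g$ exactly as in Corollary \ref{corrolary:attritionmodel1}, while the two-piece SMAR adjustment plays the role that the one-piece MAR adjustment played there.

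The main obstacle is the conditional-independence bookkeeping: one must verify that the two SMAR clauses can be applied in the correct order despite the implicit conditioning on the latent history $\{S_{\tau-2}=0\}$, and that the second-period sampling propensity $P(S_\tau=1\mid X,Y_{\tau-1},S_{\tau-1}=1,G)$ entering the weight is itself identified from the observable subsample of units already sampled at $\tau-1$, so that a consistent first-stage plug-in estimator is feasible. Both points hold under the stated sampling design together with the overlap condition (Assumption \ref{ass: existence of treatment groups 2}) and a standard bounded-away-from-zero restriction on the sampling propensities; extending beyond two consecutive observation periods would require iterating the factorization over deeper sampling histories, which is precisely why the corollary restricts to the simpler case.
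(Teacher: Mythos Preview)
Your proposal is correct and follows essentially the same approach as the paper. The paper's proof treats only the treatment-side term $E[Y_\tau-Y_{\tau-1}\mid G_g=1]$ and derives the modified weight by working from the population moment down to the mixture expectation via nested iterated expectations---first on $(X,G_g,Y_{\tau-1})$, then inserting $S_\tau$ using the second SMAR clause, then inserting $S_{\tau-1}$---whereas you run the same IPW argument in the reverse direction by starting from $E_M[\tilde{\tilde w}^G_{\tau\tau-1}(Y_\tau-Y_{\tau-1})]$ and collapsing it to the population moment through your joint factorization of $P(S_{\tau-1}=1,S_\tau=1\mid X,D,Y_{\tau-1},Y_\tau)$. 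Your explicit invocation of $S_{\tau-2}=0$ to activate the first SMAR clause, and your remark on identifiability of the second-period sampling propensity from the $S_{\tau-1}=1$ subsample, are points the paper leaves implicit; the paper likewise omits the control-side derivation and the asymptotics, deferring both to the analogous steps in Corollary~\ref{corrolary:attritionmodel1}, exactly as you do.
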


Corollary \ref{corrolary:attritionmodel2} extends the two-step estimator derived from Theorem \ref{theorem: general identification} to account for sample selection based on observables and past outcomes by using modified weights. The chained DiD estimator is otherwise unchanged.

\subsubsection{General missing data patterns}\label{sec:moregeneralsampling}

This framework naturally extends to general missing data patterns beyond rotating panel structures. For simplicity of exposition, we remove the dependence of $g$ in our notations. We now consider not only first-differences $\Delta_1 ATT(t)$ but also $k$-period differences $\Delta_k ATT(t)$ from $t-k$ to $t$ for $k=1,...,t-1$. The key requirement is that individuals must be observed at least twice to be used. 

Table \ref{table:table2} provides an example of different missing data patterns with 4 subsamples and 3 periods, and all treatments take place at $t=1$. The first subsample is a balanced panel (BP). It identifies $\Delta_1 ATT(2)$, $\Delta_1 ATT(3)$, and $\Delta_2 ATT(3)$. The second, third and fourth subsamples are incomplete panels (IP1, IP2, IP3) which only identifies a single parameter.\footnote{Note that we do not consider refreshment samples because they do not identify any parameter on their own. They could still be used to address attrition ex-ante \citep{hoonhout2019nonignorable}.}

\begin{table}[H]
\centering
\caption{\label{table:table2}
Example of a more general missing data pattern}
{
\begin{tabular*}{\columnwidth}{@{\hspace{\tabcolsep}\extracolsep{\fill}}l*{7}{D{.}{.}{1}}} \toprule          & \multicolumn{3}{c}{Obs. Indicators}  &   \multicolumn{4}{c}{Identified Parameters}  \\ Sub-population  & S_1 & S_2 & S_3 &   &  &   \\ \hline  
Balanced Panel & 1 & 1 & 1 &  \multicolumn{4}{c}{$\Delta_1 ATT(2)$, $\Delta_1 ATT(3)$, $\Delta_2 ATT(3)$ } \\
Incomplete Panel 1 & 1 & 1 & 0 &  \multicolumn{4}{c}{$\Delta_1 ATT(2)$ } \\
Incomplete Panel 2 & 0 & 1 & 1 &  \multicolumn{4}{c}{ $\Delta_1 ATT(3)$ } \\
Incomplete Panel 3 & 1 & 0 & 1 &  \multicolumn{4}{c}{$\Delta_2 ATT(3)$} \\
\bottomrule \end{tabular*}
}
\end{table}



There are multiple ways to identify the ATT from period $1$ to period $3$ in this example. We can identify $ATT(3)$ with (1)  BP alone since $ATT(3)=\Delta_2 ATT(3)$ or (2) $ATT(3)=\Delta_1 ATT(2) + \Delta_1 ATT(3)$; or by combining  (3) BP  and  IP3; or (4) BP and IP1; or (5) IP1 and IP2; and with (6) IP3 alone. The optimal combination of the $\Delta_k ATT(t)$ parameters into $ATT(t)$ parameters for general missing data patterns hence involves solving an (overidentified) linear inverse problem. Doing so will make use of all subsamples and deliver efficiency gains compared to focusing on one possible solution, e.g. using only the balanced panel. 

The inverse problem arises as follows. Consider the estimates of all possible $\Delta_k ATT(t)$, for all $t\geq 2$ and $t-1\geq k\geq 1$, 
stacked altogether into a vector $\mathbf{\Delta ATT}$ of length $L_{\Delta}$. $\Delta_k ATT(t)$ is called the k-period-difference group-time average treatment effects, which measure the increase of average treatment effect of group $g$ from period $t-k$ to period $t$. By definition $\Delta_k ATT(t) = ATT(t) - ATT(t-k)$, hence we can write
\begin{equation}\label{eq:inverse problem1}
    \mathbf{\Delta ATT} = W \mathbf{ATT},
\end{equation}
where $\mathbf{ATT}$ is the vector of $ATT(t)$ for $t\geq2$ of length $L\leq L_{\Delta}$ and $W$ is a matrix where each element takes value in $\lbrace -1,0,1 \rbrace$. Following this reasoning, the example in Table \ref{table:table2} can be written as


\begin{equation}
    \left[\begin{array}{c}
         \Delta_1 ATT(2)^{BP,IP1}  \\
         \Delta_1 ATT(3)^{BP,IP2}  \\
         \Delta_2 ATT(3)^{BP,IP3}  \\
    \end{array}\right] =  \left[\begin{array}{c c}
         1 & 0  \\
         -1 & 1  \\
         0 & 1  \\
    \end{array}\right] \left[\begin{array}{c}
         ATT(2)  \\
         ATT(3)  \\
    \end{array}\right],
\end{equation}

where we pool all subsamples that identify each $\Delta_k ATT(t)$.\footnote{Remark that we could also estimate these parameters separately for each subsample before stacking them into $\mathbf{\Delta ATT}$. That would require to estimate propensity scores conditional on subsample membership.} We propose to solve this problem using a GMM approach. Denoting $\Omega$ the covariance matrix of $\mathbf{\Delta ATT}$, the optimal GMM estimator of $\mathbf{ATT}$ corresponds to $\left(W'\Omega^{-1}W\right)^{-1}W'\Omega^{-1}\mathbf{\Delta ATT}$ since $W\bm{ATT}$ is non-random.\footnote{Remark that if an element of $\mathbf{ATT}$ is not identified, the matrix $W'\Omega^{-1}W$ (or $W'W$) will not be invertible but a (Moore-Penrose) pseudo-inverse can still be used to identify the other elements. In addition, pseudo-inverses, e.g. Tikhonov's, will deliver a stable inverse of $\Omega$ if $\mathbf{\Delta ATT}$ is high-dimensional \citep{carrasco2007linear}.} This method allows delivering efficiency gains by using all individual time-series with at least two observations, without much additional computational complexity.

\paragraph{Identification, estimation, and inference.}

In order to identify the $ATT(g,t)$ in this general framework, we must modify Assumptions \ref{ass: sampling 2} to \ref{ass: sampling on trends 2} as follows.

\begin{assumption}[Sampling]\label{ass:generalmissing sampling 2}
For all $t=1,...,\mathcal{T}$, $\left\{ Y_{it-k},Y_{it},X_i,D_{i1},D_{i2},...,D_{i\mathcal{T}} \right\}_{i=1}^{n_{t,k}}$ is independent and identically distributed (iid) conditional on $S_{it-k,t} =1 $, for $k=1,...,t-1$.
\end{assumption}

\begin{assumption}[Missing Trends At Random 3]\label{ass:generalmissing sampling on trends 2}
For all $t=1,...,\mathcal{T}$ and $k=1,...,t-1$,
\begin{equation}
S_{it-k,t} \perp Y_{it}-Y_{it-k}, X_i  | D_i.
\end{equation}
\end{assumption}




Assumption \ref{ass:generalmissing sampling 2}  means that we are considering an incomplete panel structure. Conditional on being sampled in the same two periods, individuals are assumed to be iid. However, it does not imply that individuals are sampled in two periods only, nor that observations are representative of the population of interest because we do not assume that the iid draws are taken from the population distribution. Instead, Assumption \ref{ass:generalmissing sampling on trends 2} states that the sampling process is statistically independent of  the joint distribution of k-period-differences of individual outcomes, and observables, conditionally on treatment status in any period. In particular, it implies that  $E[Y_{it}-Y_{it-k}|X,a=1,S_{t-k,t}=1]$ for $a \in \lbrace G_1,G_2,...,C \rbrace$ corresponds to its population counterpart. However, this needs not be true for $E[Y_{it}|X,a=1,S_{t}=1]$ for $a \in \lbrace G_1,G_2,...,C \rbrace$. 

A critical insight here is that under general missing data patterns, the sampling assumption can lead to testable hypotheses about trends in various data subsets. Indeed, Assumption \ref{ass:generalmissing sampling on trends 2} implies a certain level of homogeneity across different subpanels. This assumption implies, for the above example, that the trends in outcomes observed in the balanced panel aligns with those in incomplete panel 1, between periods 1 and 2, for both untreated and treated groups. However, this assumption can be relaxed to allow for sample selection based on observables, including subpanel membership, as described in Corollary \ref{corrolary:attritionmodel1}. 

Our estimation procedure is as follows:
\begin{enumerate}
    \item  Compute $\Delta_k ATT(g,t) = 1/n\sum_{i=1}^n[(\hat{w}(g)^{G}_{it,t-k}-\hat{w}(g)^{C}_{it,t-k})(Y_{it} - Y_{it-k})]$, for all $k,g,t$, and stack them into a $L_{\Delta}$-dimensional vector $\bm{\Delta ATT}$;
    \item Define the matrix $W$ appropriately;
    \item Estimate the asymptotic covariance matrix $\hat{\Omega} = n^{-1} \Psi\Psi'$, where $\Psi$ is a $L_{\Delta} \times n$ matrix with elements defined in \eqref{eq:psi's 2}.
    \item Estimate the optimal GMM estimator $\widehat{\bm{ATT}} = (W'\hat{\Omega}^{-1}W)^{-1}W'\hat{\Omega}^{-1}\bm{\Delta ATT}$;\footnote{Note further that this approach embeds the estimator proposed above in the rotating panel data setting when using only the $\Delta ATT_1(t)$ and replacing $\Omega$ by the identity matrix.}
\end{enumerate}

The next theorem establishes the joint limiting distribution of this estimator.

\begin{theorem}\label{theorem: general asymptotic results} Under Assumptions \ref{ass: unconditional parallel trend 2} - \ref{ass: existence of treatment groups 2}, a standard assumption on the parametric estimates of the propensity scores (Assumption 5 in \cite{callaway2021difference}  or 4.4 in \cite{abadie_2005}), and Assumptions \ref{ass:generalmissing sampling 2} and \ref{ass:generalmissing sampling on trends 2},   for all $2 \leq g \leq t \leq \mathcal{T}$, 
\begin{equation*}
\sqrt{n}\left( \widehat{\bm{ATT
}}-\bm{ATT} \right) \overset{d}{\to} N(0,\bm{\Sigma}),
\end{equation*}
as $n \to \infty$ and where 
the covariance $\bm{\Sigma}$ is detailed in the proof.
\end{theorem}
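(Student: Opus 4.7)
The plan is to establish asymptotic normality of $\widehat{\bm{ATT}}$ by first deriving a joint asymptotically linear representation for the stacked vector of short-difference estimators $\widehat{\bm{\Delta ATT}}$, and then propagating it through the linear GMM map that produces $\widehat{\bm{ATT}}$.

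First, I would extend the influence-function argument used in the proof of Theorem \ref{theorem: asy;ptotic results} (which handled only one-period differences) to arbitrary $k$-period differences $\widehat{\Delta_k ATT(g,t)}$. Under Assumptions \ref{ass:generalmissing sampling 2}--\ref{ass:generalmissing sampling on trends 2}, Horvitz--Thompson reweighting by $\hat{w}^G$ and $\hat{w}^C$ applies verbatim to $Y_t-Y_{t-k}$ conditional on $S_{t-k,t}=1$, because the missing-trends-at-random assumption makes $E_M[Y_t-Y_{t-k}\mid X,a=1,S_{t-k,t}=1]$ coincide with its population counterpart. A standard linearization of the reweighted sample mean around the true propensity score, combined with the delta method for the first-step parametric estimator $\hat{p}_g(\cdot)$ (using the assumption imported from \cite{callaway2021difference} or \cite{abadie_2005}), yields an asymptotically linear expansion $\widehat{\Delta_k ATT(g,t)} - \Delta_k ATT(g,t) = n^{-1}\sum_{i=1}^n \psi_i^{(k,g,t)} + o_p(n^{-1/2})$, with $\psi_i^{(k,g,t)}$ consisting of the weighted outcome contribution plus the propensity-score adjustment term, as in equation \eqref{eq:psi's 2} of the proof.

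Second, I stack these influence functions into a $L_\Delta$-dimensional vector $\Psi_i$ whose rows are indexed like $\bm{\Delta ATT}$. Because the same unit $i$ can contribute to several entries (whenever $S_{i,t-k}S_{it}=1$ for multiple pairs $(t-k,t)$), these rows are correlated, so $\Omega = E_M[\Psi_i\Psi_i']$ is in general not block-diagonal. A multivariate Lindeberg--L\'evy CLT applied to the iid draws from the mixture distribution $F_M$ then gives $\sqrt{n}(\widehat{\bm{\Delta ATT}}-\bm{\Delta ATT})\overset{d}{\to} N(0,\Omega)$, and consistency of $\hat\Omega = n^{-1}\Psi\Psi'$ follows from a uniform law of large numbers together with consistency of $\hat p_g$ and boundedness implied by the overlap condition in Assumption \ref{ass: existence of treatment groups 2}.

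Third, since $W\bm{ATT}$ is deterministic and $W$ has full column rank whenever each $ATT(g,t)$ is identifiable from the missing-data pattern (which guarantees invertibility of $W'\Omega^{-1}W$), one can write
\begin{equation*}
\sqrt{n}\bigl(\widehat{\bm{ATT}}-\bm{ATT}\bigr) = (W'\hat\Omega^{-1}W)^{-1}W'\hat\Omega^{-1}\sqrt{n}\bigl(\widehat{\bm{\Delta ATT}}-\bm{\Delta ATT}\bigr),
\end{equation*}
and Slutsky's theorem delivers the limiting normal law with $\bm\Sigma = (W'\Omega^{-1}W)^{-1}$. The main obstacle will be bookkeeping for $\Omega$: carefully computing the cross-covariances that arise when a single observation enters multiple $\widehat{\Delta_k ATT(g,t)}$ through different sampling-indicator pairs, and correctly incorporating the first-step propensity-score correction into each entry of $\Psi_i$ so that the mixture-distribution centering is preserved. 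Once $\Psi_i$ is in hand, the multiplier bootstrap of \cite{callaway2021difference} transfers without modification by substituting $(W'\hat\Omega^{-1}W)^{-1}W'\hat\Omega^{-1}\Psi_i$ for their influence function.
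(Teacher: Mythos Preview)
Your proposal is correct and follows essentially the same architecture as the paper: first establish an asymptotically linear representation for each $k$-period building block $\widehat{\Delta_k ATT(g,t)}$ by extending the influence-function argument of Theorem~\ref{theorem: asy;ptotic results}, stack them, obtain joint normality of $\widehat{\bm{\Delta ATT}}$ with covariance $\Omega$, and then push through the linear GMM map to arrive at $\bm{\Sigma}=(W'\Omega^{-1}W)^{-1}$ and the bootstrap influence function $(W'\hat\Omega^{-1}W)^{-1}W'\hat\Omega^{-1}\Psi_i$.

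The one noteworthy difference is in the final step. You exploit the fact that the GMM problem here is linear with a closed-form solution, write $\sqrt{n}(\widehat{\bm{ATT}}-\bm{ATT})$ as a continuous transformation of $\sqrt{n}(\widehat{\bm{\Delta ATT}}-\bm{\Delta ATT})$, and invoke Slutsky directly. The paper instead casts the problem as a generic GMM moment system $E_M[h_i(\mathcal{W}_i\mid\Theta)]=0$ and appeals to Theorems~2.6 and~3.4 of \citet{newey1994large}, which forces it to introduce several additional regularity conditions (compactness of the parameter space, $E_M[\sup_\Theta\|h_i\|]<\infty$, interior point, etc.) that your direct argument does not need. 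Your route is therefore slightly more economical for this particular linear setting, while the paper's framing makes the connection to standard GMM efficiency theory more explicit; both yield the same limiting distribution and the same estimator of $\Omega$ via the stacked $\psi_{gtk}$ in \eqref{eq:psi's 2}.
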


The theorem shows that this estimator is consistent and asymptotically normal. The bootstrap procedures for $ATT(g,t)$'s (Online Appendix \ref{sec:bootstrap}) and for summary parameters (Online Appendix \ref{bootstrapsummary}) apply with minor modifications, as discussed in the proof. Although our estimator brings efficiency gains by using an optimal weighting matrix, it could be further improved by addressing the efficiency losses from first-step plug-in estimates of propensity scores to adjust for sample selection on observables. This is left for future research.\footnote{We identified two means to address this caveat. \citet{frazier2017efficient} propose a computationally simple yet general approach that involves targeting and penalization to enforce the asymptotic efficiency for two-step extremum estimators such as ours, whereas \citet{chaudhuri2019review} and \citet{sant2020doubly} propose doubly-robust estimators which allow preserving efficiency.}

\paragraph{Regression alternatives.}
\citet{wooldridge2021two} provides a TWFE regression alternative to \citet{callaway2021difference} which results in estimators of the $ATT$'s that are numerically identical to the imputation approach of \citet{borusiak_jaravel_2017}. In the absence of covariates $X$, this TWFE regression involves individual, period, and the interaction of cohort with time-since-adoption fixed effects, as shown by
\begin{equation}\label{eq:regressionalternative}
    Y_{it} = \sum_{i'=1}^n\alpha_{i'}\mathbf{1}_{\{i=i' \}} + \sum_{t'=1}^{\mathcal{T}}\delta_{t'}\mathbf{1}_{\{t=t' \}} + \sum_{g=2}^{\mathcal{T}}\sum_{\tau=2}^{\mathcal{T}} \beta_{g\tau} G_{g}\mathbf{1}_{\{t=\tau \}} + \varepsilon_{it}.
\end{equation}
This TWFE event-study regression is equivalent to the chained DiD estimator, provided we use individual fixed-effects (and not cohort fixed-effects), as well as exclude covariates and focus on cases where only two consecutive periods are observed per individual.

A notable advantage of the regression approaches \citep{borusiak_jaravel_2017,wooldridge2021two} is their flexibility in accommodating unbalanced panel data, unlike the method of \citet{callaway2021difference}. In these frameworks, the choice between chained DiD and cross-section DiD estimators in these methods hinges on selecting between individual and cohort fixed-effects, each with its own set of identifying assumptions about the sampling process, as discussed earlier.

Despite their adaptability, these alternatives are not without limitations. When incorporating covariates, the model complexity increases significantly due to the need for many interaction terms. They also rely on potentially restrictive linearity assumptions in potential outcomes and treatment effects instead of using propensity scores. Furthermore, aggregating the estimated parameters to derive the target parameter of interest may not be trivial. It is therefore unclear that these methods are more computationally efficient than our estimators.

\section{Numerical Simulations}\label{sec:simu}

We propose a simulation design adapted from the first section. Let us specify the potential outcome as a components of variance:
\begin{equation}
    Y_{it}(D_i) = \alpha_i + \delta_t + \sum_{\tau=2}^t \beta_{\tau} D_{i\tau} + \varepsilon_{it},
\end{equation}
where $D_{i\tau} \in \lbrace 0,1 \rbrace$ denotes whether individual $i$ has been treated in $\tau$ or earlier. Let us assume that $t \in \lbrace 0,...,T+1 \rbrace$ and treatments can only occur in $t\geq 2$ so that $G \in \lbrace 2,...,T+1 \rbrace$. The data generating process is characterized by the following assumptions:
\begin{itemize}
    \item The individual-specific unobservable heterogeneity is iid gaussian: $\alpha_i \sim N(1,\sigma_{\alpha}^2)$, where $\sigma_{\alpha}^2 = 2$ ; 
    \item The time-specific unobservable heterogeneity is iid gaussian: $\delta_t \sim N(1,1)$;
    \item The error term is iid gaussian: $\varepsilon_{it} \sim N(0,\sigma_{\varepsilon}^2)$, where $\sigma_{\varepsilon}^2 = 0.5$;
    \item The probability to receive the treatment at time $g$, conditional on being treated at $g$ or in the control group, is defined as
    \begin{equation*}
    Pr(G_{ig}=1|X_i,\alpha_i,G_{ig}+C_i=1) = \frac{1}{1+\exp\left(\theta_0 + \theta_1 X_i + \theta_2 \alpha_i \times g \right)},
\end{equation*}
where $X_{i} \sim N(1,1)$ is observable for every $i$, unlike $\alpha_i$, and $\theta_0 = -1$, $\theta_1 = 0.4$ and $\theta_2 = 0$ or $\theta_2 = 0.2$. In the latter case, the treatment probability varies with treatment timing and the unobserved individual heterogeneity;
\item The sampling probability in the consecutive periods $t,t+1$ conditional on $\alpha_i$ is given by
    \begin{equation*}
    Pr(S_{itt+1}=1|\alpha_i) = \frac{1}{1+\exp\left(\lambda_0 + \lambda_1 \alpha_i\times t \right)},
\end{equation*}
with $\lambda_0 = -1$, and $\lambda_1 = 0$ or $\lambda_1 = 0.2$, so that the sampling process can  also vary with time and the unobserved individual heterogeneity.
\end{itemize}

We simulate the sampled data in two steps. First, we generate a population sample for each period $t$ to represent individuals that are either treated at $t$ or in the control group. Second, we sample from this population using the specified process. We formalize this procedure as follows:

\begin{enumerate}
\item Generate a population of individuals
\begin{enumerate}
    \item Draw $N = 2 \times \max\limits_t\{ \frac{n}{E_{\alpha}\left[Pr(S_{itt+1})\right] } \}$ individuals per period in order to have $T+2$ population samples of $N$ individuals, where each individual is characterized by a vector $(\alpha_i,\delta_t,X_i,\varepsilon_{it})$;
    \item Separately for each population sample $g$, draw a uniform random number $\xi_{i}\in[0,1]$ per individual. If $\xi_{i}\leq Pr(G_{it}=1|X_i,\alpha_i,G_{it}+C_i=1)$, then set $(G_{ig} = 1,C_i=0)$, otherwise set $(G_{ig} = 0,C_i=1)$.
    \item Compute $Y_{it}$ from $(\alpha_i,\delta_t,X_i,\varepsilon_{it},G_{i0},...,G_{iT+1},C_i)$;
\end{enumerate}    
\item Sample from this population
\begin{enumerate}
    \item Draw a uniform random number $\eta_{it}\in[0,1]$ per individual $i$ and period $t$. If $\eta_{it}\leq Pr(S_{itt+1}=1|\alpha_i)$, then set $S_{itt+1} = 1$ and $S_{i\tau \tau+1}=0$ for $\tau \neq t$;
    \item Draw (without replacement) $n$ individuals per period $t$ from the population for which $S_{it t+1}=1$;
    \item Compute the different estimators.
    \item Repeat steps 1(b)-2(c) 1,000 times and report the mean and standard deviation of the estimators.
\end{enumerate}
\end{enumerate}

We consider several simulation designs: 
\begin{itemize}
    \item DGP 1: $\theta_2=0$ and $\lambda_1=0$. This is the baseline case where the probability of treatment and the sampling process do not depend on individual heterogeneity so all estimators are unbiased.
    \item DGP 2: $\theta_2=0.2$ and $\lambda_1=0.2$. In this case, both the probability of treatment and the sampling process depend on individual heterogeneity leading to biased estimates for the cross-section DiD.
    \item DGP 3: $\theta_2=0.2$ and $\lambda_1=0.2$. In this case, we simulate a stratified sample where 90\% of the individuals are sampled on a rotating basis as before and those with $\alpha_i$ superior to the 90th percentile are always observed (10\%).
  \item DGP 4: $\theta_2=0.2$ and $\lambda_1=0.2$. We also simulate a stratified sample where individuals with $\alpha_i$ superior to the 60th percentile are always observed (40\%) and the rest is drawn as a rotating subpanel (60\%).
\end{itemize}

For all simulations, we set $T=6$, so there is a total of 8 periods. In each period, the population size is 4800, and we draw 150 individuals such that $S_{itt+1}=1$. Finally, we set $\beta_{\tau}$ to take the values $\{1.75, 1.50, 1.25, 1.00, 0.75, 0.50\}$ for $\tau=\{1,2,3,4,5,6\}$, that is, the treatment effect is positive and decreasing over time, relative to the treatment starting date. For each sample, we estimate the chained DiD (using simulated $S_{itt+1}=1$) and the cross-section DiD (using $S_{it}=1$ that are obtained from $S_{itt+1}=1$). 

For DGPs 1 and 2, we also estimate the long DiD assuming that all sampled individuals are observed for the entire time frame. The simulation results are given in Table \ref{tab:table3}. This long DiD here is infeasible but serves as a benchmark to illustrate the significant loss of information resulting from having an unbalanced panel. However, the chained DiD delivers unbiased estimates in both cases unlike the cross-section DiD. Remark that we assume iid errors although the chained DiD perform best for random-walk errors.

\begin{table}[H]
\centering
\caption{\label{tab:table3}
Simulation results for a rotating panel}
{\footnotesize
\begin{tabular*}{1\hsize}{@{\hskip\tabcolsep\extracolsep\fill}l*{6}{c}}
\hline
           &           \multicolumn{ 3}{c}{DGP 1} &           \multicolumn{ 3}{c}{DGP 2} \\
           \cmidrule(lr){2-4} \cmidrule(lr){5-7}
           & Chained DiD &     CS DiD &   Long DiD & Chained DiD &     CS DiD &   Long DiD \\
\hline
$\beta_1$ &      1.748 &      1.745 &       1.75 &      1.752 &      1.894 &       1.75 \\
           &    (0.099) &    (0.199) &    (0.017) &     (0.097) &    (0.148) &    (0.016) \\
$\beta_2$ &        1.5 &      1.498 &        1.5 &       1.499 &      1.774 &      1.501 \\
           &    (0.164) &    (0.305) &     (0.02) &     (0.157) &    (0.218) &    (0.018)  \\
$\beta_3$ &      1.248 &       1.25 &      1.251 &      1.254 &      1.651 &       1.25 \\
           &    (0.231) &    (0.355) &    (0.023) &    (0.224) &    (0.257) &     (0.02)  \\
$\beta_4$ &          1 &      1.006 &          1 &       1.002 &      1.522 &          1 \\
           &      (0.3) &    (0.412) &    (0.027) &     (0.293) &    (0.291) &    (0.023)  \\
$\beta_5$ &      0.741 &      0.765 &       0.75 &     0.739 &      1.369 &       0.75 \\
           &    (0.406) &    (0.521) &    (0.033) &   (0.395) &    (0.365) &    (0.028) \\
$\beta_6$ &      0.499 &      0.522 &      0.499 &      0.5 &      1.209 &      0.503 \\
           &    (0.586) &    (0.711) &    (0.046) &    (0.603) &    (0.515) &     (0.04) \\
\hline
\end{tabular*}  

}
\begin{tablenotes} \item \scriptsize  Notes: 
This table shows results obtained from the simulations described above. Simulated $\beta_{\tau}$ take the values $\{1.75, 1.50, 1.25, 1.00, 0.75, 0.50\}$ for $\tau=\{1,2,3,4,5,6\}$.
\end{tablenotes}
\end{table}

For DGPs 3 and 4, the long DiD is estimated only for individuals that belong to the balanced subpanel. The simulation results are given in Table \ref{tab:table4}. We show estimates from the chained DiD GMM estimator using two weighting matrices:  (1) Ch DiD uses the identity matrix, and (2) CD-GMM uses the optimal weighting matrix presented earlier. It appears that when the balanced subpanel consists of only 10\% of the data, the chained DiD estimators outperform the long DiD. However, the (asymptotically) optimal weighting matrix does not always deliver more precise estimates than the identity matrix in small samples, at least for this simulation design. In comparison to the identity matrix, It seems that the optimal weights allow mitigating the precision loss for longer-term effects (e.g. $\beta_6$) at the cost of losing some precision for smaller-term effects (e.g. $\beta_1$).

\begin{table}[H]
\centering
\caption{\label{tab:table4}
Simulation results for a stratified sample}
{\footnotesize
\begin{tabular*}{1\hsize}{@{\hskip\tabcolsep\extracolsep\fill}l*{8}{c}}
\hline
           &           \multicolumn{ 4}{c}{DGP 3} &           \multicolumn{ 4}{c}{DGP 4} \\
           \cmidrule(lr){2-5} \cmidrule(lr){6-9}
           & Ch DiD & CD-GMM &     CS DiD &   Long DiD & Ch DiD & CD-GMM &     CS DiD &   Long DiD \\
\hline
$\beta_1$&1.753&1.748&1.714&1.753&1.754&1.755&1.715&1.754 \\
&(0.085)&(0.151)&(0.14)&(0.085)&(0.052)&(0.057)&(0.096)&(0.052)\\
$\beta_2$&1.501&1.488&1.413&0.897&1.502&1.503&1.422&1.504\\
&(0.127)&(0.195)&(0.208)&(0.315)&(0.061)&(0.06)&(0.137)&(0.069)\\
$\beta_3$&1.256&1.237&1.112&0.868&1.252&1.25&1.135&1.25\\
&(0.177)&(0.234)&(0.239)&(0.311)&(0.072)&(0.062)&(0.148)&(0.068)\\
$\beta_4$&1.005&0.988&0.8&0.812&1.003&1.002&0.855&1.002\\
&(0.215)&(0.285)&(0.268)&(0.336)&(0.077)&(0.065)&(0.158)&(0.071)\\
$\beta_5$&0.749&0.749&0.459&0.706&0.75&0.751&0.567&0.751\\
&(0.285)&(0.331)&(0.339)&(0.352)&(0.09)&(0.074)&(0.183)&(0.078)\\
$\beta_6$&0.508&0.52&0.123&0.511&0.509&0.511&0.316&0.511\\
&(0.412)&(0.398)&(0.479)&(0.393)&(0.121)&(0.094)&(0.237)&(0.097)\\
\hline
\end{tabular*}  

}
\begin{tablenotes} \item \scriptsize  Notes: 
This table shows results obtained from the simulations described above. Simulated $\beta_{\tau}$ take the values $\{1.75, 1.50, 1.25, 1.00, 0.75, 0.50\}$ for $\tau=\{1,2,3,4,5,6\}$.
\end{tablenotes}
\end{table}


\section{Application: The Employment Effects of an Innovation Policy in France}\label{sec:application}

We now turn to an application of these methods for estimating the causal impact of a French innovation policy supporting collaborative R\&D  projects over the period 2010-2016. 

\subsection{Background}

This innovation policy is made up of different subsidy schemes aimed at developing R\&D collaborations between firms and, often, public organizations. These schemes aim at subsidizing collaborative projects oriented towards applied research and experimental development.\footnote{The schemes are FUI, ISI, PSPC, PIAVE, RAPID and ADEME. Although they share the same general objective, they support different forms of R\&D projects. For example, PSPC projects are much larger in size than others, FUI projects systematically involve companies and public research organizations, ADEME projects have environmental objectives, and so on. A detailed description of the schemes is available in \cite{rapport_dge_2020}.} 
To obtain funding, a firm must set up a research project in partnership with at least one other institution. The project is then submitted to one specific subsidy scheme, often following calls for proposals, much like research grant in academic research. The selection of projects, and the associated funding, is based on a list of criteria including, but not limited to, the innovative nature of the project, its credibility, maturity, or commercial character. 

This innovation policy  provides an ideal setting to apply our method because R\&D projects take several years to complete. Evaluating the effectiveness of the policy hence requires estimating its long-term effects. Unfortunately, one of the main data sources about firm-level R\&D activities comes from a survey with a  rotating panel design with multiple strata. Firms investing a large amount in R\&D expenditure, i.e. large companies, are systematically surveyed, while those spending less, i.e. small and medium-sized enterprises (SMEs) and intermediate-sized enterprises (ISEs), are surveyed only two consecutive years and then dropped out of the sample. Furthermore, large firms are always involved in at least one R\&D project, so it is not possible to find a plausible counterfactual for them. Consequently, our policy evaluation focus on SMEs and ISEs, which form an unbalanced panel. 

The mechanism for selecting projects submitted by companies should ideally lead to the acceptance of good projects and the rejection of weaker ones. It is conceivable that the accepted projects would have been implemented regardless of subsidies, indicating that what we observe is somehow also reflective of the role played by project quality rather than simply the impact of subsidies. However, the associated risk with these projects is high, and it is unclear that they would have been carried out without subsidies, which  account for more than 30\% of the total project financing on average. More importantly, the primary goal of our empirical analysis is to assess whether subsidies for collaborative projects are useful in boosting R\&D activities within SMEs and ISEs. In this context, the outcome variable is not directly measuring the project's success. Instead, we use the number of researchers and highly qualified workforce, which serves as an indicator of the additional investment in R\&D activities. Empirical evidence also indicates significant variability within companies, where one project may be accepted while another is rejected. This variability suggests that it would be too simplistic to label a company as uniformly successful or unsuccessful at attracting subsidies for its projects.\footnote{Note that the treatment is properly defined as receiving a first subsidy for a collaborative project in our application. } 

Furthermore, the gradual implementation of the studied mechanisms means that a treated company at a given period often serves as a control for other companies in the period preceding its treatment (the ``not yet treated''). This is particularly true given the prolonged maturation of projects, and it is common for the same project to be submitted multiple times in various calls for projects before a definitive acceptance or rejection. Rejected projects may also find better alignment with other funding opportunities and schemes that better suit their objectives. Consequently, many projects are declined not necessarily due to their intrinsic quality but rather because of insufficient maturity or misalignment with the goals of the various subsidy schemes.

The application presented in this paper focuses on the average treatment effect of participating for the first time in any of these schemes without distinguishing their individual effects. It is relevant to analyze this average effect to the extent that all the  schemes contribute to the common objective of subsidizing collaborative R\&D projects. This aggregation leads to more precise estimates because it maximizes the sample size, but at the same time requires to account for treatment effect heterogeneity and variations in treatment timing.\footnote{It is almost impossible to precisely estimate the individual effect of the smallest schemes as they have subsidized a very small number of projects.} 

We estimate the treatment effect of this policy on employment. We focus on employment because it is the main economic variable for which it is also possible to consistently observe an almost identical measure for all firms using administrative data. Therefore, by focusing on such outcome variable, we can compare the results obtained with the chained DiD estimator to the unfeasible estimates obtained using the long DiD. The complete results of this policy evaluation, including the effects on a larger number of economic variables, are available in \cite{rapport_dge_2020}.

\subsection{Data}

Our data contains information about all R\&D projects financed under these schemes over the period 2010-2016. The data includes an unique identifier for each partner participating in a project.\footnote{This identifier corresponds to the SIREN number, a unique identification number for French businesses  supervised by the French national institute of statistics.} Using this identifier, we have collected exhaustive firm-level data from administrative sources that provide the main annual indicators on the economic activity of companies over the period 2007-2017. In particular, we collected the following variables: total workforce and the number of engineers in the workforce from administrative records on firms' employment as outcomes of interest, and other variables used in the propensity scores.\footnote{Firms' revenues come from annual tax data restated by INSEE (FICUS/FARE datasets). Employment information comes from administrative records on firms' employment (DADS datasets). Data on cluster policy comes from the French competitiveness cluster (``P\^{o}le de Comp\'etitivit\'e") management database. Finally, data on support for innovation comes from the ``Cr\'edit Imp\^{o}t Recherche'' (CIR), a research tax credit, and the ``Jeunes Entreprises Innovantes'' (JEI) scheme, a tax and social exemption aimed at young innovative firms. The CIR is the main tool for supporting innovation in France. Contrary to the devices evaluated in this article, the CIR is an indirect tax aid, in the sense that it is automatically distributed to companies making eligible R\&D expenditures and that apply for it.} 

Data on the number of researchers is obtained from the R\&D survey.\footnote{This survey also provides detailed information on R\&D expenditures, the financing of these expenditures, and some outputs.} This annual survey collects information from about 9000 firms companies each year. The survey has a stratified sampling design: firms with intramural R\&D expenditures above 750,000 euros are systematically surveyed the following year, while others are surveyed only two years in a row. The vast majority of the SMEs and ISEs in the scope of the study are part of the second stratum. In this context, the application of the standard long DiD method is not possible, which justifies the use of the method developed in this article. 

Having knowledge of the amount of CIR (research tax credit) paid to companies and their participation in the French cluster policy is essential. Indeed, the amount of tax credit granted is a good proxy for a company's propensity to be active in R\&D. In addition, competitiveness clusters aim to create a network of firms and research organizations to facilitate the formation of collaborative R\&D projects. Participation in one of these clusters also reveals a firm's tendency for this form of R\&D. These different variables are therefore suitable candidates to explain the probability of receiving the treatment in the propensity score.\footnote{More specifically, the propensity score includes the log of R\&D grants, the log of the number of engineers, the log of investment, the log of the variation in R\&D grants, the log of the variation in turnover, an indicator for being in a competitiveness cluster, and an indicator for being in the IT sector.} It is important to observe some variables comprehensively across firms so they can be included in the propensity score, as they must be observed in $g-1$, $t$, and $t+1$ to compute the elementary building block constituting each chain link of the chained DiD.

The number of firms that can potentially carry out an innovative R\&D activity is very small compared to the total number of firms in France (about 3,000,000). Therefore, in order to avoid comparing firms participating in a collaborative R\&D project to other firms which are unlikely to pursue such activity, we restrict the scope of the study to firms active in R\&D at least one year over the whole period considered. This activity is measured by merging together all the sources of information available to us for this purpose: the databases of the research tax credit, the JEI scheme, and the R\&D survey. Doing so leaves us with about 30,000 firms in the sample. 

 The schemes covered by the study are the main support mechanisms for collaborative R\&D in France and involve the highest amounts of public support. However, there are other alternatives not discussed here. 
Altogether, the various schemes considered have provided funding for 1697 projects over the period, and have involved 8724 partners.\footnote{A firm can participate in several projects.} These projects received a total aid of 3.6 billion euros and involved total expenditures of 10.4 billion euros from their partners. 

\paragraph{Sample selection on observables.} The chained difference-in-differences method relies on the identifying assumption that changes in the potential outcomes are (conditionally) independent of being sampled twice in a row. A key advantage of the method is that it is  robust to sample selection on unobservable time-persistent factors. However, the design of the R\&D survey implies that the probability of observing the same company twice is positively correlated with observables, in particular past R\&D expenditures. Surveyed companies with larger (or increasing) R\&D in $t$ are more likely to be surveyed again in $t+1$. Furthermore, companies likely to engage in R\&D, identified through auxiliary information, are all surveyed and integrated into the survey data if they conduct R\&D. The compilation of various annual surveys therefore results in some sample selection on past R\&D expenditures at the company level.

We address this selection issue using inverse propensity weighting of the first-differences $Y_t - Y_{t-1}$ following Corollary \ref{corrolary:attritionmodel1} under a variation of  Assumption \ref{ass: MAR Exog}. For tractability, we specify a logit model for the conditional probability $P(S_{tt+1} = 1| Z)$, where $Z$ represents the set of variables determining the survey sampling design, including the level and changes in R\&D expenditures and participation in known R\&D support mechanisms up to date $t-1$.\footnote{Remark that this simplified approach assumes sampling $S_{t-1}S_t$ to be mean-independent  of treatment status $D$ and treatment selection variables $X$, conditional on $Z$.} This approach controls for the effects of selection in the R\&D survey sample and stabilizes the scope of the evaluation.


\subsection{Results}

In this application, we focus on the effects on (1) total workforce and (2) the employment of managers and highly qualified workers. Those variables are  observed exhaustively from administrative data (DADS). We estimate the dynamic treatment effect on these two observed outcomes by using three estimators: the long DiD, the chained DiD, and the cross-section DiD. 

Although these outcomes are consistently observed through time, attrition can still occur. For example, firms may disappear over time because of economic difficulties or because they are acquired by another firm. These companies are not taken into account by the long DiD estimator, unless with an attrition model, whereas they are accounted for by the chained DiD and cross-section DiD estimators. 

We balance the data to both facilitate the comparison across estimators and get closer to our theoretical framework. That is, we keep the firms that are consistently observed  from 2007 to 2017 in the administrative data. This balanced panel is referred to as the exhaustive panel. We use the long DiD, the chained DiD and the cross-section DiD on this exhaustive panel.

Then, we construct an unbalanced version of this panel by discarding all observations whenever a firm was not sampled in the R\&D survey in a given year. Both the chained DiD and cross section DiD estimators are used on this (artificial) unbalanced panel. The objective is to study how each estimator is affected by discarding observations, and compare its performance to the long DiD.

Results are summarized in Figures \ref{fig:figure1} and \ref{fig:figure2} for the effects on total workforce and Figures \ref{fig:figure3} and \ref{fig:figure4} for highly qualified workers, along with 95\% confidence intervals obtained from the multiplier bootstrap.\footnote{To maximize readability, Figures 2 and 4 are the same as Figures 1 and 3, respectively, to which we have added the cross-section DiD on the unbalanced panel, whose variance is very high. } They show the dynamic treatment effects relative to the beginning of the treatment obtained on a panel balanced on exhaustive variables.\footnote{``Exhaustive" refers to the exhaustively observed outcome. ``Unbalanced" refers to the use of an exhaustively observed outcome from which we artificially discard all observations identified as missing in the R\&D survey. }  Table \ref{tab:table5}  and Table \ref{tab:table6} in the Online Appendix provide additional details, including pre-trend tests rejecting the null hypothesis of pre-trends.

\begin{figure}[H]
    \centering
    \includegraphics[scale=0.75]{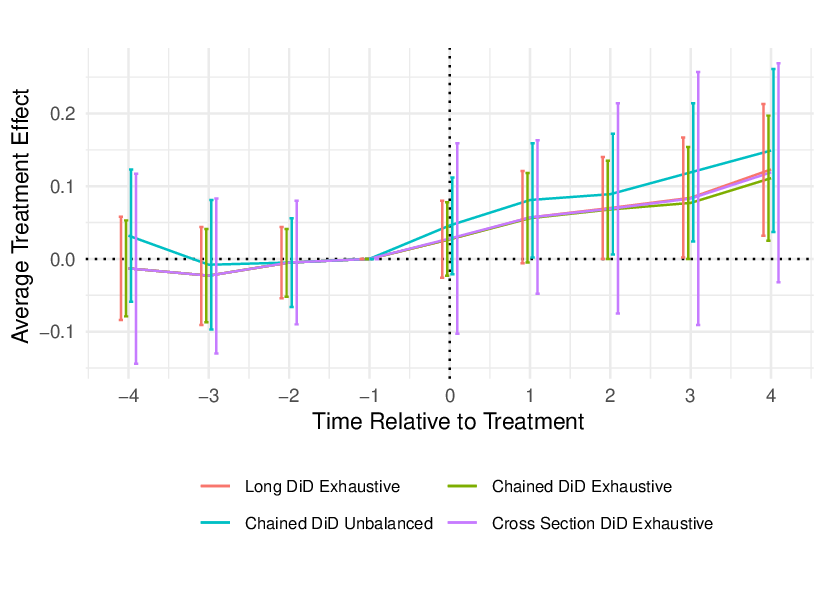}
    \vspace{-2em}\caption{Effects on total workforce for selected estimators}
    \label{fig:figure1}
\end{figure}

\begin{figure}[H]
    \centering
    \includegraphics[scale=0.75]{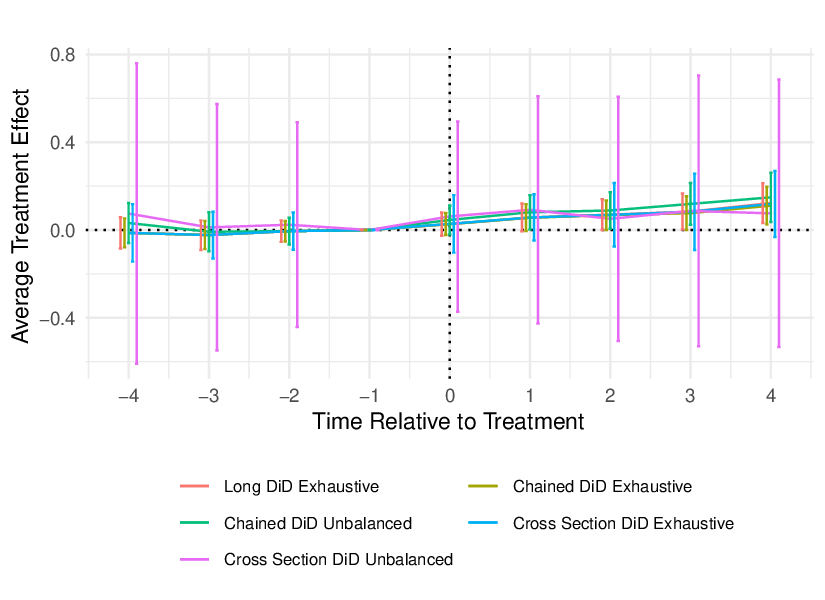}
    \vspace{-1.5em}\caption{Effects on total workforce for all estimators}
    \label{fig:figure2}
\end{figure}

\begin{figure}[H]
    \centering
    \includegraphics[scale=0.75]{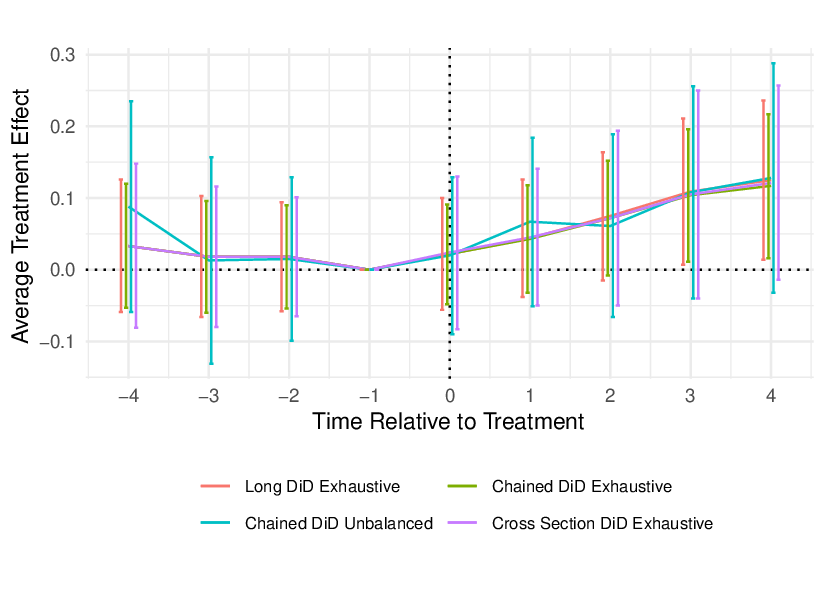}
    \vspace{-2em}\caption{Effects on highly qualified workforce for selected estimators}
    \label{fig:figure3}
\end{figure}

\begin{figure}[H]
    \centering
    \includegraphics[scale=0.75]{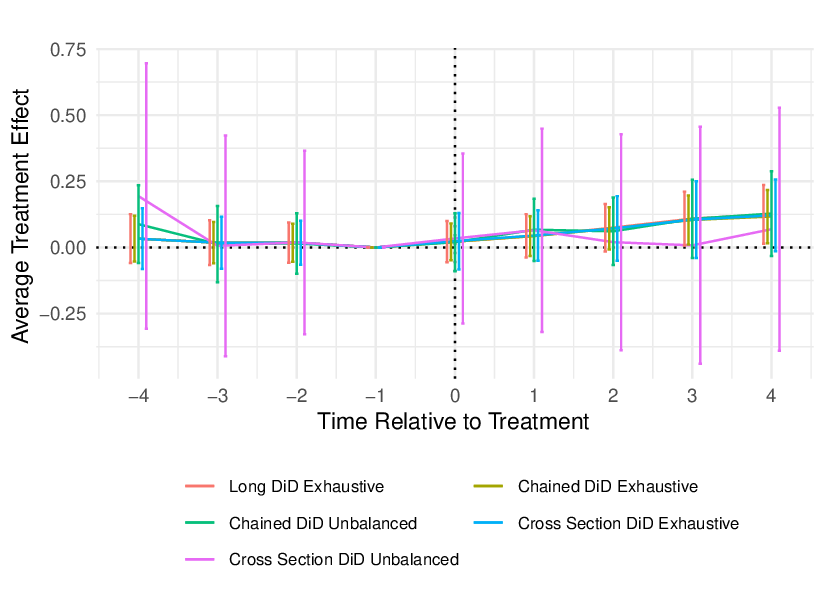}
    \vspace{-1.5em}\caption{Effects on highly qualified workforce for all estimators}
    \label{fig:figure4}
\end{figure}

The long DiD reveals that total employment has increased by 5.7\% the year after the project started ($\beta_2$ in Table \ref{tab:table5}). The long-term increase amounts to 12\% five years after the start. For highly qualified workers, the effect is also positive but not statistically significant during the first two years. It becomes significant from the third year onwards. 

As expected, the estimates obtained using the complete (exhaustive) panel are very similar between the long DiD, the chained DiD and the cross-section DiD estimators. The standard errors are also similar between the long DiD and the chained DiD estimators but they are much higher for the cross section DiD estimator, which suggests that the variance of the unobserved heterogeneity is large in this data. 
On the one hand, the estimated coefficients remain quite similar and the standard errors are only slightly higher with the chained DiD estimator on the unbalanced panel. 
On the other hand, the estimates are considerably worse when obtained with the cross-section DiD estimator on the unbalanced panel (Figures \ref{fig:figure2} and \ref{fig:figure4}). The point estimates are different and the standard errors become too large to appreciate the effects of the policy.


Finally, we apply the chained DiD and cross-section DiD estimators on outcomes similar to those studied just above but coming from the R\&D survey. In this context, it is not possible to apply the standard long DiD estimator because there are too few observations to calculate long differences. For consistency reasons, we present estimates obtained on the same set of firms as those used for the tables \ref{tab:table5} and \ref{tab:table6}.\footnote{That is, we use the set of data that is balanced on the exhaustive variables, which is then merged with the R\&D survey, and estimate the effects on the variables reported in the R\&D survey.} 

The outcome variables do not correspond to the exact same definition of employment depending on whether they come from the exhaustive administrative source or from the R\&D survey. Total employees headcount from DADS administrative data corresponds to observations at the legal unit level, whereas the R\&D survey sometimes provide information on employment at the group level.\footnote{A legal unit is a legal entity of public or private law. A firm, in the sense of a group, is an economic entity that may comprise several legal units thanks to financial links.} The employment of highly qualified workers from the DASD is close to the number of R\&D researchers and engineers filled in the R\&D survey. 
Highly qualified workers include engineers but also qualified workers dedicated to other tasks than R\&D. Conversely, R\&D researchers and engineers include researchers who are specifically assigned to research tasks. 
Despite their difference, these variables measure similar outcomes and are highly correlated, which justifies their comparison.

Results using outcome variables observed from R\&D survey are presented in Figure \ref{fig:figure5}, and details are provided in Table \ref{tab:table7} in the Online Appendix. The effects obtained with the chained DiD on total employment are somewhat less significant than those presented in Figure \ref{fig:figure1}, but the coefficients have the same order of magnitude. As might be expected since the policy directly aims at fostering R\&D activities, the effects on the number of researchers are stronger and more significant. On the other hand, the effects obtained with the  cross section DiD are, once again, much less precise. 

\begin{figure}[H]
    \centering
    \includegraphics[scale=0.75]{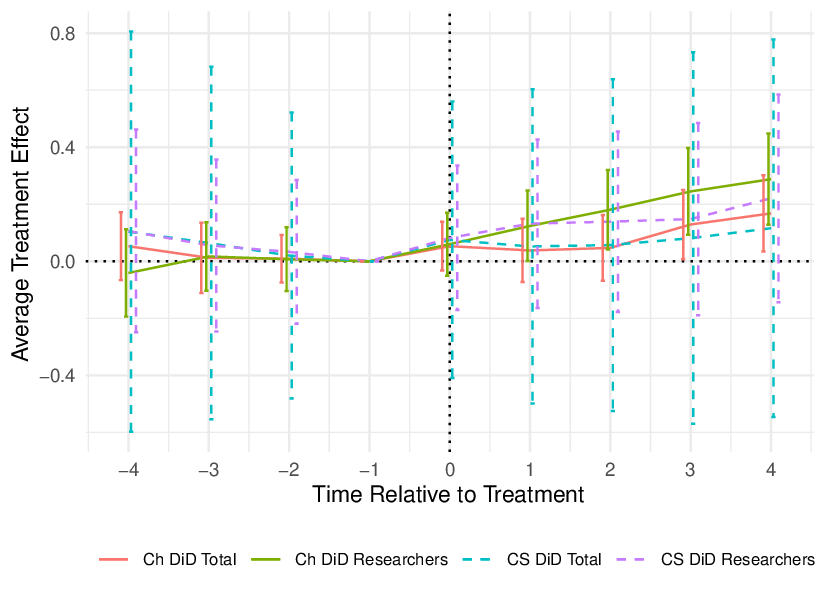}
    \vspace{-1.5em}\caption{Effects using only R\&D survey data}
    \label{fig:figure5}
\end{figure}

We identify two primary reasons for the superior performance of the chained DiD estimator over the cross section DiD estimator in this application.  First, employment typically exhibits high temporal persistence, coupled with considerable unobserved heterogeneity. Proposition \ref{prop: variance RSP RCS} demonstrates that the time-dependence of the outcome variable is a crucial determinant in the relative precision of these estimators. Second, there may be sample selection due to unobservable time-persistent factors in the R\&D survey. Such selection arises if, for example, the propensity to consistently participate in R\&D surveys varies, possibly due to differences in resource availability, past experience with surveys, corporate culture, or differing levels of motivation across research teams in firms.


The time-persistent differences between high-performing and non-performing companies are eliminated by first-differencing. If there are distinct dynamics between treated and untreated firms, then differences could be observed even before treatment initiation. However, placebo tests for pre-trends show no significant employment differences upstream of initial treatment period between treatment and control groups (effects $\beta_{-3}$, $\beta_{-2}$ and $\beta_{-1}$ in Tables \ref{tab:table5} to \ref{tab:table7}). This empirical test supports the absence of a violation of the parallel trend assumption.

Finally, we reproduce the results of Tables \ref{tab:table5}, \ref{tab:table6}, and \ref{tab:table7} by estimating the treatment effects with the complete original sample from the administrative data, without discarding the individual firms that are not consistently observed throughout the period to create a balanced panel. The results are presented in Online Appendix \ref{appe:application} and confirm the better performance of the chained DiD estimator.

\section{Conclusion}\label{sec:conclusion}

In this paper, we have developed a new estimator to identify long-term treatment effects in unbalanced panel data sets. This is an important issue, not only because of attrition but also due to how surveys are designed. Common practices are either to use a long DiD estimator by balancing the data, at the cost of losing precision and possibly biasing the results, or to use a cross-section DiD estimator at the cost of not accounting for unobserved heterogeneity. We introduce a new method that simply consists of aggregating short-term DiD estimators obtained from two periods. Our theoretical results show that this estimator identifies the average treatment effects of interest, is consistent and asymptotically normal, accounts for treatment heterogeneity and varying treatment timing, as well as general missing data patterns, and may deliver efficiency gains. 
An application to an innovation policy implemented in France reveals that, indeed, this estimator allows identifying statistically significant long-term treatment effects where previous methods fail to do so.

\section*{Acknowledgments}
We are indebted to Serena Ng for her guidance, an associate editor, and two anonymous referees for their  valuable comments and suggestions that greatly improved the quality of this work. We would also like to thank Brantly Callaway, Joel Cuerrier, Laurent Davezies, Cl\'{e}ment de Chaisemartin, Xavier D'Haultef{\oe}uille, Yannick Guyonvarch, Xavier Jaravel, and all participants to seminars and conferences for insightful discussions and comments. 

The R package \textbf{cdid} is available on CRAN.

\section*{Funding}
This work was supported by the Social Sciences and Humanities Research Council of Canada (430-2022-00544).

\bibliographystyle{aea2}
\bibliography{bib_chained_did.bib}

@techreport{rapport_dge_2020,
  title={\'{E}valuation des Aides aux Projets de R\&D},
  author={Bell\'{e}go, Christophe and Benatia, David and Kymble, Christophe and Dortet-Bernadet, Vincent},
  year={2020},
  institution={Direction G\'{e}n\'{e}rale des Entreprises (Rapport pour la Commission Europ\'{e}enne)}
}

@article{de2022not,
  title={Not all Differences-in-differences are Equally Compatible with Outcome-based Selection Models},
  author={de Chaisemartin, Cl{\'e}ment and D'Haultf{\oe}uille, Xavier},
  journal={Available at SSRN 4235636},
  year={2022}
}

@article{botosaru2018difference,
  title={Difference-in-Differences When the Treatment Status is Observed in Only One Period},
  author={Botosaru, Irene and Gutierrez, Federico H},
  journal={Journal of Applied Econometrics},
  volume={33},
  number={1},
  pages={73--90},
  year={2018},
  publisher={Wiley Online Library}
}

@article{de2022survey,
  title={Two-Way Fixed Effects and Differences-in-Differences with Heterogeneous Treatment Effects: A Survey},
  author={de Chaisemartin, Cl{\'e}ment and D'Haultf{\oe}uille, Xavier},
  journal={Econometrics Journal},
  year={2022}
}

@article{de2023severaltreat,
  title={Two-way Fixed Effects and Differences-in-Differences Estimators with Several Treatments},
  author={de Chaisemartin, Clement and D'Haultf{\oe}uille, Xavier},
  journal={Journal of Econometrics},
  volume={236},
  number={2},
  pages={105480},
  year={2023},
  publisher={Elsevier}
}

@techreport{harmon2022difference,
  title={Difference-in-Differences and Efficient Estimation of Treatment Effects},
  author={Harmon, Nikolaj A},
  year={2022},
  institution={Working paper}
}

@article{wooldridge2021two,
  title={Two-way Fixed Effects, the Two-Way Mundlak Regression, and Difference-in-Differences Estimators},
  author={Wooldridge, Jeffrey M},
  journal={Available at SSRN 3906345},
  year={2021}
}

@article{de2023two,
  title={Two-Way Fixed Effects and Difference-in-Differences Estimators with Heterogeneous Treatment Effects and Imperfect Parallel Trends},
  author={de Chaisemartin, Cl{\'e}ment and D'Haultf{\oe}uille, Xavier},
  journal={Available at SSRN},
  year={2023}
}

@article{ghanem2023selection,
      title={Selection and Parallel Trends}, 
      author={Dalia Ghanem and Pedro H. C. Sant'Anna and Kaspar Wüthrich},
      year={2023},
      eprint={2203.09001},
      archivePrefix={arXiv},
      primaryClass={econ.EM},
journal={Available at SSRN 4215029}
}

@techreport{de2022difference,
  title={Difference-in-Differences Estimators of Intertemporal Treatment Effects},
  author={de Chaisemartin, Clément and D'Haultf{\oe}uille, Xavier},
  year={2022},
  institution={National Bureau of Economic Research}
}

@article{newey1994large,
  title={Large Sample Estimation and Hypothesis Testing},
  author={Newey, Whitney K and McFadden, Daniel},
  journal={Handbook of Econometrics},
  volume={4},
  pages={2111--2245},
  year={1994},
  publisher={Elsevier}
}

@article{carrasco2007linear,
  title={Linear Inverse Problems in Structural Econometrics Estimation Based on Spectral Decomposition and Regularization},
  author={Carrasco, Marine and Florens, Jean-Pierre and Renault, Eric},
  journal={Handbook of econometrics},
  volume={6},
  pages={5633--5751},
  year={2007},
  publisher={Elsevier}
}

@article{chaudhuri2018indirect,
  title={Indirect Inference with Endogenously Missing Exogenous Variables},
  author={Chaudhuri, Saraswata and Frazier, David T and Renault, Eric},
  journal={Journal of Econometrics},
  volume={205},
  number={1},
  pages={55--75},
  year={2018},
  publisher={Elsevier}
}

@article{frazier2017efficient,
  title={Efficient Two-Step Estimation via Targeting},
  author={Frazier, David T and Renault, Eric},
  journal={Journal of Econometrics},
  volume={201},
  number={2},
  pages={212--227},
  year={2017},
  publisher={Elsevier}
}

@article{chaudhuri2019review,
  title={A Review of Parametric Doubly-Robust Estimators using Missing-at-Random data},
  author={Chaudhuri, Saraswata and Min, Hye-Young and Barnwell, Jean-Louis},
  journal={Working Paper},
  year={2019}
}

@article{barnwell2021note,
  title={A Note on Efficient Estimation with Monotonically Missing at Random Data},
  author={Barnwell, Jean-Louis and Chaudhuri, Saraswata},
  journal={Working Paper},
  year={2021}
}

@article{verbeek1992testing,
  title={Testing for Selectivity Bias in Panel Data Models},
  author={Verbeek, Marno and Nijman, Theo},
  journal={International Economic Review},
  pages={681--703},
  year={1992},
  publisher={JSTOR}
}

@article{little1989analysis,
  title={The Analysis of Social Science Data with Missing Values},
  author={Little, Roderick JA and Rubin, Donald B},
  journal={Sociological Methods \& Research},
  volume={18},
  number={2-3},
  pages={292--326},
  year={1989},
  publisher={Sage Publications}
}

@article{bhattacharya2008inference,
  title={Inference in Panel Data Models under Attrition Caused by Unobservables},
  author={Bhattacharya, Debopam},
  journal={Journal of Econometrics},
  volume={144},
  number={2},
  pages={430--446},
  year={2008},
  publisher={Elsevier}
}

@article{chaudhuri2020efficiency,
  title={On Efficiency Gains from Multiple Incomplete Subsamples},
  author={Chaudhuri, Saraswata},
  journal={Econometric Theory},
  volume={36},
  number={3},
  pages={488--525},
  year={2020},
  publisher={Cambridge University Press}
}

@article{moffit1999sample,
  title={Sample Attrition in Panel Data: The Role of Selection on Observables},
  author={Moffit, Robert and Fitzgerald, John and Gottschalk, Peter},
  journal={Annales d'Economie et de Statistique},
  pages={129--152},
  year={1999},
  publisher={JSTOR}
}

@article{frolich2014treatment,
  title={Treatment Evaluation with Multiple Outcome Periods under Endogeneity and Attrition},
  author={Fr{\"o}lich, Markus and Huber, Martin},
  journal={Journal of the American Statistical Association},
  volume={109},
  number={508},
  pages={1697--1711},
  year={2014},
  publisher={Taylor \& Francis}
}

@article{heshmati1998efficiency,
  title={Efficiency Measurement in Rotating Panel Data},
  author={Heshmati, Almas},
  journal={Applied Economics},
  volume={30},
  number={7},
  pages={919--930},
  year={1998},
  publisher={Taylor \& Francis}
}

@incollection{verbeek1996incomplete,
  title={Incomplete Panels and Selection Bias},
  author={Verbeek, Marno and Nijman, Theo},
  booktitle={The Econometrics of Panel Data},
  pages={449--490},
  year={1996},
  publisher={Springer}
}

@article{nijman1991efficiency,
  title={The Efficiency of Rotating-Panel Designs in an Analysis-of-Variance Model},
  author={Nijman, Theo and Verbeek, Marno and Van Soest, Arthur},
  journal={Journal of econometrics},
  volume={49},
  number={3},
  pages={373--399},
  year={1991},
  publisher={Elsevier}
}

@article{borusiak_jaravel_2017,
  title={Revisiting Event Study Designs: Robust and Efficient Estimation},
  author={Borusyak, Kirill and Jaravel, Xavier and Spiess, Jann},
  journal={arXiv preprint arXiv:2108.12419},
  year={2021}
}

@article{goodman_bacon_2018,
  title={Difference-in-Differences with Variation in Treatment Timing},
  author={Goodman-Bacon, Andrew},
  journal={Journal of Econometrics},
  volume={225},
  number={2},
  pages={254--277},
  year={2021},
  publisher={Elsevier}
}

@article{sun_abraham_2020,
  title={Estimating Dynamic Treatment Effects in Event Studies with Heterogeneous Treatment Effects},
  author={Sun, Liyang and Abraham, Sarah},
  journal={Journal of Econometrics},
  volume={225},
  number={2},
  pages={175--199},
  year={2021},
  publisher={Elsevier}
}

@article{hirano2003efficient,
  title={Efficient Estimation of Average Treatment Effects using the Estimated Propensity Score},
  author={Hirano, Keisuke and Imbens, Guido W and Ridder, Geert},
  journal={Econometrica},
  volume={71},
  number={4},
  pages={1161--1189},
  year={2003},
  publisher={Wiley Online Library}
}

@article{sant2020doubly,
  title={Doubly Robust Difference-in-Differences Estimators},
  author={Sant'Anna, Pedro HC and Zhao, Jun},
  journal={Journal of Econometrics},
  volume={219},
  number={1},
  pages={101--122},
  year={2020},
  publisher={Elsevier}
}

@article{hansen1982large,
  title={Large Sample Properties of Generalized Method of Moments estimators},
  author={Hansen, Lars Peter},
  journal={Econometrica: Journal of the econometric society},
  pages={1029--1054},
  year={1982},
  publisher={JSTOR}
}

@article{hirano1998combining,
  title={Combining Panel Data Sets with Attrition and Refreshment Samples},
  author={Hirano, Keisuke and Imbens, Guido and Ridder, Geert and Rebin, Donald B},
  year={1998},
  journal={National Bureau of Economic Research}
}

@article{angrist2006long,
  title={Long-term Educational Consequences of Secondary School Vouchers: Evidence from Administrative Records in Colombia},
  author={Angrist, Joshua and Bettinger, Eric and Kremer, Michael},
  journal={American Economic Review},
  volume={96},
  number={3},
  pages={847--862},
  year={2006}
}

@article{lechner2011long,
  title={Long-run Effects of Public Sector Sponsored Training in West Germany},
  author={Lechner, Michael and Miquel, Ruth and Wunsch, Conny},
  journal={Journal of the European Economic Association},
  volume={9},
  number={4},
  pages={742--784},
  year={2011},
  publisher={Oxford University Press}
}

@article{callaway2021difference,
  title={Difference-in-Differences with Multiple Time Periods},
  author={Callaway, Brantly and Sant'Anna, Pedro H.C.},
  journal={Journal of Econometrics},
  volume={225},
  number={2},
  pages={200--230},
  year={2021},
  publisher={Elsevier}
}

@article{chaisemartin_haultfoeuille_2020,
Author = {de Chaisemartin, Clément and D'Haultf{\oe}uille, Xavier},
Title = {Two-Way Fixed Effects Estimators with Heterogeneous Treatment Effects},
Journal = {American Economic Review},
Volume = {110},
Number = {9},
Year = {2020},
Month = {September},
Pages = {2964-96},
DOI = {10.1257/aer.20181169},
URL = {https://www.aeaweb.org/articles?id=10.1257/aer.20181169}}

@article{abadie_2005,
 ISSN = {00346527, 1467937X},
 URL = {http://www.jstor.org/stable/3700681},
 author = {Alberto Abadie},
 journal = {The Review of Economic Studies},
 number = {1},
 pages = {1--19},
 publisher = {[Oxford University Press, The Review of Economic Studies, Ltd.]},
 title = {Semiparametric Difference-in-Differences Estimators},
 volume = {72},
 year = {2005}
}

@article{baltagi_song_2006,
 URL = {https://doi.org/10.1007/s00362-006-0304-0},
 author = {Baltagi, Badi H. and Song, Seuck Heun},
 journal = {Statistical Papers},
 number = {4},
 pages = {493--523},
 publisher = {[Oxford University Press, The Review of Economic Studies, Ltd.]},
 title = {Unbalanced Panel Data: A Survey},
 volume = {47},
 year = {2006}
}

@article{garcia_Perez_etal_2020,
author = {Garc{\'i}a-P{\'e}rez, J. Ignacio and Marinescu, Ioana and Vall Castello, Judit},
title = {Can Fixed-term Contracts Put Low Skilled Youth on a Better Career Path? Evidence from Spain},
journal = {The Economic Journal},
year = {2019},
volume = {129},
issue = {620},
pages = {},
doi = {https://doi.org/10.1111/ecoj.12621},
url = {https://www.onlinelibrary.wiley.com/doi/abs/10.1111/ecoj.12621},
eprint = {https://www.onlinelibrary.wiley.com/doi/pdf/10.1111/ecoj.12621},
}

@article{autor_houseman_2010,
Author = {Autor, David H. and Houseman, Susan N.},
Title = {Do Temporary-Help Jobs Improve Labor Market Outcomes for Low-Skilled Workers? Evidence from "Work First"},
Journal = {American Economic Journal: Applied Economics},
Volume = {2},
Number = {3},
Year = {2010},
Month = {July},
Pages = {96-128},
DOI = {10.1257/app.2.3.96},
URL = {https://www.aeaweb.org/articles?id=10.1257/app.2.3.96}}

@article{oreopoulos_etal_2012,
Author = {Oreopoulos, Philip and von Wachter, Till and Heisz, Andrew},
Title = {The Short- and Long-Term Career Effects of Graduating in a Recession},
Journal = {American Economic Journal: Applied Economics},
Volume = {4},
Number = {1},
Year = {2012},
Month = {January},
Pages = {1-29},
DOI = {10.1257/app.4.1.1},
URL = {https://www.aeaweb.org/articles?id=10.1257/app.4.1.1}}

@article{kahn_2010,
title = "The Long-Term Labor Market Consequences of Graduating from College in a Bad Economy",
journal = "Labour Economics",
volume = "17",
number = "2",
pages = "303 - 316",
year = "2010",
issn = "0927-5371",
doi = "https://doi.org/10.1016/j.labeco.2009.09.002",
url = "http://www.sciencedirect.com/science/article/pii/S0927537109001018",
author = "Lisa B. Kahn",
keywords = "Recession, College graduates, Careers",
}

@book{wooldridge2010econometric,
  title={Econometric Analysis of Cross Section and Panel Data},
  author={Wooldridge, Jeffrey M},
  year={2010},
  publisher={MIT press}
}

@article{davezies2009faut,
  title={Faut-il Pond{\'e}rer?... ou l'{\'E}ternelle Question de l'{\'E}conom{\`e}tre Confront{\'e} {\`a} des Donn{\'e}es d'Enqu{\^e}te},
  author={Davezies, Laurent and D'Haultf{\oe}uille, Xavier},
  journal={Documents de Travail de la DESE},
  year={2009}
}

@article{stevens_1997,
 ISSN = {0734306X, 15375307},
 URL = {http://www.jstor.org/stable/2535319},
 author = {Ann Huff Stevens},
 journal = {Journal of Labor Economics},
 number = {1},
 pages = {165--188},
 publisher = {[University of Chicago Press, Society of Labor Economists, NORC at the University of Chicago]},
 title = {Persistent Effects of Job Displacement: The Importance of Multiple Job Losses},
 volume = {15},
 year = {1997}
}

@article{mroz_savage_2006,
 ISSN = {0022166X},
 URL = {http://www.jstor.org/stable/40057276},
 author = {Thomas A. Mroz and Timothy H. Savage},
 journal = {The Journal of Human Resources},
 number = {2},
 pages = {259--293},
 publisher = {[University of Wisconsin Press, Board of Regents of the University of Wisconsin System]},
 title = {The Long-Term Effects of Youth Unemployment},
 volume = {41},
 year = {2006}
}

@article{oconnor_etal_2013,
author = {O'Connor, Gina Colarelli and Rice, Mark P.},
title = {New Market Creation for Breakthrough Innovations: Enabling and Constraining Mechanisms},
journal = {Journal of Product Innovation Management},
volume = {30},
number = {2},
pages = {209-227},
doi = {https://doi.org/10.1111/j.1540-5885.2012.00996.x},
url = {https://onlinelibrary.wiley.com/doi/abs/10.1111/j.1540-5885.2012.00996.x},
eprint = {https://onlinelibrary.wiley.com/doi/pdf/10.1111/j.1540-5885.2012.00996.x},
year = {2013}
}

@article{gross_2018,
title = "How Long Does Innovation and Commercialisation in the Energy Sectors Take? Historical Case Studies of the Timescale from Invention to Widespread Commercialisation in Energy Supply and End Use Technology",
journal = "Energy Policy",
volume = "123",
pages = "682 - 699",
year = "2018",
issn = "0301-4215",
doi = "https://doi.org/10.1016/j.enpol.2018.08.061",
url = "http://www.sciencedirect.com/science/article/pii/S0301421518305901",
author = "Robert Gross and Richard Hanna and Ajay Gambhir and Philip Heptonstall and Jamie Speirs",
}

@article{hoonhout2019nonignorable,
  title={Nonignorable Attrition in Multi-period Panels with Refreshment Samples},
  author={Hoonhout, Pierre and Ridder, Geert},
  journal={Journal of Business \& Economic Statistics},
  volume={37},
  number={3},
  pages={377--390},
  year={2019},
  publisher={Taylor \& Francis}
}

@article{hausman1979attrition,
  title={Attrition Bias in Experimental and Panel Data: the Gary Income Maintenance Experiment},
  author={Hausman, Jerry A and Wise, David A},
  journal={Econometrica: Journal of the Econometric Society},
  pages={455--473},
  year={1979},
  publisher={JSTOR}
}

@article{hirano2001combining,
  title={Combining Panel Data Sets with Attrition and Refreshment Samples},
  author={Hirano, Keisuke and Imbens, Guido W and Ridder, Geert and Rubin, Donald B},
  journal={Econometrica},
  volume={69},
  number={6},
  pages={1645--1659},
  year={2001},
  publisher={Wiley Online Library}
}

@article{bakker_2013,
title = "Money for Nothing: How Firms Have Financed R\&D-projects since the Industrial Revolution",
journal = "Research Policy",
volume = "42",
number = "10",
pages = "1793-1814",
year = "2013",
issn = "0048-7333",
doi = "https://doi.org/10.1016/j.respol.2013.07.017",
url = "http://www.sciencedirect.com/science/article/pii/S004873331300156X",
author = "Gerben Bakker",
keywords = "R&D-project financing-history, R&D-financing institutions, Sunk costs, Historical R&D-project cost case studies, Britain, United States",
}

@article{blundell_etal_2008,
 ISSN = {00028282},
 URL = {http://www.jstor.org/stable/29730156},
 author = {Richard Blundell and Luigi Pistaferri and Ian Preston},
 journal = {The American Economic Review},
 number = {5},
 pages = {1887--1921},
 publisher = {American Economic Association},
 title = {Consumption Inequality and Partial Insurance},
 volume = {98},
 year = {2008}
}

@article{scherer_1982,
 ISSN = {00346535, 15309142},
 URL = {http://www.jstor.org/stable/1923947},
 author = {F. M. Scherer},
 journal = {The Review of Economics and Statistics},
 number = {4},
 pages = {627--634},
 publisher = {The MIT Press},
 title = {Inter-Industry Technology Flows and Productivity Growth},
 volume = {64},
 year = {1982}
}

@article{aghion_howitt_1998,
 ISSN = {13814338, 15737020},
 URL = {http://www.jstor.org/stable/40215977},
 author = {Peter Howitt and Philippe Aghion},
 journal = {Journal of Economic Growth},
 number = {2},
 pages = {111--130},
 publisher = {Springer},
 title = {Capital Accumulation and Innovation as Complementary Factors in Long-Run Growth},
 volume = {3},
 year = {1998}
}

@Article{griffith_etal_2004,
  author={Rachel Griffith and Stephen Redding and John Van Reenen},
  title={{Mapping the Two Faces of R\&D: Productivity Growth in a Panel of OECD Industries}},
  journal={The Review of Economics and Statistics},
  year=2004,
  volume={86},
  number={4},
  pages={883-895},
  month={November},
  keywords={},
  doi={},
  url={https://ideas.repec.org/a/tpr/restat/v86y2004i4p883-895.html}
}

@InCollection{arrow_1962,
  author={Kenneth Arrow},
  title={{Economic Welfare and the Allocation of Resources for Invention}},
  booktitle={{The Rate and Direction of Inventive Activity: Economic and Social Factors}},
  publisher={National Bureau of Economic Research, Inc},
  year=1962,
  month={January},
  volume={},
  number={},
  series={NBER Chapters},
  edition={},
  chapter={},
  pages={609-626},
  doi={},
  keywords={},
  abstract={No abstract is available for this item.},
  url={https://ideas.repec.org/h/nbr/nberch/2144.html}
}

@article{nelson_1959,
 ISSN = {00223808, 1537534X},
 URL = {http://www.jstor.org/stable/1827448},
 author = {Richard R. Nelson},
 journal = {Journal of Political Economy},
 number = {3},
 pages = {297--306},
 publisher = {University of Chicago Press},
 title = {The Simple Economics of Basic Scientific Research},
 volume = {67},
 year = {1959}
}

\clearpage

\clearpage
\appendix
\setcounter{page}{1}
\setcounter{section}{0}
\numberwithin{table}{section}
\numberwithin{figure}{section}
\Large 
\begin{center}
\textbf{Online Appendices}
\end{center}
\normalsize
\emph{Bell\'{e}go, C., Benatia, D. and V. Dortet-Bernadet (2024) The Chained Difference-in-Differences. To appear in the Journal of Econometrics.}

\small
Appendix \ref{app:mathappendix} collects all the proofs and Appendix \ref{appe:application} provides additional results for the empirical application.

\section{Mathematical Appendix}\label{app:mathappendix}

\subsection{Simple setting}

 \begin{proof}[Proof of Proposition \ref{prop: variance RSP RCS}]
 
  In order to identify and estimate the $ATT(t)$ in this simple setting, we impose the following standard assumptions.

\begin{assumption}[Sampling]\label{ass: sampling}
For all $t=1,...,\mathcal{T}$, $\left\{ Y_{it},Y_{it+1},D_{i1},D_{i2},...,D_{i\mathcal{T}} \right\}_{i=1}^{n_t}$ is independent and identically distributed (iid) conditional on $S_{t,t+1} =1 $.
\end{assumption}

\begin{assumption}[Missing Completely At Random]\label{ass: sampling on levels}
For all $t=1,...,\mathcal{T}$,
\begin{equation}
Pr(S_t=1|Y_1,...,Y_{\mathcal{T}},D_{1},...,D_{\mathcal{T}}) = Pr(S_t=1).
\end{equation}
\end{assumption}

\begin{assumption}[Unconditional Parallel Trends]\label{ass: unconditional parallel trend}
For all $t=2,...,\mathcal{T}$,
\begin{equation}
E \left[ Y_t(0) - Y_{t-1}(0) | G=1\right] = E \left[ Y_t(0) - Y_{t-1}(0) | C=1\right] \text{ } a.s..
\end{equation}
\end{assumption}

\begin{assumption}[Irreversibility of Treatment]
For all $t=2,...,\mathcal{T}$,
\begin{equation}
D_{t-1} =1 \text{ implies that } D_t = 1.
\end{equation}
\end{assumption}

\begin{assumption}[Existence of Treatment and Control Groups]\label{ass: existence of treatment groups}
\begin{equation}
P(G=1)= 1-P(C=1) \in (0,1).
\end{equation}
\end{assumption}

Substituting \eqref{eq: Yt components of variance} and rearranging yields

\begin{equation*}
\begin{aligned}
\widehat{ATT_{CD}(t)} 
& = 
\sum_{\tau=1}^{t-1}  \frac{1}{n}\sum_{i=1}^{n}\left\{ \widehat{w^G_{i\tau\tau+1}}\left(y_{i\tau+1} - y_{i\tau}\right) -   \widehat{w^C_{i\tau\tau+1}} \left(y_{i\tau+1} - y_{i\tau}\right) \right\} \\
& = 
\sum_{\tau=1}^{t-1}  \frac{1}{n}\sum_{i=1}^{n}\left\{ \widehat{w^G_{i\tau\tau+1}}\left(\delta_{\tau+1} - \delta_{\tau} + \beta_{t} + \varepsilon_{i\tau+1} - \varepsilon_{i\tau}  \right) -   \widehat{w^C_{i\tau\tau+1}} \left(\delta_{\tau+1} - \delta_{\tau} + \varepsilon_{i\tau+1} - \varepsilon_{i\tau}\right) \right\} \\
& = 
\sum_{\tau=1}^{t-1} \left[ \frac{1}{n}\sum_{i=1}^{n}\widehat{w^G_{i\tau\tau+1}} \beta_{\tau+1}    
+ \frac{1}{n}\sum_{i=1}^{n}\left(\widehat{w^G_{i\tau\tau+1}} - \widehat{w^C_{i\tau\tau+1}} \right)\left(\varepsilon_{i\tau+1} - \varepsilon_{i\tau}\right)  \right] \\
& = 
\sum_{\tau=2}^t \beta_{\tau} 
+ \sum_{\tau=1}^{t-1} \left[\frac{1}{n}\sum_{i=1}^{n}\widehat{w^G_{i\tau\tau+1}} \left(\varepsilon_{i\tau+1} - \varepsilon_{i\tau}\right) - \frac{1}{n}\sum_{i=1}^{n}\widehat{w^C_{i\tau\tau+1}} \left(\varepsilon_{i\tau+1} - \varepsilon_{i\tau}\right)  \right], \\
\end{aligned}
\end{equation*}
where the second equality follows from the fact that $w^G_{i\tau\tau+1}\neq 0$ and $w^C_{i\tau\tau+1}\neq 0$ only if $y_{i\tau+1} - y_{i\tau}$ is observed, and the fourth and fifth equalities follow from $ \sum_{i=1}^{n}\widehat{w^G_{i\tau\tau+1}} =\sum_{i=1}^{n} \widehat{w^C_{i\tau\tau+1}} = 1$. The second term in the final expression vanishes in expectations from Assumptions \ref{ass: sampling} and \ref{ass: unconditional parallel trend}. 

The second estimator is given by

\begin{equation*}
\begin{aligned}
\widehat{ATT_{CS}(t)}      
     = &  \frac{1}{n}\sum_{i=1}^{n}\left\{ \left(\widehat{w^G_{i t}}y_{it} - \widehat{w^G_{i 1}} y_{i1} \right)  -   \left(\widehat{w^C_{i t}}y_{it} - \widehat{w^C_{i 1}}y_{i1}\right)  \right\} \\
     = & \frac{1}{n}\sum_{i=1}^{n} \left( \widehat{w^G_{i t}} - \widehat{w^C_{i t}}  \right) y_{it}  -    \frac{1}{n}\sum_{i=1}^{n} \left(\widehat{w^G_{i 1}} - \widehat{w^C_{i 1}}  \right)y_{i1}   \\
     = &  \frac{1}{n}\sum_{i=1}^{n} \left( \widehat{w^G_{i t}} - \widehat{w^C_{i t}}  \right) \left(\alpha_i +\delta_t + \sum_{\tau=2}^t \beta_{\tau}D_{i\tau} +\varepsilon_{it}\right)  -    \frac{1}{n}\sum_{i=1}^{n} \left(\widehat{w^G_{i 1}} - \widehat{w^C_{i 1}}  \right)  \left(\alpha_i +\delta_1  +\varepsilon_{i1}\right)  \\
 = &   \frac{1}{n}\sum_{i=1}^{n}  \widehat{w^G_{i t}} \sum_{\tau=2}^t \beta_{\tau} 
    +    \frac{1}{n}\sum_{i=1}^{n} \left( \widehat{w^G_{i t}} - \widehat{w^C_{i t}} - \widehat{w^G_{i 1}} + \widehat{w^C_{i 1}}  \right) \alpha_i +    \frac{1}{n}\sum_{i=1}^{n} \left( \widehat{w^G_{i t}} - \widehat{w^C_{i t}}   \right) \delta_t  \\ 
    & -    \frac{1}{n}\sum_{i=1}^{n} \left( \widehat{w^G_{i 1}} - \widehat{w^C_{i 1}}   \right) \delta_1   +    \frac{1}{n}\sum_{i=1}^{n} \left( \widehat{w^G_{i t}} - \widehat{w^C_{i t}}   \right)\varepsilon_{i1}  
     -    \frac{1}{n}\sum_{i=1}^{n} \left( \widehat{w^G_{i 1}} - \widehat{w^C_{i 1}}   \right) \varepsilon_{i1}   \\
     = &   \sum_{\tau=2}^t \beta_{\tau} 
    +    \frac{1}{n}\sum_{i=1}^{n} \left( \widehat{w^G_{i t}} - \widehat{w^C_{i t}}   \right)\varepsilon_{it}  
     -    \frac{1}{n}\sum_{i=1}^{n} \left( \widehat{w^G_{i 1}} - \widehat{w^C_{i 1}}   \right) \varepsilon_{i1}  \\
      & +    \frac{1}{n}\sum_{i=1}^{n} \left( \widehat{w^G_{i t}} - \widehat{w^C_{i t}}   \right)\alpha_i  
     -    \frac{1}{n}\sum_{i=1}^{n} \left( \widehat{w^G_{i 1}} - \widehat{w^C_{i 1}}   \right) \alpha_i,  \\
\end{aligned}
\end{equation*}

where the second and third terms vanish in expectations under Assumption  \ref{ass: unconditional parallel trend}. Finally, the expectation of the last term equates zero under Assumption \ref{ass: sampling on levels}.

Following similar steps, the corresponding long DiD estimator would be given by

\begin{equation*}
\begin{aligned}
\widehat{ATT_{LD}(t)}      = 
     & = 
\sum_{\tau=2}^t \beta_{\tau} 
+ \left[\frac{1}{n}\sum_{i=1}^{n}\widehat{w^G_{i1t}} \left(\varepsilon_{it} - \varepsilon_{i1}\right) - \frac{1}{n}\sum_{i=1}^{n}\widehat{w^C_{i1t}} \left(\varepsilon_{it} - \varepsilon_{i1}\right)   \right], \\
\end{aligned}
\end{equation*}
if balanced panel data was available.

We will prove the asymptotic normality of these estimators in the general framework in Theorem \ref{theorem: asy;ptotic results}. In this proof, we only derive their asymptotic variance for the mentioned example.

\begin{equation*}
\begin{aligned}
E\left[ n\left(\widehat{ATT_{CD}(t)} - ATT(t)\right)^2 \right] & = n E\left[ \left( \sum_{\tau=1}^{t-1} \left[\frac{1}{n}\sum_{i=1}^{n}\left(\widehat{w^G_{i\tau\tau+1}} - \widehat{w^C_{i\tau\tau+1}} \right)\left(\varepsilon_{i\tau+1} - \varepsilon_{i\tau}\right)  \right]\right)^2 \right] \\
& =   \sum_{\tau=1}^{t-1} \frac{1}{n}\sum_{i=1}^{n} E\left[\left(\widehat{w^G_{i\tau\tau+1}} - \widehat{w^C_{i\tau\tau+1}} \right)^2\left(\varepsilon_{i\tau+1} - \varepsilon_{i\tau}\right)^2  \right]  \\
& =   \sum_{\tau=1}^{t-1} \frac{1}{n}\sum_{i=1}^{n} E\left[\left(\widehat{w^G_{i\tau\tau+1}} - \widehat{w^C_{i\tau\tau+1}} \right)^2 \right]E\left[\left(\varepsilon_{i\tau+1} - \varepsilon_{i\tau}\right)^2  \right],  \\
& =   \sum_{\tau=1}^{t-1} E\left[\left(\widehat{w^G_{i\tau\tau+1}} - \widehat{w^C_{i\tau\tau+1}} \right)^2 \right]E\left[\left(\varepsilon_{i\tau+1} - \varepsilon_{i\tau}\right)^2  \right],  \\
\end{aligned}
\end{equation*}
where the third equality follows from independence of $G$ and $\varepsilon_{i\tau+1} - \varepsilon_{i\tau}$. As $n\to \infty$, the weak law of large numbers implies $E\left[\widehat{w^G_{i\tau\tau+1}}^2 \right] \overset{p}{\to} \frac{1}{P(S_{ \tau}S_{ \tau+1}G=1)}$ because $S_{itt+1}a_i \in{0,1}$ and $E[S_{itt+1}a_iS_{jtt+1}a_j]=0$ for $i\neq j$. Therefore, the asymptotic variance for $n\to\infty$ is
\begin{equation*}
\begin{aligned}
E\left[ n\left(\widehat{ATT_{CD}(t)} - ATT(t)\right)^2 \right] &  =    \sum_{\tau=1}^{t-1}\left[\frac{1}{qp} + \frac{1}{q(1-p)}  \right]\left[(\rho-1)^2\sigma_\varepsilon^2 + \sigma_\eta^2\right]   \\
& =    \frac{(t-1)}{qp(1-p)}    \frac{2(1-\rho)}{1-\rho^2}\sigma_\eta^2 \\
& =    2\frac{(t-1)}{qp(1-p)}    \frac{1}{1+\rho}\sigma_\eta^2,
\end{aligned}
\end{equation*}
when assuming $P(S_{i \tau}S_{i \tau+1}) = q \in(0,1)$ for all $\tau,i$ and $P(G_i=1) = p \in(0,1)$. Note that this variance is bounded below by $\frac{1}{qp(1-p)}  (t-1) \sigma_\eta^2$
when $\rho=1$, and bounded above by
 $ \frac{1}{qp(1-p)}  2(t-1) \sigma_\eta^2$,
when $\rho=0$. Also, having a complete panel implies $q=1$ in this setting.

Similarly, the variance of the second estimator can be developed into

\begin{equation*}
\begin{aligned}
E\left[ n \left(\widehat{ATT_{CS}(t)} - ATT(t)\right)^2 \right]  =&   E\left[ \left( \widehat{w^G_{it}} - \widehat{w_{it}^C} \right)^2 \right] \sigma_{\varepsilon_t}^2 
+ E\left[ \left( \widehat{w_{i1}^G} - \widehat{w_{i1}^C} \right)^2 \right] \sigma_{\varepsilon_1}^2 \\
& +  E\left[  \left( \widehat{w^G_{i t}} - \widehat{w^C_{i t}}  \right)^2 \alpha_i^2  \right]  +  E\left[  \left(  \widehat{w^G_{i 1}} - \widehat{w^C_{i 1}}  \right)^2 \alpha_i^2  \right] \\
& -  E\left[  \left(  \widehat{w^G_{i 1}} - \widehat{w^C_{i 1}}  \right)\alpha_i  \right] E\left[\left(  \widehat{w^G_{i t}} - \widehat{w^C_{i t}}  \right) \alpha_i  \right] \\
 =&   E\left[ \left( \widehat{w^G_{it}} - \widehat{w_{it}^C} \right)^2 \right] \sigma_{\varepsilon_t}^2 
+ E\left[ \left( \widehat{w_{i1}^G} - \widehat{w_{i1}^C} \right)^2 \right] \sigma_{\varepsilon_1}^2   \\ & +  2E\left[  \left( \widehat{w^G_{i t}} - \widehat{w^C_{i t}}  \right)^2   \right]\sigma_\alpha^2, \\
  =&   \frac{1}{qp(1-p)}(0.5\sigma_{\varepsilon_t}^2 + \sigma_{\varepsilon_1}^2 + \sigma_\alpha^2), \\
\end{aligned}
\end{equation*}
where the second equality follows from the independence of outcomes and sampling as well as $G$ and $\alpha_i$. As $n\to \infty$, we have  $E\left[\widehat{w^G_{i\tau}}^2 \right] \overset{p}{\to} \frac{1}{P(S_{ \tau}G=1)} = \frac{1}{P(S_{ \tau}S_{ \tau+1}G=1)+P(S_{ \tau-1}S_{ \tau}G=1)} = \frac{1}{2pq}$ for $1<\tau<\mathcal{T}$, and $E\left[\widehat{w^G_{i\tau}}^2 \right] \overset{p}{\to} \frac{1}{P(S_{ \tau}G=1)} = \frac{1}{pq}$ for $\tau=1$ or $\tau=\mathcal{T}$, because we have two overlapping samples in each period except for the first and last.  For $t < \mathcal{T}$, the asymptotic variance as $n \to \infty$ is thus

\begin{equation*}
\begin{aligned}
E\left[n \left(\widehat{ATT_{CS}(t)} - ATT(t)\right)^2 \right]  \overset{p}{\to} &
  \frac{1}{qp(1-p)}\left(1.5  \sigma_{\varepsilon_1}^2 + \sigma_\alpha^2 +  (t-1)\sigma_\eta^2 \right), \quad \text{for $\rho=1$,}
\end{aligned}
\end{equation*}  
\begin{equation*}
\begin{aligned}
E\left[n \left(\widehat{ATT_{CS}(t)} - ATT(t)\right)^2 \right]  \overset{p}{\to} &
  \frac{1}{qp(1-p)}\left( \sigma_\alpha^2 +  1.5\frac{\sigma_\eta^2}{1-\rho^2} \right), \quad \text{for $\rho\in(0,1)$,}
\end{aligned}
\end{equation*}
and therefore
\begin{equation*}
\begin{aligned}
E\left[n \left(\widehat{ATT_{CS}(t)} - ATT(t)\right)^2 \right]  \overset{p}{\to} &
  \frac{1}{qp(1-p)}\left(\sigma_\alpha^2 +  1.5 \sigma_\eta^2  \right), \quad \text{for $\rho=0$.}
\end{aligned}
\end{equation*}

In the complete panel setting, the weights are slightly different, and similar computations give 
\begin{equation*}
\begin{aligned}
E\left[ n\left(\widehat{ATT_{CS}(t)} - ATT(t)\right)^2 \right] \overset{p}{\to}    \frac{1}{p(1-p)}   \left[(\rho^{t-1}-1)^2\sigma_{\varepsilon_1}^2 + \sigma_\eta^2 \sum_{\tau=0}^{t-2}\rho^{2\tau } + 2\sigma_{\alpha}^2\right],
\end{aligned}
\end{equation*}

Following similar steps, it is easy to show that
the asymptotic variance of the  long-DiD for $n\to\infty$ is
\begin{equation*}
\begin{aligned}
E\left[ n\left(\widehat{ATT_{LD}(t)} - ATT(t)\right)^2 \right] \overset{p}{\to}    \frac{1}{qp(1-p)}   \left[(\rho^{t-1}-1)^2\sigma_{\varepsilon_1}^2 + \sigma_\eta^2 \sum_{\tau=0}^{t-2}\rho^{2\tau }\right],
\end{aligned}
\end{equation*}
that is $\frac{1}{qp(1-p)}   2\sigma_\eta^2$ if $\rho=0$ and $\frac{1}{qp(1-p)}   (t-1)\sigma_\eta^2$ if $\rho=1$,
assuming the rotating samples consist of the same individuals over the entire time horizon. Therefore, the chained DiD estimator achieves the minimum variance, i.e. that of the long DiD estimator when idiosyncratic errors follow a random walk. The chained DiD estimator is therefore an efficient estimator in both the balanced and unbalanced panel settings when errors follow a random walk.

Comparing CD and CS variance for $\rho \in (0,1)$, we have that CD has smaller variance if $2(t-1)\sigma_\eta^2 /(1+\rho) \leq 1.5 \sigma_\eta^2/(1-\rho^2) +\sigma_\alpha^2$. If $\rho=0$, this condition becomes $\sigma_\eta^2/2 \leq \sigma_\alpha^2$ for $t=2$, and $5\sigma_\eta^2/2 \leq \sigma_\alpha^2$ for $t=6$.

 \end{proof}
 
\subsection{General framework}

\subsubsection{Proofs for rotating panel setting}
\begin{proof}[Proof of Theorem \ref{theorem: general identification}]
This proof focuses on the identification of parameters in the general framework with a rotating panel structure. Let us define $ATT_X(g,\tau) = E[Y_{\tau}(1) - Y_{\tau}(0)|X,G_g = 1]$ to write its first-difference as 
\begin{equation*}
\begin{aligned}
\Delta ATT_X(g,\tau) & = ATT_X(g,\tau) - ATT_X(g,\tau-1) \\
& = E[Y_{\tau}(1) - Y_{\tau}(0)|X,G_g = 1] - E[Y_{\tau-1}(1) - Y_{\tau-1}(0)|X,G_g = 1] \\
& =  E[Y_{\tau} - Y_{\tau-1}|X,G_g = 1,S_{\tau,\tau-1}=1] - E[Y_{\tau} - Y_{\tau-1}|X,C = 1,S_{\tau,\tau-1}=1] \\
& =  A_X(g,\tau) - B_X(g,\tau) \\
\end{aligned}
\end{equation*}
where the third equality follows from the conditional parallel trends assumption and the sampling independence. Proofs of Corollary \ref{corrolary:attritionmodel1}] and \ref{corrolary:attritionmodel2}]  show how to adjust for sample selection on observables.

We can use the above expression to develop $ATT(g,t)$ into
\begin{equation}
\begin{aligned}
ATT(g,t) & =  E\left( E[Y_{t}(1) - Y_{t}(0)|X,G_g = 1] | G_g =1  \right)  \\
& =  E\left(ATT_X(g,t) | G_g =1  \right)  \\
& =  E\left( \sum_{\tau=g}^{t} \Delta ATT_X(g,\tau) | G_g =1  \right)  \\
& =  \sum_{\tau=g}^{t} E\left(  \Delta ATT_X(g,\tau) | G_g =1  \right)  \\
& =  \sum_{\tau=g}^{t} E\left(  A_X(g,\tau) - B_X(g,\tau)  | G_g =1  \right)  \\
\end{aligned}
\end{equation}
with 
\begin{equation}
\begin{aligned}
E\left(  A_X(g,\tau) | G_g =1 \right) & = E\left( Y_{\tau} - Y_{\tau-1} |G_g \right)   \\
& = E\left( E[Y_{\tau} - Y_{\tau-1}|X,G_g ] |G_g \right)   \\
& = E\left( E[Y_{\tau} - Y_{\tau-1}|X,G_g,S_{\tau}  S_{\tau-1} ] |G_g \right)   \\
& = E\left(  E[ \frac{S_{\tau}  S_{\tau-1}}{E[S_{\tau}  S_{\tau-1}|X,G_g]} (Y_{\tau} - Y_{\tau-1})|X,G_g ] |G_g \right)   \\
& = E\left(  E[ \frac{S_{\tau}  S_{\tau-1}}{E[S_{\tau}  S_{\tau-1}|G_g]} (Y_{\tau} - Y_{\tau-1})|X,G_g ] |G_g \right)   \\
& = E\left(   \frac{S_{\tau}  S_{\tau-1}}{E[S_{\tau}  S_{\tau-1}|X,G_g]} (Y_{\tau} - Y_{\tau-1}) |G_g \right)   \\
& = E_M \big( \left( Y_{\tau} - Y_{\tau-1} \right)\frac{G_g S_{\tau}  S_{\tau-1}}{E[S_{\tau}  S_{\tau-1}|G_g]E[G_g]} \big) \\
\end{aligned}
\end{equation}
by the law of iterated expectations and the definition of $F_M$. Following the proofs of Theorem 1 and B.1 of \cite{callaway2021difference}, the second term can be developed into\footnote{We alleviate notations by dropping $=1$ from conditioning sets.}

\begin{equation}
\begin{aligned}
& E (   B_X(g,\tau) | G_g =1)  = E ( E[Y_{\tau} - Y_{\tau-1}|X,C ,S_{\tau,\tau-1}]|G_g  )   \\
& =  E ( E[\frac{S_{\tau}  S_{\tau-1}}{E[S_{\tau}  S_{\tau-1}|X,C]}(Y_{\tau} - Y_{\tau-1})|X,C]|G_g  )   \\
& =  E ( E[\frac{S_{\tau}  S_{\tau-1}}{E[S_{\tau}  S_{\tau-1}|C]}(Y_{\tau} - Y_{\tau-1})|X,C]|G_g  )   \\
& = E\left( E[ \frac{S_{\tau}  S_{\tau-1}}{E[S_{\tau}  S_{\tau-1}|C]}\frac{C }{1-P(G_g=1|X,G_g+C)} (Y_{\tau} - Y_{\tau-1})|X,G_g+C=1 ]|G_g \right)   \\
& = \frac{E\left( G_g E[ \frac{S_{\tau}  S_{\tau-1}}{E[S_{\tau}  S_{\tau-1}|C]} \frac{C}{1-P(G_g|X,G_g+C)} (Y_{\tau} - Y_{\tau-1})|X,G_g+C ]|G_g+C  \right)}{{P(G_g=1|G_g+C)}},   \\
\end{aligned}
\end{equation}
where using the definition of $p_g$ yields
\begin{equation}
\begin{aligned}
... & = \frac{E\left( G_g E[ \frac{S_{\tau}  S_{\tau-1}}{E[S_{\tau}  S_{\tau-1}|C]} \frac{C}{1-p_g(X)} (Y_{\tau} - Y_{\tau-1})|X,G_g+C ]|G_g+C  \right)}{{P(G_g=1|G_g+C)}},   \\
& = \frac{E\left( E[ \frac{S_{\tau}  S_{\tau-1}}{E[S_{\tau}  S_{\tau-1}|C]} \frac{p_g(X)C}{1-p_g(X)} (Y_{\tau} - Y_{\tau-1})|X,G_g+C ]|G_g+C  \right)}{{E(G_g|G_g+C)}},   \\
& = \frac{E \left((G_g+C) E[ \frac{S_{\tau}  S_{\tau-1}}{E[S_{\tau}  S_{\tau-1}|C]} \frac{p_g(X)C}{1-p_g(X)} (Y_{\tau} - Y_{\tau-1})|X,G_g+C ]  \right)}{{E(G_g|G_g+C)E(G_g+C)}},   \\
& = \frac{E \left((G_g+C) E[ \frac{S_{\tau}  S_{\tau-1}}{E[S_{\tau}  S_{\tau-1}|C]} \frac{p_g(X)C}{1-p_g(X)} (Y_{\tau} - Y_{\tau-1})|X,G_g+C ]  \right)}{{E(G_g)}},   \\
& = \frac{E\left( E[(G_g+C) |X] E[ \frac{S_{\tau}  S_{\tau-1}}{E[S_{\tau}  S_{\tau-1}|C]}\frac{p_g(X) C}{1-p_g(X)} (Y_{\tau} - Y_{\tau-1})|X,G_g+C]   \right)}{{E(G_g)}}   \\
& = \frac{E\left( E[ \frac{S_{\tau}  S_{\tau-1}}{E[S_{\tau}  S_{\tau-1}|C]}\frac{p_g(X) C}{1-p_g(X)} (Y_{\tau} - Y_{\tau-1})|X]  \right)}{{E(G_g )}}   \\
& = \frac{E\left(  \frac{S_{\tau}  S_{\tau-1}}{E[S_{\tau}  S_{\tau-1}|C]}\frac{p_g(X) C}{1-p_g(X)} (Y_{\tau} - Y_{\tau-1})  \right)}{{E(G_g )}}   \\
& = \frac{E\left(  \frac{E(G_g S_{\tau}  S_{\tau-1} )}{E[S_{\tau}  S_{\tau-1}|C]E(G_g )}\frac{p_g(X) C S_{\tau}  S_{\tau-1}}{1-p_g(X)} (Y_{\tau} - Y_{\tau-1})  \right)}{{E(G_g S_{\tau}  S_{\tau-1} )}}   \\
\end{aligned}
\end{equation}
with
\begin{equation}
\begin{aligned}
E_M\left(  \frac{p_g(X) C S_{\tau,\tau-1}}{1-p_g(X)} \right) & =  E_M(   \frac{E_M(G_g|X,G_g+C=1)}{E_M(C|X,G_g+C=1)}C S_{\tau,\tau-1} ) \\
& =  E_M(   \frac{E_M(G_g|X)E_M(CS_{\tau,\tau-1}|X)}{E_M(C|X)}  )         \\
& =  E_M(   \frac{E_M(G_g|X)E(S_{\tau,\tau-1}|C,X)E_M(C|X)}{E_M(C|X)}  )         \\
& =  E_M(   E_M(G_g|X)E(S_{\tau,\tau-1}|C,X)  )         \\
    & =  E( S_{\tau,\tau-1}|C  )E_M(G_g).         \\
\end{aligned}
\end{equation}

Finally, the proof that identification is not guaranteed with the repeated cross-sectional estimator (cross-section DiD) presented in Appendix B of \cite{callaway2021difference} follows from taking a counterexample. Under Assumption \ref{ass: sampling on trends 2}, it is possible that $E[Y_t|X,C=1,S_t=1] = E[Y_t|X,C=1]$ but $E[Y_t|X,G_g=1,S_t=1] = E[Y_t|X,G_g=1] + \alpha  t$ with $\alpha >0$. Following the steps of the proof it is is easy to show that the cross-section DiD  identifies
\begin{equation*}
    ATT(g,t) + \alpha(t-g+1).
\end{equation*}

\end{proof}

\begin{proof}[Proof of Theorem \ref{theorem: asy;ptotic results}]
This proof is adapted from \cite{callaway2021difference} 's Theorem 2. We proceed in 3 steps.

\paragraph{Parametric propensity scores and notations.} First, we introduce additional notations and explain Assumption 5 in \cite{callaway2021difference}  about the estimation of propensity scores.\footnote{This is a standard assumption in the literature so it is not reproduced here.} Let $\mathcal{W}_{i} = (Y_{it},Y_{it+1},X_i,G_{i1},G_{i2},...,,G_{i\mathcal{T}},C_i)'$ denote the data for an individual $i$ observed in $t$ and $t+1$.  Assumption 5 in \cite{callaway2021difference}  assumes that the propensity scores,  parametrized as $p_g(X_i) = \Lambda (X_i'\pi_g^0)$ with  $\Lambda (\cdot)$ being a known function (logit or probit), can be parametrically estimated by maximum likelihood. We denote $\hat{p}_g(X_i) = \Lambda (X_i'\hat{\pi}_g)$ where $\hat{\pi}_g$ are estimated by ML, $\dot{p}_g = \partial p_g(u)/\partial u$, and  $\dot{p}_g(X) = \dot{p}_g(X_i'\pi_g^0)$. Under this assumption, the estimated parameter $\hat{\pi}_g$ is asymptotically linear, i.e.,
\begin{equation*}
\sqrt{n}(\hat{\pi}_g - \pi_g^0) = \frac{1}{\sqrt{n}} \sum_i \xi_{g}^{\pi}(\mathcal{W}_i)  + o_p(1),
\end{equation*}
where $\xi_{g}^{\pi}(\mathcal{W}_i)$ is defined in (3.1) is \cite{callaway2021difference}  and does not depend on the sampling process since $X$ is observed for all individuals.

Let us now define,
\begin{equation}\label{eq:psi's}
\psi_{gt}(\mathcal{W}_i) = \psi_{gt}^G(\mathcal{W}_i) + \psi_{gt}^G(\mathcal{W}_i),
\end{equation}
where
\begin{equation*}
\begin{aligned}
\psi_{gt}^G(\mathcal{W}_i) = & w^G_{it,t-1}(g)\left[(Y_{it}-Y_{it-1}) - E_M\left[w^G_{it,t-1}(g)(Y_{it}-Y_{it-1})   \right]   \right], \\
\psi_{gt}^C(\mathcal{W}_i) = & w^C_{it,t-1}(g,X)\left[(Y_{it}-Y_{it-1}) - E_M\left[w^C_{it,t-1}(g,X)(Y_{it}-Y_{it-1})   \right]   \right] + M_{gt}'\xi^{\pi}_g(\mathcal{W}_i),
\end{aligned}
\end{equation*}
and 
\begin{equation*}
\begin{aligned}
M_{gt} = \frac{E_M\left[ X (\frac{C S_t S_{t-1}}{1-p_g(X)})^2 \dot{p}_g(X)\left[(Y_{it}-Y_{it-1}) - E_M\left[w^C_{it,t-1}(g,X)(Y_{it}-Y_{it-1})   \right]   \right] \right]}{E_M[\frac{p_g(X)C}{1-p_g(X)}]}.
\end{aligned}
\end{equation*}
is a $k$ dimensional vector, $k$ being the number of covariates in $X$.
Finally, let $\widehat{\Delta ATT}_{g \leq t}$ and $\Delta ATT_{g \leq t}$ denote the vectors of all $\widehat{\Delta ATT}(g,t)$ and $\Delta ATT(g,t)$ for any $2 \leq g \leq t \leq \mathcal{T}$. Similarly,  the collection of $\psi_{gt}$ across all periods and groups such that $g\leq t$ is denoted by $\Psi_{g\leq t}$.

\paragraph{Asymptotic result for $\Delta ATT$.} Second, we show the asymptotic result for $\Delta ATT$. Recall that
\begin{equation*}
\begin{aligned}
 \widehat{ATT}(g,t)  =  \sum_{\tau=g}^t  \widehat{\Delta ATT}(g,\tau),
\end{aligned}
\end{equation*}
where 
\begin{equation*}
\begin{aligned}
\widehat{\Delta ATT}(g,\tau) & = \hat{E}_M \left[ \frac{G_{g} S_{\tau-1}S_{\tau}}{ \hat{E}_M [ G_{g} S_{\tau-1}S_{\tau}]}(Y_{\tau}-Y_{\tau-1})\right] - \hat{E}_M \left[ \frac{\frac{p_g(X)C S_{\tau-1}S_{\tau}}{1-p_g(X)}} {\hat{E}_M \left[\frac{p_g(X)C S_{\tau-1}S_{\tau}}{1-p_g(X)}\right] }(Y_{\tau}-Y_{\tau-1})\right] \\
& = \widehat{\Delta ATT}_g(g,\tau) - \widehat{\Delta ATT}_C(g,\tau),
\end{aligned}
\end{equation*}
where $\hat{E}$ denotes the empirical mean. We will show separately that, for all for all $2 \leq g \leq t \leq \mathcal{T}$,
\begin{equation}\label{eq:att g}
\sqrt{n}\left( \widehat{\Delta ATT}_{g}(g,t)-\Delta ATT_{g}(g,t) \right)= \frac{1}{\sqrt{n}} \sum_i \psi_{gt}^G(\mathcal{W}_i)  + o_p(1),
\end{equation}
and
\begin{equation}\label{eq:att c}
\sqrt{n}\left( \widehat{\Delta ATT}_{C}(g,t)-\Delta ATT_{C}(g,t) \right)= \frac{1}{\sqrt{n}} \sum_i \psi_{gt}^C(\mathcal{W}_i)  + o_p(1),
\end{equation}
which together implies 
\begin{equation}\label{eq:att g-c}
\sqrt{n}\left( \widehat{\Delta ATT}(g,t)-\Delta ATT(g,t) \right)= \frac{1}{\sqrt{n}} \sum_i \psi_{gt}(\mathcal{W}_i)  + o_p(1)
\end{equation}
 and the asymptotic normality of $\sqrt{n}\left( \widehat{\Delta ATT}_{g \leq t}-\Delta ATT_{g \leq t} \right)$ follows from the multivariate central limit theorem.
 
First, we show \eqref{eq:att g}. Let $\beta_g = E_M[G_g S_{\tau-1}S_{\tau}]$ and $\hat{\beta}_g = \hat{E}_M[G_g S_{\tau-1}S_{\tau}]$ and note that
\begin{equation*}
\begin{aligned}
\sqrt{n}\left(\hat{\beta}_g-\beta_g \right) = \frac{1}{\sqrt{n}}\sum_i \left(G_{ig} S_{i\tau-1}S_{i\tau} - E[G_g S_{\tau-1}S_{\tau}] \right) \overset{p}{\to} 0, \; as \; n  \to +\infty.
\end{aligned}
\end{equation*}
Then, for all $2 \leq g \leq t \leq \mathcal{T}$,
\begin{equation*}
\begin{aligned}
\sqrt{n}( \widehat{\Delta ATT}_{g}(g,t)-& \Delta ATT_{g}(g,t) )  =  \sqrt{n} \hat{E}_M \left[ \frac{G_{g} S_{t,t-1}}{\hat{\beta}_g}(Y_{t}-Y_{t-1})\right] - \sqrt{n} E_M \left[ \frac{G_{g} S_{t,t-1}}{\beta_g}(Y_t-Y_{t-1})\right] \\
= & \frac{\sqrt{n}}{\hat{\beta}_g} ( \hat{E}_M \left[ G_{g} S_{t,t-1}(Y_{t}-Y_{t-1})\right] - \frac{\hat{\beta}_g}{\beta_g}E_M \left[ G_{g} S_{t,t-1}(Y_t-Y_{t-1})\right] ) \\
= & \frac{\sqrt{n}}{\hat{\beta}_g} ( \hat{E}_M \left[ G_{g} S_{t,t-1}(Y_{t}-Y_{t-1})\right] - E_M \left[ G_{g} S_{t,t-1}(Y_t-Y_{t-1})\right] ), \\ 
\end{aligned}
\end{equation*}
and by the continuous mapping theorem,
\begin{equation*}
\begin{aligned}
\sqrt{n}( \widehat{\Delta ATT}_{g}(g,t)-& \Delta ATT_{g}(g,t) ) = \frac{\sqrt{n}}{\beta_g} ( \hat{E}_M \left[ G_{g} S_{t,t-1}(Y_{t}-Y_{t-1})\right] - E_M \left[ G_{g} S_{t,t-1}(Y_t-Y_{t-1})\right] ) \\ 
& - \sqrt{n}\left( \frac{1}{\beta_g}-\frac{1}{\hat{\beta}_g} \right)E_M \left[ G_{g} S_{t,t-1}(Y_t-Y_{t-1})\right] + o_p(1) \\
= & \frac{\sqrt{n}}{\beta_g} ( \hat{E}_M \left[ G_{g} S_{t,t-1}(Y_{t}-Y_{t-1})\right] - E_M \left[ G_{g} S_{t,t-1}(Y_t-Y_{t-1})\right] ) \\ 
& -  \frac{\sqrt{n}\left(\hat{\beta}_g-\beta_g\right)}{\beta_g^2} E_M \left[ G_{g} S_{t,t-1}(Y_t-Y_{t-1})\right] + o_p(1) \\
= & \frac{\sqrt{n}}{\beta_g} ( \hat{E}_M \left[ G_{g} S_{t,t-1}(Y_{t}-Y_{t-1})\right] - \frac{\hat{\beta}_g}{\beta_g}  E_M \left[ G_{g} S_{t,t-1}(Y_t-Y_{t-1})\right] )  + o_p(1) \\
= & \frac{1}{\sqrt{n}} \sum_i G_{ig}\frac{  S_{it,t-1}(Y_{it}-Y_{it-1}) -  \Delta ATT(g,t)}{\beta_g}  + o_p(1) \\
= & \frac{1}{\sqrt{n}} \sum_i w^G_{it,t-1}(g)\left[(Y_{it}-Y_{it-1}) - E_M\left[w^G_{t,t-1}(g)(Y_{t}-Y_{it-1})   \right]   \right]  + o_p(1) \\
= & \frac{1}{\sqrt{n}} \sum_i \psi_{gt}^G(\mathcal{W}_i)  + o_p(1), \\
\end{aligned}
\end{equation*}
proving \eqref{eq:att g}. 

Let us now turn to \eqref{eq:att c}. For an arbitrary function $g$, let 
\begin{equation*}
    w_{t}(g) = \frac{g(X)CS_{t,t-1}}{1-g(X)}
\end{equation*}
and note that
\begin{equation*}
\begin{aligned}
\sqrt{n}( \widehat{\Delta ATT}_{C}(g,t)-& \Delta ATT_{C}(g,t) )  =  \sqrt{n} \left( \hat{E}_M \left[  \frac{w_{t}(\hat{p}_g)}{\hat{E}_M [  w_{t}(\hat{p}_g)]} (Y_{t}-Y_{t-1})\right] -  E_M \left[  \frac{w_{t}(p_g)}{E_M [  w_{t}(p_g)]} (Y_{t}-Y_{t-1})\right] \right) \\
  = &  \frac{\sqrt{n}}{\hat{E}_M [  w_{t}(\hat{p}_g)]} \left( \hat{E}_M \left[  w_{t}(\hat{p}_g) (Y_{t}-Y_{t-1})\right] -  \frac{\hat{E}_M [  w_{t}(\hat{p}_g)]}{E_M [  w_{t}(p_g)]}  E_M \left[  w_{t}(p_g) (Y_{t}-Y_{t-1})\right] \right) \\
  = &  \frac{\sqrt{n}}{\hat{E}_M [  w_{t}(\hat{p}_g)]} \left( \hat{E}_M \left[  w_{t}(\hat{p}_g) (Y_{t}-Y_{t-1})\right] -  E_M \left[  w_{t}(p_g) (Y_{t}-Y_{t-1})\right] \right) \\
 & -    \frac{ E_M \left[  w_{t}(p_g) (Y_{t}-Y_{t-1})\right]}{\hat{E}_M [  w_{t}(\hat{p}_g)]E_M [  w_{t}(p_g)]} \sqrt{n} (\hat{E}_M \left[ w_{t}(\hat{p}_g) \right]  - \hat{E}_M \left[ w_{t}(p_g) \right]) \\
 = &  \frac{1 }{\hat{E}_M [  w_{t}(\hat{p}_g)]}\sqrt{n} A_n(\hat{p}_g) -  \frac{\Delta ATT_C(g,t)}{\hat{E}_M [  w_{t}(\hat{p}_g)]}\sqrt{n} B_n(\hat{p}_g) \\
  = &  \frac{1 }{E_M [  w_{t}(p_g)]}\sqrt{n} A_n(\hat{p}_g) -  \frac{\Delta ATT_C(g,t)}{E_M [  w_{t}(p_g)]}\sqrt{n} B_n(\hat{p}_g) +o_p(1), \\
\end{aligned}
\end{equation*}

where the last equality follows directly from Assumption 5, which implies Lemma A.2 and Lemma A.3 in \cite{callaway2021difference}.
Applying the mean value theorem yields
\begin{equation*}
\begin{aligned}
A_n(\hat{p}_g) = & \hat{E}_M \left[  w_{t}(p_g) (Y_{t}-Y_{t-1})\right] - E_M \left[  w_{t}(p_g) (Y_{t}-Y_{t-1})\right] \\
& + \hat{E}_M \left[X\frac{C S_{t,t-1}}{(1-p_g(X;\overline{\pi}))^2} \dot{p}_g(X;\overline{\pi}) (Y_{t}-Y_{t-1}) \right]'\left(\hat{\pi}_g-\pi^0_g \right),
\end{aligned}
\end{equation*}
where $\overline{\pi}$ is an intermediate point that satisfies $|\overline{\pi}_g\pi_g^0|\leq |\hat{\pi}_g\pi_g^0|$ a.s. Thus, by Assumption 5, the previously mentioned Lemmas, and the Classical Glivenko-Cantelli's theorem, 
\begin{equation*}
\begin{aligned}
A_n(\hat{p}_g) = & \hat{E}_M \left[  w_{t}(p_g) (Y_{t}-Y_{t-1})\right] - E_M \left[  w_{t}(p_g) (Y_{t}-Y_{t-1})\right] \\
& + \hat{E}_M \left[X\frac{C S_{t,t-1}}{(1-p_g(X))^2} \dot{p}_g(X) (Y_{t}-Y_{t-1}) \right]'\left(\hat{\pi}_g-\pi^0_g \right) + o_p(n^{-1/2}),
\end{aligned}
\end{equation*}
and using the same reasoning we obtain
\begin{equation*}
\begin{aligned}
B_n(\hat{p}_g) = & \hat{E}_M \left[  w_{t}(p_g) \right] - E_M \left[  w_{t}(p_g) \right] \\
& + \hat{E}_M \left[X\frac{C S_{t,t-1}}{(1-p_g(X))^2} \dot{p}_g(X) \right]'\left(\hat{\pi}_g-\pi^0_g \right) + o_p(n^{-1/2}).
\end{aligned}
\end{equation*}
Combining the above results and making use of the same Lemma yields \eqref{eq:att g-c} hence concludes the proof for $\Delta ATT$. The asymptotic covariance is given by $\Sigma_{\Delta} = E \left[\Psi_{g\leq\tau}(\mathcal{W}_i) \Psi_{g\leq\tau}(\mathcal{W}_i)' \right]$.

\paragraph{Asymptotic result for $ATT$.} 
Finally, by making use of \eqref{eq:att g-c} we have that 
\begin{equation*}
\begin{aligned}
 \sqrt{n}\left( \widehat{ ATT}(g,t)- ATT(g,t) \right)  =  &  \sum_{\tau=g}^t  \sqrt{n}\left( \widehat{\Delta ATT}(g,\tau)-\Delta ATT(g,\tau) \right) \\
= &   \frac{1}{\sqrt{n}} \sum_i \left[\sum_{\tau=g}^t \psi_{g\tau}(\mathcal{W}_i)\right]  + o_p(1) \\
& \overset{d}{\to}N(0,\Sigma),
\end{aligned}
\end{equation*}
where $\Sigma = E_M \left[\Phi_{g\leq\tau}(\mathcal{W}_i)\Phi_{g\leq\tau}(\mathcal{W}_i)' \right]$ with $ \Phi_{g\leq\tau}(\mathcal{W}_i) = \sum_{\tau=g}^t \Psi_{g\leq\tau}(\mathcal{W}_i)$.

Therefore, the influence function of the chained DiD estimator corresponds to the sum of influence functions of the short-term DiD estimator.

\end{proof}

\subsubsection{Bootstrap implementation for rotating panel setting}\label{sec:bootstrap}

    \paragraph{Bootstrapped confidence bands for $\widehat{ATT}(g,t)$.} The algorithm is as follows:
\begin{enumerate}
    \item Draw a vector of $\mathbf{V^{b}}=(V_1,...,V_i,...,V_n)'$, where $V_i$'s are iid zero mean random variables with unit variance, such as Bernoulli random variables with $Pr(V=1-\kappa)=\kappa/\sqrt{5}$ with $\kappa = (\sqrt{5}+1)/2$ as suggested by Mammen (1993).
    \item Compute the bootstrap draw $\widehat{ATT}_{g \leq t}^{\star b} =\widehat{ATT}_{g \leq t} + \hat{\Phi}_{g\leq\tau} \mathbf{V^{b}}  $ where $\hat{\Phi}_{g\leq\tau}$ is a consistent estimator of $\Phi_{g\leq\tau}$ (see below).
    \item Compute $\hat{R}^{\star b}(g,t) = \sqrt{n}(\widehat{ATT}^{\star b}(g,t)-\widehat{ATT}(g,t))$ for each element of the vector $\widehat{ATT}_{g \leq t}^{\star b}$.
    \item Repeat steps 1-3 $B$ times. Note: do not re-estimate propensity scores and parameters for each draw.
    \item Compute the bootstrapped covariance for each $(g,t)$ as $\hat{\Sigma}^{1/2}(g,t) = (q_{0.75}(g,t)-q_{0.25}(g,t))/(z_{0.75}-z_{0.25})$, where $q_p(g,t)$ is the $p^{th}$ sample quantile of $\hat{R}^{\star}$ (across B draws) and $z(g,t)$ is the $p^{th}$ quantile the standard normal distribution.
    \item For each $b$, compute $\text{t-stat}_{g\leq t}^b = \max_{(g,t)} |\hat{R}^{\star b}(g,t) |\hat{\Sigma}^{-1/2}(g,t)$.
    \item Construct $\hat{c}_{1-\alpha}$ as the empirical $(1-\alpha)$ quantile of the B boostrap draws of $\text{t-stat}_{g\leq t}^b$.
    \item Construct the bootstrapped simultaneous confidence band for $ATT(g,t)$ as $\hat{C}(g,t) = [\widehat{ATT}(g,t)\pm \hat{c}_{1-\alpha}\hat{\Sigma}^{-1/2}(g,t)/\sqrt{n}]$.
\end{enumerate}

This procedure requires to compute $\hat{\Phi}_{g\leq\tau}$, represented here as a $K\times n$ matrix, with $n$ being the number of observations and $K = \frac{\mathcal{T}(\mathcal{T}-1)}{2}$ being the number of $(g,t)$ element for any $2\leq g \leq t \leq \mathcal{T}$. This is done as follows:
\begin{enumerate}
    \item For every $(g,t)$, compute the n-dimensional vector $\psi_{gt}$ with $i^{th}$ element defined as $\psi_{gt}(i) = \psi_{gt}^G(i) + \psi_{gt}^C(i)$, where
    \begin{equation*}
\begin{aligned}
\psi_{gt}^G(i) = & w^G_{it,t-1}(g)\left[(Y_{it}-Y_{it-1}) - E_M\left[w^G_{t,t-1}(g)(Y_{t}-Y_{t-1})   \right]   \right], \\
\psi_{gt}^C(i) = & w^C_{it,t-1}(g,X)\left[(Y_{it}-Y_{it-1}) - E_M\left[w^C_{t,t-1}(g,X)(Y_{t}-Y_{t-1})   \right]   \right] + M_{gt}'\xi^{\pi}_g(i),
\end{aligned}
\end{equation*}
where $M_{gt}$, a $k$ dimensional vector ($k$ being the number of covariates in $X$), is defined as
\begin{equation*}
\begin{aligned}
M_{gt} = \frac{E\left[ X (\frac{C S_t S_{t-1}}{1-p_g(X)})^2 \dot{p}_g(X)\left[(Y_{it}-Y_{it-1}) - E\left[w^C_{t,t-1}(g,X)(Y_{t}-Y_{it-1})   \right]   \right] \right]}{E[\frac{p_g(X)C}{1-p_g(X)}]},
\end{aligned}
\end{equation*}
with $\hat{p}_g(X_i) = \Lambda (X_i'\hat{\pi}_g)$ being the parametric propensity score for covariates $X_i$ and $\dot{p}_g(X) = \partial \Lambda (X_i'\hat{\pi}_g)/\partial (X_i'\hat{\pi}_g)$. Furthermore,  $\xi^{\pi}_g(i)$ is a k-dimensional vector for each observation $i$, and is given by
\begin{equation*}
\begin{aligned}
\xi^{\pi}_g(i) = E_M\left[\frac{(G_g+C)\dot{p}_g(X)^2}{p_g(X_i)(1-p_g(X_i))}XX'\right]^{-1}X_i \frac{(G_g+C)(G_g-p_g(X_i))\dot{p}_g(X_i)}{p_g(X_i)(1-p_g(X_i))}.
\end{aligned}
\end{equation*}
\item Compute $\phi_{gt} = \sum_{\tau=g}^t \psi_{g\leq \tau}$ for all $2\leq g \leq t \leq \mathcal{T}$.
\item Concatenate all $\phi_{gt}$'s into a $K\times n$ matrix $\Phi_{g\leq t}$.
\end{enumerate}


\subsubsection{Summary parameters for rotating panel setting}\label{sec:summaryparam}
The group-time average treatment effect $ATT(g,t)$ consists of a building-block to study the dynamic effect of a treatment on different cohorts of treated individuals. In most applications, the main causal parameters of interest are not the $ATT(g,t)$ themselves but aggregate parameters of these building-blocks. In this section, we briefly mention the three main parameters of interest as proposed in \cite{callaway2021difference}, and show how the asymptotic results and multiplier bootstrap adapt to our setting.

\paragraph{Selective timing.}
The causal effect of a policy on the cohort treated in $g$ is given by

\begin{equation*}
    \theta_{S}(g) =  \frac{1}{\mathcal{T}-g+1}\sum_{t=g}^{\mathcal{T}}ATT(g,t),
\end{equation*}

and thus, an average causal effect across groups can be written as
\begin{equation*}
    \theta_{S}= \sum_{g=2}^{\mathcal{T}}\theta_{S}(g)Pr(G=g).
\end{equation*}

\paragraph{Dynamic treatment.}

In the presence of dynamic effects, the researcher may be interested in accounting for the length of exposure to the treatment. The causal effects of an exposure length $e \in \{0,1,2,... \}$ across groups is defined as
\begin{equation*}
    \theta_{D}(e) = \sum_{g=2}^{\mathcal{T}}\sum_{t=g+e}^{\mathcal{T}} ATT(g,t)Pr(G=g|t=g+e),
\end{equation*}
and therefore an average across exposure lengths is given by
\begin{equation*}
     \theta_{D} = \frac{1}{\mathcal{T}-1}\sum_{e=1}^{\mathcal{T}-1} \theta_{D}(e).
\end{equation*}

\paragraph{Calendar time.}

In some applications, the researcher may be interested in how treatment effects differ with calendar time. Let us consider 
\begin{equation*}
    \theta_{C}(t) = \sum_{g=2}^{t}ATT(g,t)Pr(G=g|g\leq t),
\end{equation*}
and therefore an average across exposure lengths is given by
\begin{equation*}
    \theta_{C} =   \frac{1}{\mathcal{T}-1}\sum_{t=2}^{\mathcal{T}}\theta_{C}(t),
\end{equation*}

The difference between $ \theta_{S}$, $ \theta_{D}$ and $ \theta_{C}$ is that the second and third attribute more weight to the groups with, respectively, longer exposure lengths, and treated in the earliest periods.

The asymptotic results and bootstrap procedure directly apply to the summary parameters. The following corollary summarizes these results.

\begin{corollary}\label{corollary: summary parameters } Under the Assumptions of Theorem \ref{theorem: asy;ptotic results}, for all parameters $\theta$ defined above, including those indexed by some variable, we have 
\begin{equation*}
    \sqrt{n}(\hat{\theta} - \theta) \overset{d}{\to} N(0,\Sigma_{\theta}),
\end{equation*}
as $n\to \infty$, where  $\Sigma_{\theta}$ is defined in the proof and the bootstrap procedure is defined.
\end{corollary}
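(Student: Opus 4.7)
The plan is to obtain each summary parameter's asymptotic distribution by expressing it as a smooth function of the $ATT(g,t)$'s jointly with a finite collection of cohort/sampling probabilities, all of which are either nonrandom (in the case of $ATT(g,t)$, which we replace by $\widehat{ATT}(g,t)$ using Theorem \ref{theorem: asy;ptotic results}) or can be estimated by sample means. Concretely, I would stack every $ATT(g,t)$ for $2\le g\le t\le\mathcal{T}$ together with every group probability $p_g=P(G=g)$ (and, when needed, conditional variants such as $P(G=g\mid t=g+e)$ and $P(G=g\mid g\le t)$) into one vector $\vartheta$. Each target $\theta_S$, $\theta_S(g)$, $\theta_D$, $\theta_D(e)$, $\theta_C$, $\theta_C(t)$ can then be written as $\theta=h(\vartheta)$ with $h$ a smooth (indeed rational) function on a neighbourhood of the true value, since the conditional probabilities are ratios of cohort probabilities.

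First I would establish the joint asymptotic linear representation of $\hat{\vartheta}$. The $\widehat{ATT}(g,t)$ part is already handled by Theorem \ref{theorem: asy;ptotic results}, with influence function $\Phi_{g\le t}(\mathcal{W}_i)$ defined in its proof. The estimated probabilities are plain sample averages: for instance, $\widehat{P(G=g)}=n^{-1}\sum_i G_{ig}$ has influence function $G_{ig}-P(G=g)$, and conditional probabilities like $P(G=g\mid t=g+e)$ are ratios of such averages with trivially derived influence functions via the same continuous mapping argument used in the proof of Theorem \ref{theorem: asy;ptotic results} (e.g.\ as in the derivation of \eqref{eq:att g}). Stacking these gives a joint influence function $\Upsilon(\mathcal{W}_i)$, so that
\begin{equation*}
\sqrt{n}(\hat{\vartheta}-\vartheta)=\frac{1}{\sqrt{n}}\sum_{i=1}^n \Upsilon(\mathcal{W}_i)+o_p(1)\overset{d}{\to}N(0,V),\quad V=E_M[\Upsilon(\mathcal{W}_i)\Upsilon(\mathcal{W}_i)'],
\end{equation*}
by the multivariate central limit theorem under Assumption \ref{ass: sampling 2}.

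Next I would apply the delta method: since each summary parameter is a continuously differentiable function of $\vartheta$ (the gradient at the true value exists because all denominators are bounded away from zero by Assumption \ref{ass: existence of treatment groups 2}), we obtain $\sqrt{n}(\hat{\theta}-\theta)\overset{d}{\to}N(0,\Sigma_\theta)$ with $\Sigma_\theta=\nabla h(\vartheta)'\,V\,\nabla h(\vartheta)$. Equivalently, the influence function of $\hat\theta$ is $\nabla h(\vartheta)'\Upsilon(\mathcal{W}_i)$, which can be written out explicitly; for $\theta_S(g)$ and $\theta_C(t)$ it is simply a fixed linear combination of the $\Phi_{g\le t}$'s since the weights $1/(\mathcal{T}-g+1)$ are deterministic, and the only nontrivial gradient contributions arise from the probability weights in $\theta_S$, $\theta_D$, $\theta_D(e)$, and $\theta_C$.

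For the multiplier bootstrap extension, I would show that replacing $\widehat{ATT}(g,t)$ and $\hat{\vartheta}$ by their bootstrap counterparts $\widehat{ATT}^{\star b}(g,t)$ and $\hat{\vartheta}^{\star b}$ (using $\hat{\Upsilon}\,\mathbf{V}^b$ as the additive perturbation, where $\hat{\Upsilon}$ is the sample-analog of $\Upsilon$) and then plugging into $h$ delivers bootstrap draws whose distribution consistently estimates that of $\sqrt{n}(\hat\theta-\theta)$. This follows exactly as in the proof of asymptotic validity of the multiplier bootstrap in \citet{callaway2021difference}, combined with the delta method for the bootstrap (Theorem 23.9 of van der Vaart, 2000), which is applicable since $h$ is continuously differentiable at $\vartheta$. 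The main potential obstacle is bookkeeping the joint influence function cleanly when conditional probabilities appear (for $\theta_D(e)$ and $\theta_C(t)$), but since all weights are smooth functions of unconditional probabilities bounded away from zero, this reduces to a mechanical application of the delta method with no substantive new ingredient beyond Theorem \ref{theorem: asy;ptotic results}.
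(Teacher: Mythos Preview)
Your approach is correct and essentially equivalent to the paper's, though the presentations differ slightly. The paper exploits the bilinear structure directly: writing $\theta=\sum_{g,t} w_{gt}\,ATT(g,t)$ with (possibly random) weights $w_{gt}$, it assumes each $\hat w_{gt}$ admits an asymptotically linear representation with influence function $\xi^w_{gt}(\mathcal{W}_i)$, then combines this with the influence function $\sum_{\tau=g}^t\psi_{g\tau}(\mathcal{W}_i)$ of $\widehat{ATT}(g,t)$ from Theorem~\ref{theorem: asy;ptotic results} via a product-rule decomposition to obtain
\[
l^w(\mathcal{W}_i)=\sum_{g,t}\Big\{w_{gt}\sum_{\tau=g}^t\psi_{g\tau}(\mathcal{W}_i)+\xi^w_{gt}(\mathcal{W}_i)\,ATT(g,t)\Big\},
\]
and $\Sigma_\theta=E_M[l^w(\mathcal{W})^2]$. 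Your route---stacking the $ATT(g,t)$'s and the cohort probabilities into $\vartheta$, deriving a joint influence function $\Upsilon$, and applying the delta method to $h(\vartheta)$---yields exactly the same influence function once you compute $\nabla h(\vartheta)'\Upsilon$, since the gradient of a bilinear form reproduces the product rule. The paper's framing is marginally more direct for this specific structure and makes the bootstrap influence function immediately explicit; your framing is a touch more general but requires tracking the gradient.

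One small slip: you list $\theta_C(t)$ among the parameters with deterministic weights, but $\theta_C(t)=\sum_{g=2}^t ATT(g,t)\,P(G=g\mid g\le t)$ involves random conditional probabilities, so it too has a nontrivial weight-estimation contribution to its influence function. This does not affect your argument, only the bookkeeping.
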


\begin{proof}[Proof of Corollary \ref{corollary: summary parameters }]
All summary parameters defined in the text can be generically written as
\begin{equation*}
    \theta = \sum_{g=2}^{\mathcal{T}} \sum_{t=2}^{\mathcal{T}} w_{gt} ATT(g,t),
\end{equation*}
where $w_{gt}$ are some random weights. Estimators can be defined as
\begin{equation*}
    \hat{\theta} = \sum_{g=2}^{\mathcal{T}} \sum_{t=2}^{\mathcal{T}} \hat{w}_{gt} \widehat{ATT}(g,t),
\end{equation*}

where estimated weights are such that 
\begin{equation*}
    \sqrt{n}(\hat{w}_{gt} - w_{gt}) = \frac{1}{\sqrt{n}}\sum_{i=1}^n \xi_{gt}^w(\mathcal{W}_i) + o_p(1),
 \end{equation*}
with first and second moments given by $E[\xi_{gt}^w(\mathcal{W})] = 0$ and $E[\xi_{gt}^w(\mathcal{W})\xi_{gt}^w(\mathcal{W})']$ finite and positive definite. This condition is satisfied by the sample analogs of weights appearing in the summary parameters $\theta$'s presented in the main text.
The application of Theorem \ref{theorem: asy;ptotic results} yields
\begin{equation*}
\begin{aligned}
    \sqrt{n}(\hat{\theta}-\theta)  = & \frac{1}{\sqrt{n}}\sum_{i=1}^n l^w(\mathcal{W}_i) + o_p(1) \\
    & \overset{d}{\to} N(0,E[l^w(\mathcal{W})^2])
    \end{aligned}
 \end{equation*}
 as $n\to \infty$, and where
\begin{equation*}
\begin{aligned}
    l^w(\mathcal{W}_i)  = & \sum_{g=2}^{\mathcal{T}} \sum_{t=2}^{\mathcal{T}}  \left\{ w_{gt}\sum_{\tau=g}^t\psi_{g\tau}(\mathcal{W}_i)   + \xi_{gt}^w(\mathcal{W}_i) ATT(g,t) \right\},
    \end{aligned}
 \end{equation*}
 for $\psi_{gt}$ defined in \eqref{eq:psi's} and $\xi_{gt}^w$ correspond to the estimation errors of weights. The same bootstrap procedure can hence be used for $\hat{\theta}$ using a consistent estimate of the influence function $l^w$.
 \end{proof}

\subsubsection{Bootstrap for summary parameters}\label{bootstrapsummary} All estimators of summary parameters defined in the text can be generically written as
\begin{equation*}
    \hat{\theta} = \sum_{g=2}^{\mathcal{T}} \sum_{t=2}^{\mathcal{T}} \hat{w}_{gt} \widehat{ATT}(g,t),
\end{equation*}
where the weights $\hat{w}_{gt}$'s are possibly random. In simple settings they are not. For instance, let us consider
\begin{equation*}
    \theta_s(h) = \sum_{g=2}^T\sum_{t=2}^T w_{gt} ATT(g,t),
 \end{equation*}
 where $w_{gt} = \frac{1}{T-g+1}$ for $t\geq g$ and $g=h$, and $0$ otherwise.  The algorithm is as follows:
\begin{enumerate}
    \item Draw a vector of $\mathbf{V^{b}}=(V_1,...,V_i,...,V_n)'$, where $V_i$'s are iid zero mean random variables with unit variance, such as Bernoulli random variables with $Pr(V=1-\kappa)=\kappa/\sqrt{5}$ with $\kappa = (\sqrt{5}+1)/2$ as suggested by Mammen (1993).
    \item Compute the bootstrap draw $\hat{\theta}^{\star b} =\hat{\theta} + \hat{L}' \mathbf{V^{b}}  $ where $\hat{L}$ is a consistent estimator of the n-dimensional vector $L$ with $i^{th}$ element given by
    \begin{equation*}
    L(i) = \sum_{g=2}^T\sum_{t=2}^T w_{gt} \phi_{gt}(i),
 \end{equation*}
 where $\phi_{gt}(i)$ is defined in the previous algorithm.
    \item Compute $\hat{R}^{\star b} = \sqrt{n}(\hat{\theta}^{\star b}-\hat{\theta})$.
    \item Repeat steps 1-3 $B$ times.
    \item Compute the bootstrapped covariance as $\hat{\Sigma}^{1/2} = (q_{0.75}-q_{0.25})/(z_{0.75}-z_{0.25})$, where $q_p$ is the $p^{th}$ sample quantile of $\hat{R}^{\star}$ (across B draws) and $z$ is the $p^{th}$ quantile of the standard normal distribution.
    \item For each $b$, compute $\text{t-stat}_{g\leq t}^b = \max_{(g,t)} |\hat{R}^{\star b}(g,t) |\hat{\Sigma}^{-1/2}(g,t)$.
    \item Construct $\hat{c}_{1-\alpha}$ as the empirical $(1-\alpha)$ quantile of the B boostrap draws $\text{t-stat}^b$.
    \item Construct the bootstrapped confidence interval for $\theta$ as $\hat{C} = [\hat{\theta}\pm \hat{c}_{1-\alpha}\hat{\Sigma}^{-1/2}/\sqrt{n}]$.
\end{enumerate}
 
If the weights $w_{gt}$'s are random, the influence function is changed to
\begin{equation*}
    L(i) = \sum_{g=2}^T\sum_{t=2}^T w_{gt}  \phi_{gt}(i) + \gamma_{gt}^w(i) ATT(g,t),
 \end{equation*}
 where $\gamma_{gt}^w(i)$ is an error function. For example, let us consider
 \begin{equation*}
    \theta_s = \sum_{g=2}^T\sum_{t=2}^T w_{gt} ATT(g,t),
 \end{equation*}
 where $w_{gt} = P(G=g)\frac{1}{T-g+1}$ for $t\geq g$ , and $0$ otherwise. Define $\hat{w}_{gt} =  \frac{1}{T-g+1} \frac{1}{n} \sum_{i=1}^n G_i$ for $t\geq g$ and $0$ otherwise. A consistent estimator of the error function is given by
  \begin{equation*}
    \hat{\gamma}_{gt}^w(i) = \frac{1}{T-g+1}\left(G_i - \frac{1}{n}\sum_{i=1}^n G_i\right).
 \end{equation*}

\subsubsection{Not yet treated as the control group}\label{sec:notyettreated}
In the previous model, we assumed the existence of a true control group, i.e.  a group of individuals that are never treated. In many applications, this situation is not realistic. Instead, the researcher can use the individuals that are ``not yet treated'', that is treated in $g>t$ to define a control group. The extension to this setting being developed in length in \cite{callaway2021difference}, we only explain how it applies to the chained DiD.

Most importantly, the parallel trend assumption is modified to
\begin{assumption}[Conditional Parallel Trends]\label{ass: unconditional parallel trend 3}
For all $t=2,...,\mathcal{T}$, $g=2,...,\mathcal{T}$, such that $g \leq t$,
\begin{equation}
E \left[ Y_t(0) - Y_{t-1}(0) | X,G_g=1\right] = E \left[ Y_t(0) - Y_{t-1}(0) | X,D_t=0\right] \text{ } a.s..
\end{equation}
\end{assumption}

Following minor modifications to Theorem C.1. in \cite{callaway2021difference}, using the ``not yet treated'' as the control group only changes the weight $w^C_{\tau\tau-1}(g,X)$ used in our Theorem \ref{theorem: general identification} to  
\begin{equation*}
w^C_{\tau\tau-1}(g,X) = \frac{p_{g,t}(X)(1-D_t) S_{\tau,\tau-1}}{1-p_{g,t}(X)}/E_M[\frac{p_{g,t}(X)(1-D_t) S_{\tau,\tau-1}}{1-p_{g,t}(X)}].
\end{equation*}
We observe two changes. First, the binary variable $C$ becomes $1-D_t$. Second, generalized propensity score is now also a function of $t$: $p_{g,t}(X)=P(G_g=1|X,(G_g=1 \cup D_t=1))$. The propensity scores must hence be estimated for pairs $(g,t)$ because the control group evolves through time. The asymptotic properties of the two-step estimator remain similar, with minor changes to the asymptotic covariance.

\subsubsection{General missing data patterns}\label{sec:proofgeneralmissingdata}

Under Assumption \ref{ass:generalmissing sampling 2}, the data generating process consists of random draws from the  mixture distribution $F_M(y,y',g_1, ...,g_{\mathcal{T}}, c,s_1,...,s_{\mathcal{T}},x)$ defined as
\begin{equation*}
\begin{aligned}
& \sum_{t=1}^{\mathcal{T}} \lambda_{t-k,t} F_{Y_{t-k},Y_{t},G_1,...,G_{\mathcal{T}},C,X|S_{t-k,t}}(y_{t-k},y_{t},g_1,...,g_{\mathcal{T}},C,X|s_{t-k,t}=1),
\end{aligned}
\end{equation*}
where $\lambda_{t-k,t}=P(S_{t-k,t} = 1)$ is the probability of being sampled in both $t$ and $t-k$, $y$ and $y'$ denote the outcome $y_{t-k}$ and $y_{t}$, respectively, for an individual sampled at $t-k$ and $t$. Again, expectations under the mixture distribution does not correspond to population expectations.  This difference arises because of different sampling probabilities across time periods and because Assumption \ref{ass:generalmissing sampling on trends 2} does not preclude from some forms of dependence between the sampling process and the unobservable heterogeneity in $Y_{it}$.

\begin{proof}[Proof of Theorem \ref{theorem: general asymptotic results}] 

Let us define the vector of parameters $\Theta = \mathbf{ATT}$ that includes all $\theta_{\tau} = ATT(\tau)$, for all $\tau>1$ since $\theta_{1} = ATT(1) = 0$ by construction. The inverse problem in \eqref{eq:inverse problem1} corresponds to the set of moment equalities
\begin{equation}
    E_M\left[h_i( \mathcal{W}_i |\Theta)\right] = 0_{L_{\Delta}},
\end{equation}
where $h_i( \mathcal{W}_i |\Theta)$ is a $L_{\Delta}$-dimensional vector of which each element is defined by
\begin{equation}\label{eq:brick general}
    \left[
         w^G_{i\tau\tau-k}(g)\left(Y_{i\tau} - Y_{i\tau-k}\right) -  w^C_{i\tau\tau-k}(g,X)  \left(Y_{i\tau} - Y_{i\tau-k}\right) - \theta_{\tau} + \theta_{\tau-k} \right],
\end{equation}
possibly for all $\tau \geq 2$, and $1 \leq k<\tau$, with the weights 
\begin{equation*}
w^G_{\tau\tau-k}(g) = \frac{G_g S_{\tau,\tau-k}}{E_M[G_g S_{\tau,\tau-k}]}
\end{equation*}
and 
\begin{equation*}
w^C_{\tau\tau-k}(g,X) = \frac{p_g(X)C S_{\tau,\tau-k}}{1-p_g(X)}/E_M[\frac{p_g(X)C S_{\tau,\tau-k}}{1-p_g(X)}].
\end{equation*}

The previous asymptotic results in Theorem \ref{theorem: general identification} and \ref{theorem: asy;ptotic results} apply to \eqref{eq:brick general} up to minor modifications under Assumptions \ref{ass: unconditional parallel trend 2} - \ref{ass: existence of treatment groups 2}, a standard assumption on the parametric estimates of the propensity scores (Assumption 5 in \cite{callaway2021difference}  or 4.4 in \cite{abadie_2005}), and Assumptions \ref{ass:generalmissing sampling 2} and \ref{ass:generalmissing sampling on trends 2} so that we can safely assume that consistency and  asymptotic normality holds for $\widehat{\bm{\Delta ATT}}$ as $n\to\infty$  with covariance $\Omega$, which is defined later. Here, we focus on the aspects of the proofs which differ, namely the optimal combination of each ``chain link'' using GMM. The optimal GMM estimator consists in minimizing
\begin{equation}
     \hat{E}_M\left[h_i( \mathcal{W}_i |\Theta)\right]'\Omega^{-1}\hat{E}_M\left[h_i( \mathcal{W}_i |\Theta)\right],
\end{equation}
with respect to $\Theta$, using the optimal weighting matrix $\Omega^{-1}$ which corresponds the inverse of the covariance of $h_i$ \citep{hansen1982large}, hence that of $\bm{\Delta ATT}$. Let us rewrite this problem as
\begin{equation}
     \max\limits_{\bm{ ATT}} \quad -(\bm{\widehat{\Delta ATT}}-W\bm{ ATT})'\Omega^{-1}(\bm{\widehat{\Delta ATT}}-W\bm{ ATT}),
\end{equation}
then the first-order condition with respect to $\bm{ ATT}$ is given by
\begin{equation}
     -2(\bm{\widehat{\Delta ATT}})-W\bm{ ATT})'\Omega^{-1}W = 0,
\end{equation}
which, in turn, leads to the proposed estimator: 
\begin{equation}
     \widehat{\bm{ ATT}} = (W'\Omega^{-1}W)^{-1}W'\Omega^{-1}\bm{\widehat{\Delta ATT}}.
\end{equation}

The necessary and sufficient rank condition for GMM identification in this linear setting is that the rank of  $\Omega^{-1}W$ is equal to the number of columns \citep{newey1994large}. This condition is satisfied if both the covariance matrix $\Omega$ and the weight matrix $W$ are non-singular. Remark further that if $W$ is not full row rank then some $ATT(g,t)$ are not identified by the collection of $\Delta_k ATT(g,t)$'s identified in the dataset at hand. 

Proving consistency requires introducing standard assumptions for GMM estimators. We assume that (i) $\Omega$ and the weight matrix $W$ are non-singular; (ii) the true value $\Theta_0$ lies within a compact set; and (iii) $E_M[\sup_{\Theta}||h_i(\mathcal{W}_i|\Theta)||]<\infty$. In addition to our previous assumptions, applying Theorem 2.6 in \citet{newey1994large} yields the desired consistency result. Assuming further that (iv) $\Theta_0$ lies in the interior of the compact set; (v) $E[||h_i(\mathcal{W}_i|\Theta_0)||^2,]<\infty$; (vi) $W'\Omega^{-1}W$ non-singular, then asymptotic normality follows from Theorem 3.4 in \citet{newey1994large}.\footnote{All the other sufficient conditions used by these theorems are trivially satisfied in this linear setting.}

Note that the two-step GMM estimator requires estimating $\Omega$. We proceed as follows.
 For every $(g,t,k)$, compute the n-dimensional vector $\psi_{gtk}$ with $i^{th}$ element defined as
 \begin{equation}\label{eq:psi's 2}
 \psi_{gtk}(i) = \psi_{gtk}^G(i) + \psi_{gtk}^C(i),
 \end{equation}
 where
    \begin{equation*}
\begin{aligned}
\psi_{gtk}^G(i) = & w^G_{it,t-k}(g)\left[(Y_{it}-Y_{it-k}) - E_M\left[w^G_{t,t-k}(g)(Y_{t}-Y_{t-k})   \right]   \right], \\
\psi_{gtk}^C(i) = & w^C_{it,t-k}(g,X)\left[(Y_{it}-Y_{it-k}) - E_M\left[w^C_{t,t-k}(g,X)(Y_{t}-Y_{t-k})   \right]   \right] + M_{gtk}'\xi^{\pi}_g(i),
\end{aligned}
\end{equation*}
where $M_{gtk}$, a $k$ dimensional vector ($k$ being the number of covariates in $X$), is defined as
\begin{equation*}
\begin{aligned}
M_{gtk} = \frac{E\left[ X (\frac{C S_t S_{t-k}}{1-p_g(X)})^2 \dot{p}_g(X)\left[(Y_{it}-Y_{it-k}) - E\left[w^C_{t,t-k}(g,X)(Y_{t}-Y_{it-k})   \right]   \right] \right]}{E[\frac{p_g(X)C}{1-p_g(X)}]},
\end{aligned}
\end{equation*}
with $\hat{p}_g(X_i) = \Lambda (X_i'\hat{\pi}_g)$ being the parametric propensity score for covariates $X_i$ and $\dot{p}_g(X) = \partial \Lambda (X_i'\hat{\pi}_g)/\partial (X_i'\hat{\pi}_g)$. Furthermore,  $\xi^{\pi}_g(i)$ is a k-dimensional vector for each observation $i$, and is given by
\begin{equation*}
\begin{aligned}
\xi^{\pi}_g(i) = E_M\left[\frac{(G_g+C)\dot{p}_g(X)^2}{p_g(X_i)(1-p_g(X_i))}XX'\right]^{-1}X_i \frac{(G_g+C)(G_g-p_g(X_i))\dot{p}_g(X_i)}{p_g(X_i)(1-p_g(X_i))}.
\end{aligned}
\end{equation*}
Concatenate all $\psi_{gtk}$'s into a $L_{\Delta}\times n$ matrix $\Psi$, and compute $\hat{\Omega} = \hat{E}[\Psi(i)\Psi(i)']$.

Therefore, the asymptotic covariance of $\widehat{\bm{ATT
}}$ is 
\begin{equation}
    \bm{\Sigma} = (W'\Omega^{-1}W)^{-1},
\end{equation}
and its corresponding influence function to be used in the bootstrap procedure detailed in Online Appendix \ref{sec:bootstrap} is the empirical counterpart of
\begin{equation}
    \Phi = (W'\Omega^{-1}W)^{-1}W'\Omega^{-1}\Psi.
    \end{equation}
    Finally, it is easy to show that the choice of the optimal weighting $\Omega^{-1}$ is the same if the objective is instead to minimize the variance of a linear transformation $R'\bm{ATT}$, where $R$ is a vector of weights, like for all  the summary parameters considered in Online Appendix \ref{sec:summaryparam}. 
    In that case, the bootstrap for summary parameters in Online Appendix \ref{bootstrapsummary} apply with the (general) influence defined as follows. Let the weights in $R$ be random, the influence function is changed to
\begin{equation*}
    L(i) = R'\Phi(i) + \gamma^w(i)' \bm{ATT},
 \end{equation*}
 where $\gamma^w$ is the error function that depends on the randomness of the weights, as defined in Online Appendix \ref{bootstrapsummary}.

\end{proof}

\pagebreak

\begin{proof}[Proof of Corollary \ref{corrolary:attritionmodel1}]
This proof focuses on the identification of parameters in the general framework with a rotating panel structure. We show that how attrition models can be combined with our approach. Let us modify the definition from Theorem \ref{theorem: general identification}, $ATT_X(g,\tau) = E[Y_{\tau}(1) - Y_{\tau}(0)|X,G_g = 1]$ to write its first-difference as 
\begin{equation*}
\begin{aligned}
\Delta ATT_X(g,\tau) & = ATT_X(g,\tau) - ATT_X(g,\tau-1) \\
& =  E[Y_{\tau} - Y_{\tau-1}|X,G_g = 1] - E[Y_{\tau} - Y_{\tau-1}|C = 1] \\
& =  A_X(g,\tau) - B_X(g,\tau) \\
\end{aligned}
\end{equation*}
where the second equality follows from the conditional parallel trends assumption. Again, we can use the above expression to develop $ATT(g,t)$ into
\begin{equation}
\begin{aligned}
ATT(g,t) & =    \sum_{\tau=g}^{t} E\left(  A_X(g,\tau) - B_X(g,\tau)  | G_g =1 \right)  \\
\end{aligned}
\end{equation}
with 
\begin{equation}
\begin{aligned}
E\left(  A_X(g,\tau) | G_g =1 \right) & = E\left( Y_{\tau} - Y_{\tau-1} |G_g \right)   \\
& = E\left( E[Y_{\tau} - Y_{\tau-1}|X,G_g ] |G_g \right)   \\
& = E\left( E[Y_{\tau} - Y_{\tau-1}|X,G_g,S_{\tau}  S_{\tau-1} ] |G_g \right)   \\
& = E\left(  E[ \frac{S_{\tau}  S_{\tau-1}}{E[S_{\tau}  S_{\tau-1}|X,G_g]} (Y_{\tau} - Y_{\tau-1})|X,G_g ] |G_g \right)   \\
& = E\left(   \frac{S_{\tau}  S_{\tau-1}}{E[S_{\tau}  S_{\tau-1}|X,G_g]} (Y_{\tau} - Y_{\tau-1}) |G_g \right)   \\
& = E_M \big( \left( Y_{\tau} - Y_{\tau-1} \right)\frac{G_g S_{\tau}  S_{\tau-1}}{E[S_{\tau}  S_{\tau-1}|X,G_g]E[G_g]} \big) \\
& = E_M \big( \frac{E[G_g S_{\tau}  S_{\tau-1}]}{E[S_{\tau}  S_{\tau-1}|X,G_g]E[G_g]}\left( Y_{\tau} - Y_{\tau-1} \right) \frac{G_g S_{\tau}  S_{\tau-1}}{E[G_g S_{\tau}  S_{\tau-1}]} \big) \\
& = E_M \big( \frac{E[S_{\tau}  S_{\tau-1}|G_g]}{E[S_{\tau}  S_{\tau-1}|X,G_g]}\left( Y_{\tau} - Y_{\tau-1} \right) \frac{G_g S_{\tau}  S_{\tau-1}}{E[S_{\tau}  S_{\tau-1}|G_g]E[G_g]} \big)
\end{aligned}
\end{equation}
by the law of iterated expectations and the definition of $F_M$. The third equality follows from the conditional mean independence between $Y_{\tau} - Y_{\tau-1}$ and $S_{\tau}  S_{\tau-1}$ conditional on $X$ and treatment assignment. Remark that if the conditional probability $E[S_{\tau}  S_{\tau-1}|X,G_g]$ does not depend on $X$, then we are back to Theorem \ref{theorem: general identification}. Similarly, we obtain 

\begin{equation}
\begin{aligned}
& E (   B_X(g,\tau) | G_g =1)  = E ( E[Y_{\tau} - Y_{\tau-1}|X,C ,S_{\tau,\tau-1}]|G_g  )   \\
& =  E ( E[\frac{S_{\tau}  S_{\tau-1}}{E[S_{\tau}  S_{\tau-1}|X,C]}(Y_{\tau} - Y_{\tau-1})|X,C]|G_g  )   \\
& = E\left( E[ \frac{S_{\tau}  S_{\tau-1}}{E[S_{\tau}  S_{\tau-1}|X,C]}\frac{C }{1-P(G_g=1|X,G_g+C)} (Y_{\tau} - Y_{\tau-1})|X,G_g+C=1 ]|G_g \right)   \\
& = \frac{E\left( G_g E[ \frac{S_{\tau}  S_{\tau-1}}{E[S_{\tau}  S_{\tau-1}|X,C]} \frac{C}{1-P(G_g|X,G_g+C)} (Y_{\tau} - Y_{\tau-1})|X,G_g+C ]|G_g+C  \right)}{{P(G_g=1|G_g+C)}},   \\
\end{aligned}
\end{equation}
where using the definition of $p_g$ yields
\begin{equation}
\begin{aligned}
... & = \frac{E\left( G_g E[ \frac{S_{\tau}  S_{\tau-1}}{E[S_{\tau}  S_{\tau-1}|X,C]} \frac{C}{1-p_g(X)} (Y_{\tau} - Y_{\tau-1})|X,G_g+C ]|G_g+C  \right)}{{P(G_g=1|G_g+C)}},   \\
& = \frac{E\left( E[ \frac{S_{\tau}  S_{\tau-1}}{E[S_{\tau}  S_{\tau-1}|X,C]} \frac{p_g(X)C}{1-p_g(X)} (Y_{\tau} - Y_{\tau-1})|X,G_g+C ]|G_g+C  \right)}{{E(G_g|G_g+C)}},   \\
& = \frac{E \left((G_g+C) E[ \frac{S_{\tau}  S_{\tau-1}}{E[S_{\tau}  S_{\tau-1}|X,C]} \frac{p_g(X)C}{1-p_g(X)} (Y_{\tau} - Y_{\tau-1})|X,G_g+C ]  \right)}{{E(G_g|G_g+C)E(G_g+C)}},   \\
& = \frac{E \left((G_g+C) E[ \frac{S_{\tau}  S_{\tau-1}}{E[S_{\tau}  S_{\tau-1}|X,C]} \frac{p_g(X)C}{1-p_g(X)} (Y_{\tau} - Y_{\tau-1})|X,G_g+C ]  \right)}{{E(G_g)}},   \\
& = \frac{E\left( E[(G_g+C) |X] E[ \frac{S_{\tau}  S_{\tau-1}}{E[S_{\tau}  S_{\tau-1}|X,C]}\frac{p_g(X) C}{1-p_g(X)} (Y_{\tau} - Y_{\tau-1})|X,G_g+C]   \right)}{{E(G_g)}}   \\
& = \frac{E\left( E[ \frac{S_{\tau}  S_{\tau-1}}{E[S_{\tau}  S_{\tau-1}|X,C]}\frac{p_g(X) C}{1-p_g(X)} (Y_{\tau} - Y_{\tau-1})|X]  \right)}{{E(G_g )}}   \\
& = \frac{E\left(  \frac{S_{\tau}  S_{\tau-1}}{E[S_{\tau}  S_{\tau-1}|X,C]}\frac{p_g(X) C}{1-p_g(X)} (Y_{\tau} - Y_{\tau-1})  \right)}{{E(G_g )}}   \\
& = \frac{E\left(  \frac{E(G_g S_{\tau}  S_{\tau-1} )}{E[S_{\tau}  S_{\tau-1}|X,C]E(G_g )}\frac{p_g(X) C S_{\tau}  S_{\tau-1}}{1-p_g(X)} (Y_{\tau} - Y_{\tau-1})  \right)}{{E(G_g S_{\tau}  S_{\tau-1} )}}   \\
\end{aligned}
\end{equation}

Therefore, one can estimate an attrition model ex-ante for each treatment group and the control group to obtain $E[S_{\tau}  S_{\tau-1}|X,G_g]$ and $E[S_{\tau}  S_{\tau-1}|X,C]$, then reweight all first differences in outcome by $\frac{E(S_{\tau}  S_{\tau-1} |C)}{E[S_{\tau}  S_{\tau-1}|X,C]}$ and $\frac{E(S_{\tau}  S_{\tau-1} |G_g)}{E[S_{\tau}  S_{\tau-1}|X,G_g]}$ before applying our chained DiD estimator. Notice that, in general, the attrition model does not require to use the same explanatory variable $X$ than the propensity score $p_g(X)$. Remark, however, that we must be able to identify $E(G_g)$ in the population.

\end{proof}

\begin{proof}[Asymptotics for Corollary \ref{corrolary:attritionmodel1}]
We now derive the asymptotic distributions considering the added uncertainty introduced by the estimation of a first-step selection model (propensity scores) under the assumptions of Corollary \ref{corrolary:attritionmodel1}.

First, we introduce additional notations about the estimation of propensity scores in the modified weights

\begin{equation}    
\tilde w^G_{\tau\tau-1}(g,X) = \frac{E(S_{\tau}  S_{\tau-1}|G_g )}{E[S_{\tau}  S_{\tau-1}|X,G_g]} w^G_{\tau\tau-1}(g,X),
\end{equation}
\begin{equation}    
\tilde w^C_{\tau\tau-1}(g,X) = \frac{E(S_{\tau}  S_{\tau-1}|C  )}{E[S_{\tau}  S_{\tau-1}|X,C]} w^C_{\tau\tau-1}(g,X).
\end{equation}

Let us denote the added term
\begin{equation}    
l_t(q_a) = \frac{E(S_{\tau}  S_{\tau-1}|a  )}{q_a(X)},
\end{equation}    
with $q_a(X_i) = E[S_{\tau}  S_{\tau-1}|X_i,a] = \Lambda (X_i'\rho_a^0)$ a parametric propensity scores with  $\Lambda (\cdot)$ being a known function (logit or probit), that can be parametrically estimated by maximum likelihood. Remark that we could use other $X$'s here. We denote $\hat{q}_a(X_i) = \Lambda (X_i'\hat{\rho}_a)$ where $\hat{\rho}_a$ are estimated by ML, $\dot{q}_a = \partial q_a(u)/\partial u$, and  $\dot{q}_a(X) = \dot{q}_a(X_i'\rho_a^0)$. Under this assumption, the estimated parameter $\hat{\rho}_a$ is asymptotically linear.

Let us now define,
\begin{equation}\label{eq:psi's}
\psi_{gt}(\mathcal{W}_i) = \psi_{gt}^G(\mathcal{W}_i) + \psi_{gt}^G(\mathcal{W}_i),
\end{equation}
where
\begin{equation*}
\begin{aligned}
\psi_{gt}^G(\mathcal{W}_i) = & \tilde{w}^G_{it,t-1}(g,X)\left[(Y_{it}-Y_{it-1}) - E_M\left[\tilde{w}^G_{it,t-1}(g,X)(Y_{it}-Y_{it-1})   \right]   \right] + N_{gt}^{G'}\xi^{\rho}_g(\mathcal{W}_i), \\
\psi_{gt}^C(\mathcal{W}_i) = & \tilde{w}^C_{it,t-1}(g,X)\left[(Y_{it}-Y_{it-1}) - E_M\left[\tilde{w}^C_{it,t-1}(g,X)(Y_{it}-Y_{it-1})   \right]   \right]  + M_{gt}'\xi^{\pi}_g(\mathcal{W}_i) + N_{gt}^{C'}\xi^{\rho}_C(\mathcal{W}_i),
\end{aligned}
\end{equation*}
with 
\begin{equation*}
\begin{aligned}
M_{gt} = \frac{E_M\left[l_t(q_C) X (\frac{C S_t S_{t-1}}{1-p_g(X)})^2 \dot{p}_g(X)\left[(Y_{it}-Y_{it-1}) - E_M\left[\tilde{w}^C_{it,t-1}(g,X)(Y_{it}-Y_{it-1})   \right]   \right] \right]}{E_M[\frac{p_g(X)C}{1-p_g(X)}]},
\end{aligned}
\end{equation*}
\begin{equation*}
\begin{aligned}
N_{gt}^C = -\frac{E_M\left[ \frac{p_g(X)C}{1-p_g(X)}X\frac{E( S_{\tau}  S_{\tau-1}|C )}{q_C(X;\overline{\pi})^2} \dot{q}_C(X;\overline{\rho})\left[(Y_{it}-Y_{it-1}) - E_M\left[\tilde{w}^C_{it,t-1}(g,X)(Y_{it}-Y_{it-1})   \right]   \right] \right]}{E_M[\frac{p_g(X)C}{1-p_g(X)}]},
\end{aligned}
\end{equation*}
\begin{equation*}
\begin{aligned}
N_{gt}^G = -\frac{E_M\left[ G_gS_{t,t-1}X\frac{E(S_{\tau}  S_{\tau-1} |G_g )}{q_G(X;\overline{\pi})^2} \dot{q}_G(X;\overline{\rho})\left[(Y_{it}-Y_{it-1}) - E_M\left[\tilde{w}^C_{it,t-1}(g,X)(Y_{it}-Y_{it-1})   \right]   \right] \right]}{E_M[G_gS_{t,t-1}]},
\end{aligned}
\end{equation*}
are a $k$ dimensional vectors, $k$ being the number of covariates in $X$.
Finally, let $\widehat{\Delta ATT}_{g \leq t}$ and $\Delta ATT_{g \leq t}$ denote the vectors of all $\widehat{\Delta ATT}(g,t)$ and $\Delta ATT(g,t)$ for any $2 \leq g \leq t \leq \mathcal{T}$. Similarly,  the collection of $\psi_{gt}$ across all periods and groups such that $g\leq t$ is denoted by $\Psi_{g\leq t}$.

Second, we show the asymptotic result for $\Delta ATT$. Recall that
\begin{equation*}
\begin{aligned}
 \widehat{ATT}(g,t)  =  \sum_{\tau=g}^t  \widehat{\Delta ATT}(g,\tau),
\end{aligned}
\end{equation*}
where now
\begin{equation*}
\begin{aligned}
\widehat{\Delta ATT}(g,\tau) & = \hat{E}_M \left[ l_t(q_g)\frac{G_{g} S_{\tau-1}S_{\tau}}{ \hat{E}_M [ G_{g} S_{\tau-1}S_{\tau}]}(Y_{\tau}-Y_{\tau-1})\right] - \hat{E}_M \left[l_t(q_C) \frac{\frac{p_g(X)C S_{\tau-1}S_{\tau}}{1-p_g(X)}} {\hat{E}_M \left[\frac{p_g(X)C S_{\tau-1}S_{\tau}}{1-p_g(X)}\right] }(Y_{\tau}-Y_{\tau-1})\right] \\
& = \widehat{\Delta ATT}_g(g,\tau) - \widehat{\Delta ATT}_C(g,\tau),
\end{aligned}
 \end{equation*}

Let us treat each term separately. For $\widehat{\Delta ATT}_C(g,\tau)$, take an arbitrary function $g$, let 
\begin{equation*}
    w_{t}(g) = \frac{g(X)CS_{t,t-1}}{1-g(X)}
\end{equation*}
and note that
\begin{equation*}
\begin{aligned}
\sqrt{n}( \widehat{\Delta ATT}_{C}(g,t)-& \Delta ATT_{C}(g,t) )  =  \sqrt{n} \left( \hat{E}_M \left[   \frac{l_t(\hat{q}_C)w_{t}(\hat{p}_g)}{\hat{E}_M [  w_{t}(\hat{p}_g)]} (Y_{t}-Y_{t-1})\right] -  E_M \left[  \frac{l_t(q_C)w_{t}(p_g)}{E_M [  w_{t}(p_g)]} (Y_{t}-Y_{t-1})\right] \right) \\
  = &  \frac{\sqrt{n}}{\hat{E}_M [  w_{t}(\hat{p}_g)]} \left( \hat{E}_M \left[  l_t(\hat{q}_C)w_{t}(\hat{p}_g) (Y_{t}-Y_{t-1})\right] -  \frac{\hat{E}_M [ w_{t}(\hat{p}_g)]}{E_M [  w_{t}(p_g)]}  E_M \left[  l_t(q_C)w_{t}(p_g) (Y_{t}-Y_{t-1})\right] \right) \\
  = &  \frac{\sqrt{n}}{\hat{E}_M [  w_{t}(\hat{p}_g)]} \left( \hat{E}_M \left[  l_t(\hat{q}_C)w_{t}(\hat{p}_g) (Y_{t}-Y_{t-1})\right] -  E_M \left[  l_t(q_C)w_{t}(p_g) (Y_{t}-Y_{t-1})\right] \right) \\
 & -    \frac{ E_M \left[  w_{t}(p_g) (Y_{t}-Y_{t-1})\right]}{\hat{E}_M [  w_{t}(\hat{p}_g)]E_M [  w_{t}(p_g)]} \sqrt{n} (\hat{E}_M \left[ l_t(q_C)w_{t}(\hat{p}_g) \right]  - \hat{E}_M \left[ l_t(q_C)w_{t}(p_g) \right]) \\
 = &  \frac{1 }{\hat{E}_M [  w_{t}(\hat{p}_g)]}\sqrt{n} A_n(\hat{p}_g,\hat{q}_C) -  \frac{\Delta ATT_C(g,t)}{\hat{E}_M [  w_{t}(\hat{p}_g)]}\sqrt{n} B_n(\hat{p}_g,\hat{q}_C) \\
  = &  \frac{1 }{E_M [  w_{t}(p_g)]}\sqrt{n} A_n(\hat{p}_g,\hat{q}_C) -  \frac{\Delta ATT_C(g,t)}{E_M [  w_{t}(p_g)]}\sqrt{n} B_n(\hat{p}_g,\hat{q}_C) +o_p(1), \\
\end{aligned}
\end{equation*}

Applying the mean value theorem and the Classical Glivenko-Cantelli's theorem yield
\begin{equation*}
\begin{aligned}
A_n(\hat{p}_g,\hat{q}_C) = & \hat{E}_M \left[ l_{t}(q_C) w_{t}(p_g) (Y_{t}-Y_{t-1})\right] - E_M \left[  l_{t}(q_C)w_{t}(p_g) (Y_{t}-Y_{t-1})\right] \\
& + \hat{E}_M \left[l_{t}(q_C)X\frac{C S_{t,t-1}}{(1-p_g(X;\overline{\pi}))^2} \dot{p}_g(X;\overline{\pi}) (Y_{t}-Y_{t-1}) \right]'\left(\hat{\pi}_g-\pi^0_g \right) \\
& - \hat{E}_M \left[w_{t}(p_g)X\frac{E(S_{\tau}  S_{\tau-1}|C )}{q_C(X;\overline{\pi})^2} \dot{q}_C(X;\overline{\rho}) (Y_{t}-Y_{t-1}) \right]'\left(\hat{\rho}_C-\rho^0_C \right) + o_p(n^{-1/2}),
\end{aligned}
\end{equation*}
\begin{equation*}
\begin{aligned}
B_n(\hat{p}_g) = & \hat{E}_M \left[  l_{t}(q_C)w_{t}(p_g) \right] - E_M \left[  l_{t}(q_C)w_{t}(p_g) \right] \\
& + \hat{E}_M \left[l_{t}(q_C)X\frac{C S_{t,t-1}}{(1-p_g(X;\overline{\pi}))^2} \dot{p}_g(X;\overline{\pi})  \right]'\left(\hat{\pi}_g-\pi^0_g \right) \\
& - \hat{E}_M \left[w_{t}(p_g)X\frac{E(S_{\tau}  S_{\tau-1} |C)}{q_C(X;\overline{\pi})^2} \dot{q}_C(X;\overline{\rho})  \right]'\left(\hat{\rho}_C-\rho^0_C \right) + o_p(n^{-1/2}),
\end{aligned}
\end{equation*}
where $\overline{\pi}$ and $\overline{\rho}$ are intermediate points.
Applying the same reasoning for $\Delta ATT_g(g,t)$ gives
\begin{equation*}
\begin{aligned}
\sqrt{n}( \widehat{\Delta ATT}_{g}(g,t)- \Delta ATT_{g}(g,t) )  
 =   \frac{1 }{E_M [  w_{t}]}\sqrt{n} A_n(\hat{q}_C) -  \frac{\Delta ATT_g(g,t)}{E_M [  w_{t}]}\sqrt{n} B_n(\hat{p}_g,\hat{q}_C) +o_p(1), \\
\end{aligned}
\end{equation*}
with 
\begin{equation*}
    w_{t} = G_gS_{t,t-1},
\end{equation*}
and
\begin{equation*}
\begin{aligned}
A_n(\hat{q}_C) = & \hat{E}_M \left[ l_{t}(q_g) w_{t} (Y_{t}-Y_{t-1})\right] - E_M \left[  l_{t}(q_g)w_{t} (Y_{t}-Y_{t-1})\right] \\
& - \hat{E}_M \left[w_{t}X\frac{E(G_g S_{\tau}  S_{\tau-1} )}{q_g(X;\overline{\pi})^2E(G_g )} \dot{q}_g(X;\overline{\rho}) (Y_{t}-Y_{t-1}) \right]'\left(\hat{\rho}_g-\rho^0_g \right) + o_p(n^{-1/2}),
\end{aligned}
\end{equation*}
\begin{equation*}
\begin{aligned}
B_n(\hat{p}_g) = & \hat{E}_M \left[  l_{t}(q_g)w_{t} \right] - E_M \left[  l_{t}(q_g)w_{t} \right] \\
& - \hat{E}_M \left[w_{t}X\frac{E(G_g S_{\tau}  S_{\tau-1} )}{q_g(X;\overline{\pi})^2E(G_g )} \dot{q}_g(X;\overline{\rho})  \right]'\left(\hat{\rho}_g-\rho^0_g \right) + o_p(n^{-1/2}).
\end{aligned}
\end{equation*}

Combining the above results and making use of the same Lemma yields \eqref{eq:att g-c} hence concludes the proof for $\Delta ATT$. The asymptotic covariance is given by $\Sigma_{\Delta} = E \left[\Psi_{g\leq\tau}(\mathcal{W}_i) \Psi_{g\leq\tau}(\mathcal{W}_i)' \right]$. The asymptotic result for $ATT$ follows from the same reasoning as in our main theorem.
\end{proof}

\begin{proof}[Proof of Corollary \ref{corrolary:attritionmodel2}]
This proof shows how to adapt our approach to a sampling assumption related to sequential missing at random \citep{hoonhout2019nonignorable}. We focus on a simple case to illustrate the method. Let us assume that an individual can be sampled at most two periods in a row. The first time a unit is sampled is a function of $X$ and $G_g$ or $C$, but the probability that the unit is sampled again in the next period also depends on the realization of its outcome. Formally, we assume $Y_{t} \perp S_t | Y_{t-1},X,G_g,S_t=1$ but $Y_{t} \perp S_t | X,G_g,S_{t-1}=0$.   Let us only focus on $A_X(g,\tau)$ to show how the weights change. We have
\begin{equation}
\begin{aligned}
E\left(  A_X(g,\tau) | G_g =1 \right) & = E\left( Y_{\tau} - Y_{\tau-1} |G_g \right)   \\
& = E\left( E[Y_{\tau} - Y_{\tau-1}|X,G_g,Y_{\tau-1} ] |G_g \right)   \\
& = E\left( E[Y_{\tau} - Y_{\tau-1}|X,G_g,Y_{\tau-1},S_{\tau}=1,  S_{\tau-1}=1 ] |G_g \right)   \\
& = E\left(  E[ \frac{S_{\tau} }{E[S_{\tau}  |X,Y_{\tau-1},S_{\tau-1}=1,G_g]} (Y_{\tau} - Y_{\tau-1})|X,G_g,Y_{\tau-1},S_{\tau-1} ] |G_g \right)   \\
& = E\left(  E[ \frac{S_{\tau}S_{\tau-1} }{E[S_{\tau}  |X,Y_{\tau-1},S_{\tau-1}=1,G_g]E[S_{\tau-1}  |X,G_g]} (Y_{\tau} - Y_{\tau-1})|X,G_g ] |G_g \right)   \\
& = E_M \big( \left( Y_{\tau} - Y_{\tau-1} \right)\frac{G_g S_{\tau}  S_{\tau-1}}{E[S_{\tau}  |X,Y_{\tau-1},S_{\tau-1}=1,G_g]E[S_{\tau-1}  |X,G_g]E[G_g]} \big) \\
& = E_M \big( \frac{E[G_g S_{\tau}  S_{\tau-1}]}{E[S_{\tau}  |X,Y_{\tau-1},S_{\tau-1}=1,G_g]E[S_{\tau-1}  |X,G_g]E[G_g]}\left( Y_{\tau} - Y_{\tau-1} \right) \frac{G_g S_{\tau}  S_{\tau-1}}{E[G_g S_{\tau}  S_{\tau-1}]} \big),
\end{aligned}
\end{equation}
where the third equality follows from $Y_{t} \perp S_t | Y_{t-1},X,G_g$. Remark that if the conditional probability $E[S_{\tau}|X,Y_{\tau-1},S_{\tau-1},G_g]$ does not depend on $Y_{\tau-1},S_{\tau-1}$, then we are back to the previous case. 

Therefore, our approach accommodates general attrition models with only minor modifications to the chained DiD estimator. We do not derive the asymptotic distribution for this case, but it follows from the same steps as in Corollary \ref{corrolary:attritionmodel1}. 
\end{proof}

\pagebreak

\section{Appendix for the Application}\label{appe:application}

In this appendix, we present the detailed results in various tables. The next two tables give detailed results corresponding to Figures \ref{fig:figure1}, \ref{fig:figure2}, \ref{fig:figure3}, and \ref{fig:figure4}.

\begin{table}[H]
\centering
\caption{Effects on total workforce (exhaustively observed outcome) }
\begin{tabular}{l*{5}{c}}
\hline
           & \multicolumn{ 5}{c}{{\bf log(total workforce)}} \\
           \cmidrule{2-6}
           &    Long DiD & \multicolumn{ 2}{c}{Chained DiD} & \multicolumn{ 2}{c}{Cross Section DiD} \\
                      \cmidrule(lr){2-2} \cmidrule(lr){3-4} \cmidrule(lr){5-6}
           & Exhaustive & Exhaustive & Unbalanced & Exhaustive & Unbalanced \\
           & (1) & (2) & (3) & (4) & (5) \\
\hline
$\beta_{-3}$ &     -0.013 &     -0.013 &      0.032 &     -0.013 &      0.075 \\
           & [-0.084,0.058] & [-0.079,0.053] & [-0.059,0.123] & [-0.144,0.117] & [-0.609,0.76] \\
$\beta_{-2}$ &     -0.023 &     -0.023 &     -0.008 &     -0.023 &      0.012 \\
           & [-0.091,0.044] & [-0.087,0.041] & [-0.097,0.081] & [-0.13,0.083] & [-0.549,0.574] \\
$\beta_{-1}$ &     -0.005 &     -0.005 &     -0.005 &     -0.005 &      0.024 \\
           & [-0.054,0.044] & [-0.052,0.041] & [-0.066,0.056] & [-0.09,0.08] & [-0.442,0.491] \\
    $ref.$ &          0 &          0 &          0 &          0 &          0 \\
$\beta_{1}$ &      0.027 &      0.027 &      0.046 &      0.028 &      0.062 \\
           & [-0.026,0.08] & [-0.023,0.078] & [-0.021,0.112] & [-0.103,0.159] & [-0.372,0.495] \\
$\beta_{2}$ &    0.057** &    0.056** &    0.081** &      0.057 &      0.092 \\
           & [-0.006,0.121] & [-0.005,0.118] & [0.002,0.159] & [-0.048,0.163] & [-0.426,0.61] \\
$\beta_{3}$ &     0.07** &    0.068** &    0.089** &      0.069 &      0.051 \\
           &   [0,0.14] &  [0,0.135] & [0.006,0.172] & [-0.075,0.214] & [-0.506,0.607] \\
$\beta_{4}$ &    0.084** &    0.077** &   0.119*** &      0.083 &      0.087 \\
           & [0.002,0.167] &  [0,0.154] & [0.024,0.214] & [-0.091,0.257] & [-0.53,0.704] \\
$\beta_{5}$ &   0.123*** &   0.111*** &   0.149*** &     0.119* &      0.076 \\
           & [0.032,0.213] & [0.025,0.197] & [0.037,0.261] & [-0.032,0.269] & [-0.533,0.686] \\
\hline
      Pre- &     -0.014 &     -0.014 &      0.006 &     -0.014 &      0.037 \\
     trend & [-0.048,0.021] & [-0.049,0.021] & [-0.04,0.053] & [-0.079,0.051] & [-0.294,0.368] \\
\hline
\end{tabular}  
\label{tab:table5}
\begin{tablenotes} \item Notes: \footnotesize 
This table shows the dynamic treatment effects relative to the beginning of the treatment obtained on a panel balanced on exhaustive variables. ``Exhaustive" refers to the use of an exhaustively observed outcome without pretending that this variable is imperfectly observed. ``Unbalanced" refers to the use of an exhaustively observed outcome pretending that this variable is observed from R\&D survey, that is, from an unbalanced repeated panel. 
95\% confidence intervals are obtained from the multiplier bootstrap. 
\sym{*} \(p<0.10\), \sym{**} \(p<0.05\), \sym{***} \(p<0.01\)
\end{tablenotes}
\end{table}

\begin{table}[H]
\centering
\caption{\label{tab:table6}Effects on highly qualified workers (exhaustively observed outcome)}
{
\begin{tabular}{l*{5}{c}}
\hline
           & \multicolumn{ 5}{c}{{\bf log(highly qualified workforce) }} \\
           \cmidrule{2-6}
           &    Long DiD & \multicolumn{ 2}{c}{Chained DiD} & \multicolumn{ 2}{c}{Cross Section DiD} \\
           \cmidrule(lr){2-2} \cmidrule(lr){3-4} \cmidrule(lr){5-6}
           & Exhaustive & Exhaustive & Unbalanced & Exhaustive & Unbalanced \\
           & (1) & (2) & (3) & (4) & (5) \\
\hline
$\beta_{-3}$ &      0.033 &      0.033 &      0.088 &      0.033 &      0.194 \\
           & [-0.059,0.126] & [-0.053,0.12] & [-0.059,0.235] & [-0.081,0.148] & [-0.307,0.696] \\
$\beta_{-2}$ &      0.018 &      0.018 &      0.013 &      0.018 &      0.006 \\
           & [-0.066,0.103] & [-0.06,0.096] & [-0.131,0.157] & [-0.08,0.116] & [-0.411,0.423] \\
$\beta_{-1}$ &      0.018 &      0.018 &      0.015 &      0.018 &      0.019 \\
           & [-0.058,0.094] & [-0.054,0.09] & [-0.099,0.129] & [-0.065,0.101] & [-0.328,0.365] \\
    $ref.$ &          0 &          0 &          0 &          0 &          0 \\
$\beta_{1}$ &      0.022 &      0.022 &       0.02 &      0.024 &      0.034 \\
           & [-0.056,0.1] & [-0.048,0.091] & [-0.09,0.129] & [-0.083,0.13] & [-0.287,0.355] \\
$\beta_{2}$ &      0.044 &      0.043 &      0.067 &      0.045 &      0.065 \\
           & [-0.038,0.126] & [-0.032,0.118] & [-0.051,0.184] & [-0.05,0.141] & [-0.319,0.449] \\
$\beta_{3}$ &     0.075* &    0.072** &      0.061 &      0.072 &       0.02 \\
           & [-0.015,0.164] & [-0.008,0.152] & [-0.066,0.189] & [-0.05,0.194] & [-0.388,0.428] \\
$\beta_{4}$ &    0.109** &    0.104** &      0.108 &      0.105 &      0.008 \\
           & [0.007,0.211] & [0.011,0.196] & [-0.04,0.256] & [-0.04,0.25] & [-0.439,0.456] \\
$\beta_{5}$ &    0.125** &   0.117*** &     0.128* &     0.121* &      0.069 \\
           & [0.014,0.236] & [0.016,0.217] & [-0.032,0.288] & [-0.014,0.257] & [-0.39,0.528] \\
\hline
      Pre- &      0.023 &      0.023 &      0.039 &      0.023 &      0.073 \\
     trend & [-0.022,0.068] & [-0.018,0.064] & [-0.035,0.113] & [-0.033,0.08] & [-0.173,0.319] \\
\hline
\end{tabular}

}
\begin{tablenotes} \item Notes: \footnotesize 
This table shows the dynamic treatment effects relative to the beginning of the treatment obtained on a panel balanced on exhaustive variables. ``Exhaustive'' refers to the use of an exhaustively observed outcome without pretending that this variable is imperfectly observed. ``Unbalanced'' refers to the use of an exhaustively observed outcome pretending that this variable is observed from R\&D survey, that is, from an unbalanced repeated panel. 
95\% confidence intervals are obtained from the multiplier bootstrap.
\sym{*} \(p<0.10\), \sym{**} \(p<0.05\), \sym{***} \(p<0.01\)
\end{tablenotes}
\end{table}

\begin{table}[H]
\centering
\caption{\label{tab:table7}Effects on employment variables observed from R\&D survey }
{
\begin{tabular}{l*{4}{c}}
\hline
\multicolumn{ 5}{c}{{\bf Unbalanced variables from R\&D survey in log}} \\
            \cmidrule{2-5} 
           & \multicolumn{ 2}{c}{Chained DiD} & \multicolumn{ 2}{c}{Cross Section DiD} \\
           \cmidrule(lr){2-3} \cmidrule(lr){4-5}
           & total workforce & researchers & total workforce & researchers \\
           & (1) & (2) & (3) & (4) \\
\hline
$\beta_{-3}$ &      0.053 &     -0.041 &      0.104 &      0.106 \\
           & [-0.066,0.172] & [-0.194,0.112] & [-0.598,0.806] & [-0.249,0.462] \\
$\beta_{-2}$ &      0.012 &      0.017 &      0.064 &      0.056 \\
           & [-0.111,0.135] & [-0.103,0.137] & [-0.554,0.682] & [-0.246,0.357] \\
$\beta_{-1}$ &      0.009 &      0.007 &       0.02 &      0.033 \\
           & [-0.074,0.092] & [-0.104,0.119] & [-0.481,0.522] & [-0.219,0.285] \\
    $ref.$ &          0 &          0 &          0 &          0 \\
$\beta_{1}$ &      0.053 &       0.06 &      0.075 &      0.082 \\
           & [-0.032,0.139] & [-0.051,0.17] & [-0.409,0.56] & [-0.171,0.336] \\
$\beta_{2}$ &      0.038 &    0.124** &      0.052 &      0.132 \\
           & [-0.073,0.149] & [0.001,0.248] & [-0.499,0.603] & [-0.164,0.427] \\
$\beta_{3}$ &      0.047 &   0.181*** &      0.057 &      0.139 \\
           & [-0.068,0.162] & [0.041,0.32] & [-0.525,0.638] & [-0.178,0.455] \\
$\beta_{4}$ &    0.129** &   0.245*** &      0.081 &      0.148 \\
           & [0.007,0.25] & [0.093,0.397] & [-0.57,0.733] & [-0.189,0.485] \\
$\beta_{5}$ &   0.168*** &   0.288*** &      0.116 &      0.221 \\
           & [0.034,0.302] & [0.128,0.448] & [-0.546,0.778] & [-0.144,0.585] \\
\hline
      Pre- &      0.025 &     -0.006 &      0.063 &      0.065 \\
     trend & [-0.032,0.082] & [-0.075,0.064] & [-0.319,0.444] & [-0.116,0.246] \\
\hline

\end{tabular}

}
\begin{tablenotes} \item Notes: \footnotesize 
This table shows the dynamic treatment effects relative to the beginning of the treatment. The dynamic effects are estimated with outcome variables observed from R\&D survey, that is, from an unbalanced repeated panel. 
95\% confidence intervals are obtained from the multiplier bootstrap.
\sym{*} \(p<0.10\), \sym{**} \(p<0.05\), \sym{***} \(p<0.01\)
\end{tablenotes}
\end{table}

These estimates can be compared to the next results where administrative data are not reweighted using the exhaustively observed variables. Results are presented in Tables \ref{tab:table8}, \ref{tab:table9}, and \ref{tab:table10}. This introduces some differences because there are more individuals in some periods compared to the results presented earlier.

\begin{table}[H]
\centering
\caption{Effects on total workforce (exhaustively observed outcome with the complete panel data) }
\begin{tabular}{l*{5}{c}}
\hline
           & \multicolumn{ 5}{c}{{\bf log(total workforce)}} \\
           \cmidrule{2-6}
           &    Long DiD & \multicolumn{ 2}{c}{Chained DiD} & \multicolumn{ 2}{c}{Cross Section DiD} \\
                      \cmidrule(lr){2-2} \cmidrule(lr){3-4} \cmidrule(lr){5-6}
           & Exhaustive & Exhaustive & Unbalanced & Exhaustive & Unbalanced \\
           & (1) & (2) & (3) & (4) & (5) \\
\hline
$\beta_{-3}$ &     -0.001 &      0.001 &      0.019 &          0 &      0.002 \\
           & [-0.056,0.054] & [-0.07,0.071] & [-0.067,0.105] & [-0.174,0.174] & [-0.46,0.464] \\
$\beta_{-2}$ &     -0.027 &     -0.027 &     -0.018 &     -0.022 &      0.001 \\
           & [-0.079,0.026] & [-0.092,0.038] & [-0.099,0.063] & [-0.173,0.128] & [-0.405,0.408] \\
$\beta_{-1}$ &     -0.017 &     -0.017 &     -0.013 &     -0.016 &     -0.022 \\
           & [-0.061,0.026] & [-0.068,0.033] & [-0.076,0.051] & [-0.14,0.107] & [-0.374,0.33] \\
    $ref.$ &          0 &          0 &          0 &          0 &          0 \\
$\beta_{1}$ &    0.045** &    0.045** &     0.051* &      0.049 &      0.052 \\
           & [0.002,0.087] & [-0.006,0.095] & [-0.008,0.11] & [-0.073,0.17] & [-0.282,0.386] \\
$\beta_{2}$ &   0.066*** &   0.077*** &   0.088*** &      0.057 &      0.046 \\
           & [0.013,0.118] & [0.013,0.142] & [0.015,0.161] & [-0.086,0.199] & [-0.348,0.44] \\
$\beta_{3}$ &   0.069*** &   0.085*** &    0.089** &      0.055 &     -0.001 \\
           & [0.009,0.128] & [0.01,0.159] & [0.004,0.174] & [-0.085,0.194] & [-0.408,0.406] \\
$\beta_{4}$ &   0.074*** &    0.085** &    0.11*** &       0.05 &     -0.027 \\
           & [0.009,0.139] &   [0,0.17] & [0.016,0.204] & [-0.117,0.216] & [-0.473,0.42] \\
$\beta_{5}$ &   0.094*** &    0.097** &   0.134*** &      0.052 &      -0.07 \\
           & [0.019,0.169] & [0.003,0.19] & [0.028,0.24] & [-0.121,0.226] & [-0.545,0.406] \\
\hline
      Pre- &     -0.015 &     -0.015 &     -0.004 &     -0.013 &     -0.006 \\
     trend & [-0.043,0.013] & [-0.05,0.021] & [-0.047,0.039] & [-0.087,0.061] & [-0.238,0.225] \\
\hline
\end{tabular}  
 
\label{tab:table8}
\begin{tablenotes} \item Notes: \footnotesize 
This table shows the dynamic treatment effects relative to the beginning of the treatment obtained on a panel balanced on exhaustive variables. ``Exhaustive" refers to the use of an exhaustively observed outcome without pretending that this variable is imperfectly observed. ``Unbalanced" refers to the use of an exhaustively observed outcome pretending that this variable is observed from R\&D survey, that is, from an unbalanced repeated panel. 
95\% confidence intervals are obtained from the multiplier bootstrap.
\sym{*} \(p<0.10\), \sym{**} \(p<0.05\), \sym{***} \(p<0.01\)
\end{tablenotes}
\end{table}

\begin{table}[H]
\centering
\caption{\label{tab:table9}Effects on highly qualified workers (exhaustively observed outcome with the complete panel data)}
{
\begin{tabular}{l*{5}{c}}
\hline
           & \multicolumn{ 5}{c}{{\bf log(highly qualified workforce) }} \\
           \cmidrule{2-6}
           &    Long DiD & \multicolumn{ 2}{c}{Chained DiD} & \multicolumn{ 2}{c}{Cross Section DiD} \\
           \cmidrule(lr){2-2} \cmidrule(lr){3-4} \cmidrule(lr){5-6}
           & Exhaustive & Exhaustive & Unbalanced & Exhaustive & Unbalanced \\
           & (1) & (2) & (3) & (4) & (5) \\
\hline
$\beta_{-3}$ &      0.019 &      0.025 &       0.07 &       0.04 &      0.102 \\
           & [-0.046,0.085] & [-0.053,0.102] & [-0.059,0.199] & [-0.091,0.171] & [-0.235,0.438] \\
$\beta_{-2}$ &      0.024 &      0.013 &      0.021 &      0.023 &     -0.026 \\
           & [-0.041,0.089] & [-0.062,0.087] & [-0.105,0.148] & [-0.09,0.137] & [-0.327,0.274] \\
$\beta_{-1}$ &      0.016 &      0.016 &      0.012 &      0.023 &     -0.017 \\
           & [-0.043,0.075] & [-0.053,0.085] & [-0.093,0.116] & [-0.076,0.122] & [-0.281,0.247] \\
    $ref.$ &          0 &          0 &          0 &          0 &          0 \\
$\beta_{1}$ &      0.028 &      0.028 &       0.02 &      0.031 &      0.019 \\
           & [-0.031,0.088] & [-0.04,0.095] & [-0.085,0.125] & [-0.069,0.132] & [-0.228,0.267] \\
$\beta_{2}$ &      0.045 &      0.047 &      0.077 &      0.044 &      0.044 \\
           & [-0.021,0.111] & [-0.028,0.123] & [-0.049,0.202] & [-0.068,0.155] & [-0.239,0.328] \\
$\beta_{3}$ &    0.078** &    0.077** &      0.064 &      0.072 &     -0.007 \\
           & [0.008,0.147] & [-0.006,0.16] & [-0.068,0.197] & [-0.041,0.185] & [-0.3,0.286] \\
$\beta_{4}$ &   0.116*** &   0.108*** &      0.101 &     0.103* &     -0.052 \\
           & [0.042,0.19] & [0.017,0.199] & [-0.038,0.24] & [-0.03,0.235] & [-0.385,0.281] \\
$\beta_{5}$ &   0.134*** &   0.119*** &     0.13** &     0.111* &     -0.031 \\
           & [0.048,0.22] & [0.021,0.218] & [-0.023,0.283] & [-0.029,0.251] & [-0.382,0.32] \\
\hline
      Pre- &       0.02 &      0.018 &      0.034 &      0.029 &      0.019 \\
     trend & [-0.015,0.054] & [-0.022,0.058] & [-0.031,0.1] & [-0.032,0.089] & [-0.151,0.19] \\
\hline
\end{tabular}

}
\begin{tablenotes} \item Notes: \footnotesize 
This table shows the dynamic treatment effects relative to the beginning of the treatment obtained on a panel balanced on exhaustive variables. ``Exhaustive" refers to the use of an exhaustively observed outcome without pretending that this variable is imperfectly observed. ``Unbalanced" refers to the use of an exhaustively observed outcome pretending that this variable is observed from R\&D survey, that is, from an unbalanced repeated panel. 
95\% confidence intervals are obtained from the multiplier bootstrap.
\sym{*} \(p<0.10\), \sym{**} \(p<0.05\), \sym{***} \(p<0.01\)
\end{tablenotes}
\end{table}

\begin{table}[H]
\centering
\caption{\label{tab:table10}Effects on employment variables observed from R\&D survey  with the complete panel data}
{
\begin{tabular}{l*{4}{c}}
\hline
\multicolumn{ 5}{c}{{\bf Unbalanced variables from R\&D survey in log}} \\
            \cmidrule{2-5} 
           & \multicolumn{ 2}{c}{Chained DiD} & \multicolumn{ 2}{c}{Cross Section DiD} \\
           \cmidrule(lr){2-3} \cmidrule(lr){4-5}
           & total workforce & researchers & total workforce & researchers \\
           & (1) & (2) & (3) & (4) \\
\hline
$\beta_{-3}$ &      0.036 &      -0.03 &      0.017 &      0.081 \\
           & [-0.113,0.185] & [-0.149,0.089] & [-0.471,0.504] & [-0.148,0.31] \\
$\beta_{-2}$ &      0.028 &     -0.018 &      0.079 &      0.044 \\
           & [-0.114,0.171] & [-0.125,0.088] & [-0.368,0.526] & [-0.164,0.252] \\
$\beta_{-1}$ &      0.022 &      0.007 &     -0.016 &      0.037 \\
           & [-0.1,0.144] & [-0.082,0.097] & [-0.4,0.369] & [-0.149,0.223] \\
    $ref.$ &          0 &          0 &          0 &          0 \\
$\beta_{1}$ &       0.05 &      0.073 &      0.045 &      0.095 \\
           & [-0.066,0.165] & [-0.023,0.168] & [-0.32,0.409] & [-0.095,0.285] \\
$\beta_{2}$ &      0.047 &   0.152*** &      0.022 &      0.141 \\
           & [-0.084,0.177] & [0.048,0.257] & [-0.378,0.423] & [-0.059,0.341] \\
$\beta_{3}$ &      0.064 &   0.214*** &      0.014 &      0.134 \\
           & [-0.073,0.201] & [0.099,0.329] & [-0.402,0.43] & [-0.074,0.341] \\
$\beta_{4}$ &      0.12* &   0.244*** &     -0.035 &      0.086 \\
           & [-0.029,0.269] & [0.121,0.368] & [-0.501,0.43] & [-0.137,0.309] \\
$\beta_{5}$ &    0.142** &   0.297*** &     -0.046 &      0.138 \\
           & [-0.013,0.297] & [0.163,0.432] & [-0.508,0.415] & [-0.099,0.374] \\
\hline
      Pre- &      0.029 &     -0.014 &      0.027 &      0.054 \\
     trend & [-0.042,0.099] & [-0.073,0.046] & [-0.203,0.257] & [-0.065,0.173] \\
\hline

\end{tabular}

}
\begin{tablenotes} \item Notes: \footnotesize 
This table shows the dynamic treatment effects relative to the beginning of the treatment. The dynamic effects are estimated with outcome variables observed from R\&D survey, that is, from an unbalanced repeated panel. 
95\% confidence intervals are obtained from the multiplier bootstrap.
\sym{*} \(p<0.10\), \sym{**} \(p<0.05\), \sym{***} \(p<0.01\)
\end{tablenotes}
\end{table}

\end{document}